\documentclass[12pt,twoside]{article}
\usepackage{fix-cm}

\usepackage[left=1.15in,right=1.15in,top=1.22in,bottom=1.22in]{geometry}
\usepackage{setspace}
\usepackage{caption}
\usepackage{graphicx}
\graphicspath{{figures/}}
\usepackage{amsmath, amssymb, amsthm}
\usepackage{booktabs}
\usepackage{threeparttable}
\usepackage{array}
\usepackage{mathtools}
\usepackage{times}

\usepackage{natbib}
\setcitestyle{authoryear,round,semicolon}

\usepackage[dvipsnames]{xcolor}
\usepackage[hyperfootnotes=false]{hyperref}
\definecolor{linkblue}{RGB}{0,65,130}
\hypersetup{
  colorlinks=true,
  linkcolor=linkblue,
  citecolor=linkblue,
  urlcolor=linkblue,
  pdftitle={Learning about Treatment Effects in Panels under Unknown Interference},
  pdfauthor={Shengbin Wei},
  pdfsubject={Identification and inference for panel treatment effects under unknown interference},
  pdfkeywords={panel treatment effects, spillovers, interference, partial identification, exposure mappings, test inversion}
}
\usepackage{enumitem}
\usepackage{dsfont}
\usepackage{fancyhdr}
\providecommand{\mathbbm}[1]{\mathbf{#1}}

\usepackage{titlesec}
\titleformat{\section}
  {\centering\normalfont\large\scshape}
  {\thesection.}{0.75em}{}
\titlespacing*{\section}{0pt}{2.6ex plus 0.8ex minus 0.2ex}{1.4ex plus 0.2ex}
\titleformat{\subsection}
  {\centering\normalfont\normalsize\scshape}
  {\thesubsection.}{0.6em}{}
\titlespacing*{\subsection}{0pt}{2.0ex plus 0.6ex minus 0.2ex}{0.8ex plus 0.2ex}
\titleformat{\subsubsection}
  {\normalfont\normalsize\itshape}
  {\thesubsubsection.}{0.6em}{}
\titlespacing*{\subsubsection}{0pt}{1.6ex plus 0.4ex minus 0.2ex}{0.6ex plus 0.2ex}

\usepackage{titling}
\pretitle{\begin{center}\Large\scshape}
\posttitle{\par\end{center}\vskip 1em}
\preauthor{\begin{center}\Large\scshape}
\postauthor{\par\end{center}\vskip 0.25em}
\predate{\begin{center}\normalsize}
\postdate{\par\end{center}\vskip 0.25em}

\renewenvironment{abstract}
  {\begin{quote}\small}
  {\end{quote}\vspace{0.5em}}

\fancypagestyle{plain}{%
  \fancyhf{}%
  \fancyhead[LE]{\raisebox{-0.07in}[0pt][0pt]{\thepage}}%
  \fancyhead[RO]{\raisebox{-0.07in}[0pt][0pt]{\thepage}}%
}

\newtheorem{theorem}{Theorem}[section]
\newtheorem{proposition}{Proposition}[section]
\newtheorem{lemma}{Lemma}[section]
\newtheorem{corollary}{Corollary}[section]
\newtheorem{assumption}{Assumption}[section]
\theoremstyle{remark}
\newtheorem{remark}{Remark}[section]
\theoremstyle{definition}
\newtheorem{definition}{Definition}[section]
\newtheorem{example}{Example}[section]
\newtheorem{procedure}{Procedure}[section]
\newcommand{\PP}{\mathbb{P}}

\numberwithin{equation}{section}

\title{\MakeUppercase{Learning about Treatment Effects in Panels under Unknown Interference}%
\thanks{I am grateful to David Hughes, Shakeeb Khan, Arthur Lewbel, and Zhijie Xiao for their
mentorship. I also benefited from comments by Harold Chiang, Zequn Jin, Yechan
Park, Andreas Petrou-Zeniou, and Liang Zhong, and from discussions with
participants at NY Camp Econometrics 2026, NEEPC 2026, the IAER Econometrics
Workshop 2026, and CCER SI 26. All remaining errors are mine.}}
\author{Shengbin Wei\thanks{Department of Economics, Boston College, Maloney Hall, 140 Commonwealth
Avenue, Chestnut Hill, MA 02467-3859. Email:
\href{mailto:shengbin.wei@bc.edu}{shengbin.wei@bc.edu}.}}
\date{August 13, 2026\\
\vspace{0.35em}
\textit{Working Paper}}

\begin{document}
\maketitle

\begin{abstract}
\setlength{\emergencystretch}{4em}
\sloppy
When comparison units may also respond to treatment, panel comparisons reflect
both the treatment effect and spillovers. If the interference pattern is unknown,
observed outcomes alone do not separate the two. I characterize what can
nevertheless be learned from panel outcomes under general restrictions,
without requiring an exposure mapping or prior classification of affected
donors. The framework scales validity bounds for every convex donor weight by
its fit before treatment and combines these bounds with prespecified
restrictions tailored to the application. The validity bounds constrain the
treatment effect relative to spillovers, while the additional restrictions
determine its possible values. Together these restrictions yield a sharp
identified set. When the additional restrictions have a finite linear
representation, checking whether a proposed treatment effect is compatible
with the model reduces exactly to asking whether a finite linear system has a
solution. Bootstrap calibration tests this condition. Inverting these tests
uniformly controls, in large samples, the probability of falsely excluding
each compatible value. In an application to the Legal Arizona Workers Act,
the resulting 95 percent inversion sets contain effects of both signs across
all reported specifications, leaving the sign of the treatment effect
unresolved.
\fussy
\end{abstract}

\section{Introduction}

Panel treatment effect methods use untreated units to construct counterfactual
outcomes. Their interpretation changes when the policy also affects the
comparison pool. The Legal Arizona Workers Act (LAWA) provides a useful
example. If Arizona's employment verification law deterred migrants from
Arizona or redirected them to other states, outcomes in comparison states may
also respond. Comparisons between Arizona and those states then reflect both
the treatment effect and spillovers. The same concern arises when policies
redirect activity across jurisdictions, providers, or locations or propagate
through social and economic networks, as in tax, environmental, health, and
place-based settings
\citep{agrawal2015taxgradient,lepissier2021climate,
alexander2023hospitalclosures,fischer2024obstetric,
banerjee2013diffusion,cai2015social}.

Applications often provide substantive restrictions on treatment and spillover
effects even when the spillover pattern remains unknown. Such restrictions
may come from outcome support, aggregate accounting, signs, orderings,
magnitude information, or a structural exposure model. I treat the full
vector of spillovers to comparison units as unknown and impose restrictions
directly on it, without requiring an exposure mapping or prior classification
of affected donors. Recent work formalizes the basic identification problem in
a difference-in-differences setting with two groups: under a parallel-trends
condition for no-policy outcomes and otherwise unknown interference, the
canonical estimand identifies the average total effect on treated units only
relative to the average spillover on controls
\citep{mealli2026difference}. This paper develops the corresponding problem
with many donors. Additional weights are not additional data.
They become informative when the same treatment effect and spillover vector
must satisfy a common validity rule in several comparison directions. Building
on \citet{liu2025synthetic}, I impose this common rule over the full simplex
generated by a prespecified donor pool.

The paper contributes a sharp identification result, an exact finite
representation of the restrictions over all donor weights, and a uniform
candidatewise inference result for the structured systems generated by that
representation. These three pieces separate the economic content of the model
from computation and sampling theory.

For each weight, the rule bounds its latent no-policy gap after treatment by a
prespecified multiple of its typical absolute discrepancy before treatment.
The distinctive
commitments are that the rule holds uniformly over the prespecified donor
simplex and that its baseline allowance has a zero floor when population
discrepancy before treatment vanishes. A condition on the span of factor
movements provides a sufficient condition for the full system. To make this
multiplier easier to interpret, I construct factor placebo indices by
leaving out one period at a time. These placebos put the multiplier on an
observed scale, following the calibration principle of
\citet{hsu2013calibrating}, but they benchmark rather than estimate the latent
threshold after treatment.

The analysis considers one candidate value of the treatment effect at a time.
A candidate is compatible if some unobserved no-policy outcomes and spillovers
satisfy all maintained restrictions. Comparison validity constrains the
treatment effect relative to spillovers, while restrictions tailored to the
application determine which levels of the treatment effect remain possible.
Together, they characterize the sharp identified set and show whether it has
an upper endpoint, a lower endpoint, or both.

Under the baseline finite representation, a candidate is compatible exactly
when a linear system of fixed dimension has a solution. An exact equivalence
replaces the continuum of comparison weights with finitely many inequalities
and shows when interior weights add information beyond individual donors.
Farkas' alternative provides a finite certificate of incompatibility. Building
on the normalized solvability statistic and bootstrap calibration of
\citet{goff2025inference}, I establish uniform validity for this model-generated
family of candidate systems and invert the resulting tests over candidate
values. The analysis exploits their common primitive sampling perturbation and
uses regularity conditions tailored to compatible right-hand sides and
structurally reachable certificate directions. In large samples, the procedure
uniformly controls the probability of falsely excluding each compatible
candidate, while candidates that remain a fixed normalized distance from
compatibility in the population are consistently excluded. These results are
conditional on the donor pool, chosen validity envelope, admissible rule, and
its finite representation.

I apply the method to the LAWA application of \citet{bohn2014lawa}. The
application combines state-level outcomes with outcome support and a gross
spillover budget scaled by population. The main specification bounds total
absolute responses across comparison units by twice the counterfactual
population of Arizona's target group; I also report bounds equal to one and
four times that population. This formulation allows interstate redistribution, changes in
the national stock, and responses outside the comparison pool without
requiring an exact relocation identity. Across the reported validity envelopes and
all three budget calibrations, neither zero nor the decline of 1.50 percentage
points reported by \citet{bohn2014lawa} is rejected. The original magnitude
therefore remains compatible with the data and restrictions, but the analysis
does not determine the sign of the effect after accounting for sampling
uncertainty.

The Monte Carlo design separates identification geometry from sampling
performance. When pre-treatment discrepancies are sign aligned, the
full simplex and donor vertices yield the same population identified set. When
convex aggregation produces near-exact fit, interior weights reduce the
width of that set by 86 percent. Across both geometries, false exclusion at
compatible endpoints shows no evidence of excessive rejection in the simulated
cells, while power at fixed distances rises with sampling precision.

The remainder of the paper proceeds as follows.
\hyperref[sec:setting]{Section~\ref*{sec:setting}} sets up the model,
\hyperref[sec:identification]{Section~\ref*{sec:identification}} develops the
exact finite representation of candidate compatibility and its projection
consequences, and
\hyperref[sec:inference]{Section~\ref*{sec:inference}} constructs compatibility
tests for fixed candidate values and their inversion.
\hyperref[sec:empirical_illustration]{Section~\ref*{sec:empirical_illustration}}
revisits LAWA, and \hyperref[sec:monte_carlo]{Section~\ref*{sec:monte_carlo}}
studies performance in finite samples. The appendix collects proofs,
implementation details, and additional results.

\subsection{Related Literature}\label{sec:related_literature}

The paper is closest to panel methods for treatment effects that allow an
intervention to contaminate comparison units. Some synthetic control
approaches specify a linear, spatial, or exposure structure, or identify which
units are affected, and
jointly recover direct and spillover effects
\citep{cao2019estimation,distefano2024inclusive,grossi2025direct,
sakaguchi2026bayesian}. \citet{melnychuk2024synthetic} compares strategies for
constructing counterfactuals for the treated unit when donors may be contaminated.
Others downweight, screen, or
exclude donors considered likely to be affected
\citep{oriordan2025spillover,fernandezmorales2026bayesian}. I instead treat the
full vector of spillovers to comparison units as unknown. Joint validity
across contaminated comparisons and prespecified restrictions on that vector
partially identify the treatment effect. This route requires neither a
maintained exposure mapping, a pure donor subset, nor point identification of
the spillover profile. \citet{celli2026identifying}
contrast counterfactuals based on controls and forecasts under pervasive
interference; the framework here retains panel comparisons based on controls but
replaces point identification with explicit compatibility restrictions.

The broader interference literature defines direct and spillover effects using
treatment assignment or exposure mappings and derives identification under
experimental, network, or neighborhood structures
\citep{manski2013identification,sobel2006randomized,hudgens2008toward,
aronow2017general,vazquezbare2023spillover,leung2020treatment,
leung2022causal,forastiere2021identification,huber2021framework}. Under
randomized assignment, \citet{savje2021average} define an average treatment
effect that remains meaningful under limited but otherwise unknown
interference. Extensions to
panel data adapt parallel trends restrictions to spatial or neighborhood
interference \citep{butts2021difference,xu2023difference}.
\citet{mealli2026difference} show that, under a parallel-trends condition for
no-policy outcomes and otherwise unknown interference, the canonical
difference-in-differences estimand identifies the average total effect on
treated units net of the average spillover on controls. The present paper starts from
the analogous accounting relation with many donors and derives sharp sets from
validity restrictions indexed by weights and joint restrictions on the spillover
vector. Work on misspecified exposure mappings is complementary: it studies
robustness to the mapping itself, whereas the admissible set here directly restricts the
resulting spillover profile and can incorporate implications of a maintained
exposure model \citep{savje2024causal,schroder2026causal}.

Synthetic control and related panel methods use pre-treatment outcomes to form
counterfactual comparisons
\citep{abadie2003economic,abadie2010synthetic,abadie2015comparative};
\citet{abadie2021using} surveys the design and its extensions. Like synthetic
control and synthetic difference-in-differences
\citep{arkhangelsky2021synthetic}, the framework here uses pre-treatment
outcomes to constrain post-treatment comparisons. Its distinction is that a
sensitivity envelope is imposed over every weight in a prespecified donor
simplex rather than attached to one estimated comparison. In all three
approaches, the connection between pre-treatment fit and post-treatment
validity ultimately rests on maintained restrictions on untreated outcomes.
Interactive factor models provide one common way to extrapolate from outcomes
before treatment to outcomes after treatment
\citep{bai2009panel,gobillon2016regional,xu2017generalized}.
Analyses of imperfect fit characterize its consequences, propose bias
corrections, or develop alternative synthetic control estimators
\citep{ferman2021synthetic,benmichael2021augmented,powell2026imperfect}.
Here the observed discrepancy instead enters directly as the scale of a
maintained validity envelope.

The closest connection is \citet{liu2025synthetic}, who studies the identifying
content of multiple comparison weights. In that paper, each weight that exactly
balances trends before treatment indexes a
possible counterfactual, and the identified set ranges over those alternatives.
Here the restrictions indexed by weights are intersected: every convex weight must
satisfy its own fit-scaled restriction for the same treatment effect and
spillover vector. Expanding the comparison domain therefore adds restrictions
to one joint feasible set. \citet{callaway2025beyond} likewise use sets of
close comparison groups, but target robustness across alternative panel
identification strategies. The fit scale is related to the relative-magnitude
sensitivity analysis of \citet{rambachan2023more}, which bounds post-treatment
departures using pre-treatment behavior. \citet{ferguson2021assessing}
calibrate misspecification in synthetic control using observable placebo
prediction errors. The calibration here instead benchmarks a common validity
envelope over all donor weights and combines it with restrictions on
spillovers to comparison units. It also adapts the calibration principle of
\citet{hsu2013calibrating}. Here, indices constructed by leaving out one period
at a time put specifications with fixed envelope values on an observed scale
without treating pre-treatment movements as estimates of the missing
post-treatment gap.

Finally, the paper relates to inference under partial identification and to
inference for optimization and problems involving linear systems. General
methods cover identified sets, moment inequalities, and low-dimensional projections
\citep{chernozhukov2007estimation,rosen2008confidence,
andrews2010inference,romano2010inference,romano2014practical,
chernozhukov2013intersection,bugni2017inference,kaido2019confidence,
gafarov2025simple}. Related work studies directionally differentiable maps,
linear moment models, and linear systems with known or estimated coefficients
\citep{fang2019inference,cho2023simple,andrews2023linear,fang2023large,
goff2025inference,bai2026linear}.
\citet{cox2025testing} develop tests for inequalities that are linear in
nuisance parameters, including applications in which target parameters are
bounded by linear programs.
Goff and Mbakop develop general tests for whether a linear system has a
solution. The inference result here is a structured specialization of their
procedure. I show that candidate compatibility in this model has an exact
finite representation with estimated coefficients, adapt their calibration to
the resulting candidate-indexed family, and establish validity under
model-tailored conditions on a uniform feasibility error bound and certificate
variance
before inverting the tests over candidate values.

\section{Setup and Notation}\label{sec:setting}
\noindent
I focus on settings where a binary treatment is implemented at an aggregate
level, such as states or countries. Denote each aggregate unit by
\(k\in\{1,\ldots,K\}\), where \(K\) is the total number of units, and let
\(k=1\) be the treated unit. Pre-treatment periods are indexed by
\(t=1,\ldots,T_0\), where \(T_0\) is the last pre-treatment period; \(T\)
labels one scalar post-treatment target rather than an additional date in this
sequence. In the LAWA application,
unit \(1\) is Arizona; Section~\ref{sec:empirical_illustration} gives the
corresponding sample, timing, and donor pool details.
Assume throughout that \(K\ge2\) and \(T_0\ge2\).
Throughout, \emph{prespecified} means fixed as part of the maintained
specification before estimating the corresponding candidate system and
evaluating its inversion; it does not denote formal preregistration.
I use one treated aggregate unit and one scalar post-treatment target for
notational simplicity. The accounting relation and finite-input construction
also apply when this scalar target is a prespecified linear aggregate of several
treated units or post-treatment outcomes. A simultaneous vector of
treatment-effect targets would additionally require definitions of the corresponding
spillover objects, admissible restrictions, and covariance input and is not
developed in the present notation.\footnote{The treated unit, comparison pool,
and treatment timing are taken as given; see
\citet{imbens2023identification} for an alternative framework in which aspects
of the design are viewed as random.}

For identification, each aggregate outcome is a population mean. Write
\(\mu_t^k\) for the observed mean of unit \(k\) in period \(t\). Let
\(\mathbf d^0=(0,\ldots,0)\) denote the baseline assignment without the focal
intervention and let
\(\mathbf d^1=(1,0,\ldots,0)\) denote the realized assignment in which unit
\(1\) receives the policy and the comparison units remain directly untreated.
Write \(\mu_t^k(\mathbf d)\) for the potential mean of unit \(k\) under
assignment \(\mathbf d\). In the LAWA example,
\(\mu_t^k(\mathbf d^0)\) is the population share in state \(k\) that would prevail
absent LAWA, while \(\mu_T^k(\mathbf d^1)\) is the share under the realized
LAWA assignment. The contrast changes only the focal intervention; background
policies remain part of the potential-outcome environment. Accordingly,
``no-policy'' below means without the focal intervention. I impose consistency
and no anticipation with respect to that intervention: for every \(k\),
\[
\mu_t^k=\mu_t^k(\mathbf d^0)=\mu_t^k(\mathbf d^1)\quad (t\le T_0),
\qquad
\mu_T^k=\mu_T^k(\mathbf d^1).
\]

The treatment and spillover effects compare these two complete assignments.
The target is the treatment effect on the treated aggregate unit,
\begin{equation}\label{eq:treatment}
\tau := \mu_T^1(\mathbf d^1)-\mu_T^1(\mathbf d^0),
\end{equation}
which changes unit \(1\)'s assignment while holding every comparison unit's
assignment at zero. The effect of the same assignment contrast on a
comparison unit is its spillover effect,
\begin{equation}\label{eq:spillover}
s_k := \mu_T^k(\mathbf d^1)-\mu_T^k(\mathbf d^0),
\qquad k=2,\ldots,K.
\end{equation}
Write \(s=(s_2,\ldots,s_K)^\top\). The benchmark \(s=0\) corresponds to zero
spillovers to comparison units. The analysis below treats \(s\) as unknown and
asks what contaminated comparisons and restrictions on \(s\) reveal about
\(\tau\).

The potential means are reduced-form equilibrium outcomes under each complete
assignment. Thus \(\tau\) and \(s\) can incorporate geographic propagation
through relocation, displacement, commuting, disease transmission, or regional
market adjustment, as well as network propagation through social, trade, or
input-output links \citep[e.g.,][]{miguel2004worms,greenstone2010agglomeration,
banerjee2013diffusion,cai2015social}. These channels may be nonlinear or
reciprocal; \(s_k\) records their net effect on comparison unit \(k\).
Throughout the paper, the treatment effect refers to the assignment contrast in
\eqref{eq:treatment}.

Identification of \(\tau\) therefore amounts to learning enough about
the missing post-treatment no-policy means
\((\mu_T^k(\mathbf d^0))_{k=1}^K\): the treated counterfactual
\(\mu_T^1(\mathbf d^0)\) determines the treatment effect, while the comparison
counterfactuals \((\mu_T^k(\mathbf d^0))_{k=2}^K\) determine the spillover
vector.

The aggregate means may be observed directly or estimated from micro data. In
the latter case, \(\mu_t^k\) denotes the population mean for individuals
attached to unit \(k\) in period \(t\), estimated by the corresponding
unit-period average from a panel or repeated cross section, with survey weights
when appropriate. Identification treats each \(\mu_t^k(\mathbf d)\) as a
population mean and restricts the
missing vector \((\mu_T^k(\mathbf d^0))_{k=1}^K\). Inference addresses sampling
uncertainty when the finite collection of aggregate inputs is estimated.

\noindent\textbf{Notation.} Throughout the paper,
\(\Delta^{K-2}:=\{w\in\mathbb R_+^{K-1}:\sum_{k=2}^K w_k=1\}\) denotes the
simplex of convex comparison weights over units \(k=2,\ldots,K\). The donor
pool is fixed before estimation using eligibility rules specific to the application,
and the baseline comparison domain is the full simplex.
For period changes, write
\(\Delta\mu_t^k(\mathbf d^0):=\mu_t^k(\mathbf d^0)-
\mu_{t-1}^k(\mathbf d^0)\) for \(t=2,\ldots,T_0\), and
\(\Delta\mu_T^k(\mathbf d):=\mu_T^k(\mathbf d)-
\mu_{T_0}^k(\mathbf d^0)\) for the post-treatment target contrast,
\(\mathbf d\in\{\mathbf d^0,\mathbf d^1\}\).
For a vector \(z\), \(\|z\|_1\), \(\|z\|_\infty\), and \(\|z\|\) denote the
\(\ell_1\), sup, and Euclidean norms. The vector \([z]_+\) has \(j\)th
coordinate \(\max\{z_j,0\}\). For a nonempty set \(\mathcal C\),
\(\operatorname{dist}\{z,\mathcal C\}:=\inf_{y\in\mathcal C}\|z-y\|\);
when \(\mathcal C\) is a polyhedron, \(\operatorname{ext}(\mathcal C)\)
denotes its set of extreme points. The symbol \(\mathbbm 1_m\) denotes the
\(m\)-vector of ones, with the subscript omitted when its dimension is clear.
For a matrix, \(\|\cdot\|\) denotes the Euclidean norm after vectorization.
Vector inequalities are interpreted componentwise.

\section{Identification}\label{sec:identification}
\noindent
The identifying mechanism is easiest to see with one treated unit and two
comparison units. Write \(s_2\) and \(s_3\) for the comparison unit spillovers.
As a benchmark with exact validity, suppose the untreated post-treatment change
of the treated unit equals the same weighted change among the comparisons. For
a weight \((\vartheta,1-\vartheta)\) with \(\vartheta\in[0,1]\), the observed
treated minus comparison change then satisfies
\begin{equation}\label{eq:exact_valid_comparison}
\Delta\mu_T^1(\mathbf d^1)
-\vartheta\Delta\mu_T^2(\mathbf d^1)
-(1-\vartheta)\Delta\mu_T^3(\mathbf d^1)
=\tau-\vartheta s_2-(1-\vartheta)s_3 .
\end{equation}
One equation fixes only one linear combination of
\((\tau,s_2,s_3)\). It cannot separate the treatment effect from the two
spillovers. A second valid weight
\((\vartheta',1-\vartheta')\), with \(\vartheta'\in[0,1]\) and
\(\vartheta'\ne\vartheta\), places different weights on the same spillovers
while leaving \((\tau,s_2,s_3)\) unchanged:
\begin{equation}\label{eq:exact_valid_system}
\begin{aligned}
\Delta\mu_T^1(\mathbf d^1)
-\vartheta\Delta\mu_T^2(\mathbf d^1)
-(1-\vartheta)\Delta\mu_T^3(\mathbf d^1)
&=\tau-\vartheta s_2-(1-\vartheta)s_3,\\
\Delta\mu_T^1(\mathbf d^1)
-\vartheta'\Delta\mu_T^2(\mathbf d^1)
-(1-\vartheta')\Delta\mu_T^3(\mathbf d^1)
&=\tau-\vartheta' s_2-(1-\vartheta')s_3 .
\end{aligned}
\end{equation}
The second equation is not new data; its identifying content comes from
maintaining exact validity for both weights. The equations are informative
jointly because they contain the same treatment effect and spillover vector.
Each additional linearly independent comparison further restricts differences
among spillovers. These comparisons cannot determine a common shift. For any
\(\delta\in\mathbb R\), replacing \((\tau,s)\) by
\((\tau+\delta,s+\delta\mathbbm 1_{K-1})\) leaves every expression
\(\tau-w^\top s\) unchanged because every convex weight sums to one. Appendix
Lemma~\ref{lem:exact_comparison_equivalence} formalizes this benchmark.

Exact validity clarifies the role of joint comparison restrictions, but it is
not the maintained empirical model. Weights that achieve exact validity may be
unavailable. For a generic \(w\in\Delta^{K-2}\), the population accounting
identity is
\begin{equation}\label{eq:contaminated_accounting}
\begin{aligned}
\Delta\mu_T^1(\mathbf d^1)-\sum_{k=2}^K w_k\Delta\mu_T^k(\mathbf d^1)
&=
\underbrace{
\Delta\mu_T^1(\mathbf d^0)-\sum_{k=2}^K w_k\Delta\mu_T^k(\mathbf d^0)
}_{\text{latent no-policy gap for }w}
\ +\ \tau-w^\top s .
\end{aligned}
\end{equation}
The first term on the right-hand side is the latent no-policy gap: it is
unobserved because it uses post-treatment means under \(\mathbf d^0\).
The framework first uses each weight's pre-treatment fit to bound this gap. It
then uses restrictions tailored to the application to determine which overall
levels of the treatment effect and spillovers remain possible.

\subsection{Comparison Validity and Relative Effects}
\label{subsec:comparison_validity_identification}

The first restriction converts the accounting identities into a system of
moment inequalities. Each convex weight in the donor simplex receives an
allowance determined by its own pre-treatment fit.

\begin{assumption}[Full-simplex fit-scaled comparison validity]
\label{ass:comparison_validity}
For a fixed envelope constant \(0\le L<\infty\), the untreated potential means
satisfy, for every \(w\in\Delta^{K-2}\),
\[
\left|
\Delta\mu_T^1(\mathbf d^0)
-\sum_{k=2}^K w_k\Delta\mu_T^k(\mathbf d^0)
\right|
\le
\frac{L}{T_0-1}
\sum_{t=2}^{T_0}
\left|
\Delta\mu_t^1(\mathbf d^0)
-\sum_{k=2}^K w_k\Delta\mu_t^k(\mathbf d^0)
\right|
.
\]
\end{assumption}

Once donor eligibility is fixed, the baseline applies the rule to every
nonnegative weight vector that sums to one. Every weight must satisfy its own
restriction for the same treatment effect and spillover vector, while a poorly
fitting weight receives a wider allowance than a well-fitting one. Like
synthetic control and synthetic difference-in-differences
\citep{abadie2010synthetic,arkhangelsky2021synthetic}, the assumption uses
pre-treatment behavior to constrain a post-treatment counterfactual through
maintained restrictions on untreated outcomes. Its distinctive commitment is
uniformity over the prespecified donor simplex. The baseline envelope is also
homogeneous and hence has a zero floor at exact population fit.
Section~\ref{subsec:comparison_validity_interpretation} shows that one
restriction on the factor span can imply validity for all weights and discusses
the implication of the zero floor.

For \(w\in\Delta^{K-2}\), write the observed post-treatment contrast and the
pre-treatment comparison gaps as
\[
\begin{aligned}
C_T^{\mathrm{obs}}(w)
&:=
\Delta\mu_T^1(\mathbf d^1)
-\sum_{k=2}^K w_k\Delta\mu_T^k(\mathbf d^1),\\
C_t(w)
&:=
\Delta\mu_t^1(\mathbf d^0)
-\sum_{k=2}^K w_k\Delta\mu_t^k(\mathbf d^0),
\qquad t=2,\ldots,T_0.
\end{aligned}
\]
Let \(\Pi_0\) denote the true finite-dimensional population input. It collects
the population means entering these contrasts, together with any other input
to the rule defining the admissible set that is estimated from the sampling data.
Prespecified design constants, such as
externally supplied exposure weights or weights based on reference size, are
held fixed and
absorbed into the rule itself rather than included in \(\Pi_0\).
For a treatment effect and spillover vector, set
\begin{equation}\label{eq:relative_effect_vector}
x=(x_2,\ldots,x_K)^\top:=\tau\mathbbm 1_{K-1}-s,
\qquad x_k=\tau-s_k .
\end{equation}
Assumption~\ref{ass:comparison_validity} restricts this vector of relative
effects to
\begin{equation}\label{eq:relative_effect_set}
\mathcal X_L(\Pi_0)
:=
\left\{
x\in\mathbb R^{K-1}:
\left|C_T^{\mathrm{obs}}(w)-w^\top x\right|
\le\frac{L}{T_0-1}\sum_{t=2}^{T_0}|C_t(w)|
\quad\text{for every }w\in\Delta^{K-2}
\right\}.
\end{equation}

The opening benchmark showed why multiple weight directions can be jointly
informative. Validity over the full simplex can also be more informative than
validity imposed only at the donor vertices. When donor pre-treatment gaps offset one
another, an interior weight can have a smaller fit allowance than the
corresponding convex combination of vertex allowances and therefore add a
restriction on \(x\). Under sign alignment this gain disappears. Appendix
Proposition~\ref{prop:comparison_multiplicity} gives the formal result, and
Section~\ref{sec:monte_carlo} illustrates its quantitative importance.

The defining inequalities in \eqref{eq:relative_effect_set} have a direct
geometric interpretation. Each weight defines a slab around the hyperplane for
exact validity, \(C_T^{\mathrm{obs}}(w)=w^\top x\), with tolerance determined by that
weight's pre-treatment fit. Their intersection is
\(\mathcal X_L(\Pi_0)\) in coordinates for the relative effects and,
equivalently, the region allowed by comparison validity in \((\tau,s)\)-space,
as illustrated
in Figure~\ref{fig:comparison_validity_geometry}.

\begin{figure}[!htbp]
\centering
\includegraphics[width=\textwidth]{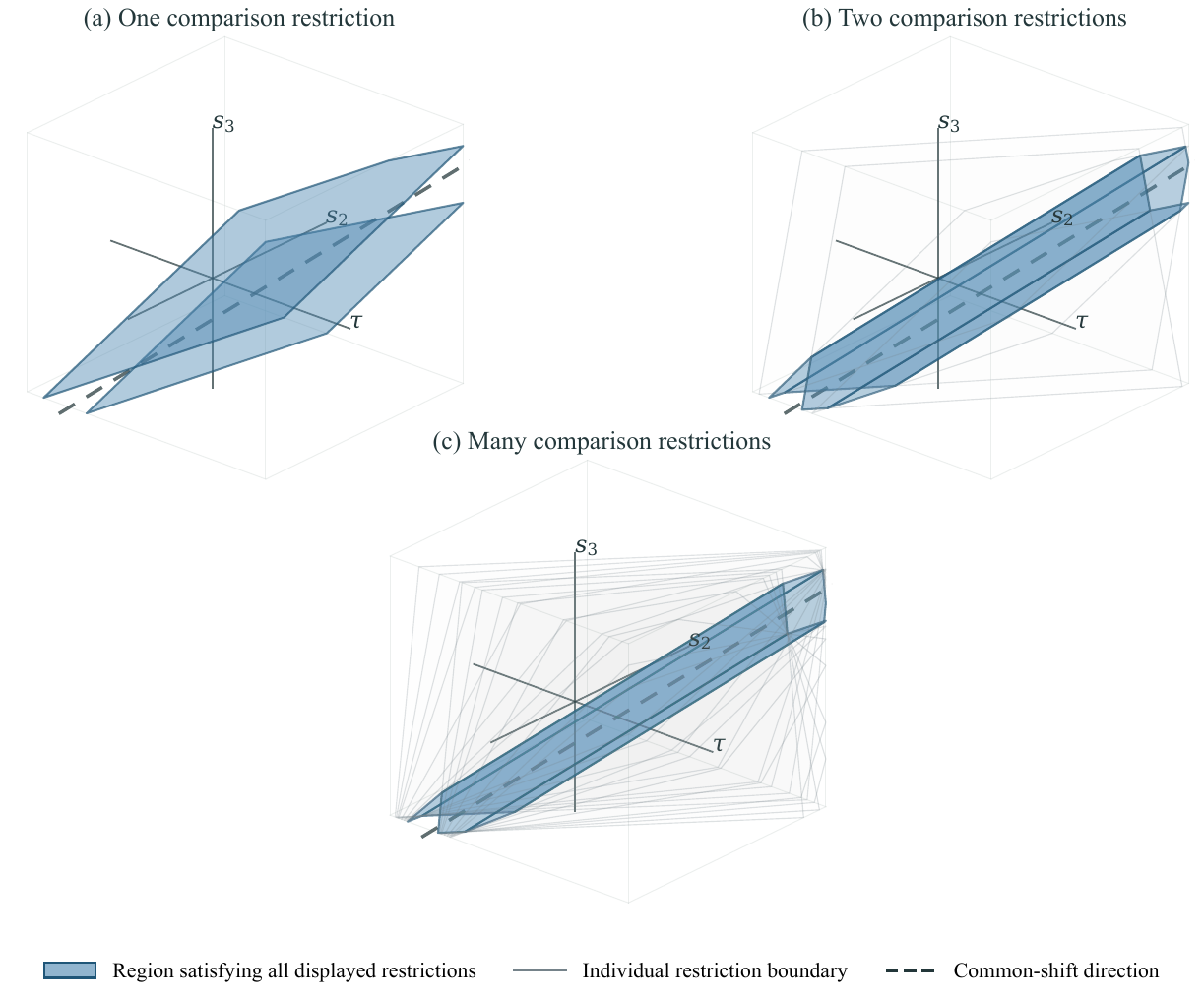}
\caption{Comparison validity as inequalities scaled by pre-treatment fit. Each
comparison weight replaces exact equality with a region whose width depends on
that weight's pre-treatment fit. The panels successively impose one, two, and
many comparison restrictions. In every panel, blue denotes the intersection
that satisfies all displayed restrictions, gray marks the boundaries of the
individual restrictions, and the dashed line indicates the common-shift
direction. The open ends show that adding the same constant to the treatment
effect and all spillovers remains unrestricted.}
\label{fig:comparison_validity_geometry}
\end{figure}

\subsection{From Relative Effects to Treatment Effect Levels}
\label{subsec:sharp_projection}

Comparison validity constrains the treatment effect only relative to
spillovers. If \((\tau,s)\) satisfies all comparison restrictions, then, for
any \(\delta\in\mathbb R\), adding \(\delta\) to \(\tau\) and every component
of \(s\) leaves those restrictions unchanged. This common-shift ambiguity means
that the comparison model alone cannot determine the level of the treatment
effect.

Restricting the common shift does not require an exposure mapping or
unit-specific knowledge of the spillover pattern. To obtain level information,
it is enough to impose prespecified substantive restrictions that limit
movement in this direction. These restrictions may remain agnostic about which
comparison units are affected and how spillovers vary across them. Let \(\Pi\)
denote a generic value of the finite population input, and use
\(\Omega(\Pi)\) to collect these additional restrictions on \((\tau,s)\).

\begin{assumption}[Prespecified admissible rule]
\label{ass:admissible_rule}
The correspondence \(\Pi\mapsto\Omega(\Pi)\subseteq\mathbb R^K\) is fixed
before estimation. Membership in \(\Omega(\Pi)\) exactly summarizes the
additional maintained restrictions on \((\tau,s)\) that are not already part
of comparison validity. At the population input \(\Pi_0\), the set
\(\Omega(\Pi_0)\) is nonempty and closed and contains the population treatment
effect and spillover vector.
\end{assumption}

Formally, let \(\tau^c\in\mathbb R\) be a candidate value of \(\tau\), where
the superscript \(c\) labels the candidate. We call it \emph{compatible},
written \(\mathsf{Comp}(\Pi_0,\tau^c)=1\), when the missing post-treatment
no-policy means and spillovers can be completed under comparison validity and
the prespecified admissible rule with \(\tau=\tau^c\).

\begin{proposition}[Treatment effect projection]
\label{prop:treatment_effect_projection}
Suppose Assumptions~\ref{ass:comparison_validity} and
\ref{ass:admissible_rule} hold. Then, for every \(\tau^c\in\mathbb R\),
\begin{equation}
\mathsf{Comp}(\Pi_0,\tau^c)=1
\quad\Longleftrightarrow\quad
\exists x\in\mathcal X_L(\Pi_0):
\ (\tau^c,\tau^c\mathbbm 1_{K-1}-x)\in\Omega(\Pi_0).
\label{eq:sharp_completion_general}
\end{equation}
Consequently, the identified set for the treatment effect is
\begin{equation}\label{eq:sharp_anchor_union}
\begin{aligned}
\Theta_\tau(\Pi_0;L)
&=\{\tau^c\in\mathbb R:\mathsf{Comp}(\Pi_0,\tau^c)=1\}\\
&=
\bigcup_{x\in\mathcal X_L(\Pi_0)}
\left\{
\tau^c\in\mathbb R:
(\tau^c,\tau^c\mathbbm 1_{K-1}-x)\in\Omega(\Pi_0)
\right\}.
\end{aligned}
\end{equation}
Every value in the displayed set is attained by a completion of the
maintained model of potential means.
\end{proposition}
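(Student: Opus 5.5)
The plan is to prove the equivalence \eqref{eq:sharp_completion_general} by writing out what a completion is. Then the union representation \eqref{eq:sharp_anchor_union} and the attainment claim follow by collecting compatible candidates. A completion at \(\tau^c\) is a choice of the missing post-treatment no-policy means \((\mu_T^k(\mathbf d^0))_{k=1}^K\). These determine \(\tau=\mu_T^1(\mathbf d^1)-\mu_T^1(\mathbf d^0)\) and \(s_k=\mu_T^k(\mathbf d^1)-\mu_T^k(\mathbf d^0)\) through \eqref{eq:treatment}--\eqref{eq:spillover}. The observed quantities \(\mu_T^k=\mu_T^k(\mathbf d^1)\) and the pre-treatment means are held fixed by consistency and no anticipation. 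The completion must satisfy \(\tau=\tau^c\), Assumption~\ref{ass:comparison_validity}, and \((\tau,s)\in\Omega(\Pi_0)\). The key observation is that the map from missing means to \((\tau,s)\) is a bijection of \(\mathbb R^K\), since \(\mu_T^k(\mathbf d^0)=\mu_T^k-(\cdot)\). So compatibility amounts to the existence of \((\tau^c,s)\) satisfying both sets of restrictions, with the missing means then recovered uniquely.

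For the forward direction, I take a completion and set \(x=\tau^c\mathbbm 1_{K-1}-s\). I then substitute the accounting identity \eqref{eq:contaminated_accounting}, namely latent gap \(=C_T^{\mathrm{obs}}(w)-(\tau-w^\top s)\). Since \(\sum_k w_k=1\), we have \(\tau-w^\top s=w^\top x\), so the latent gap equals \(C_T^{\mathrm{obs}}(w)-w^\top x\). The right-hand side of Assumption~\ref{ass:comparison_validity} is \(\frac{L}{T_0-1}\sum_t|C_t(w)|\), which involves only pre-treatment no-policy means and is therefore observed and part of \(\Pi_0\). Hence validity for every \(w\) is exactly \(x\in\mathcal X_L(\Pi_0)\), and \((\tau^c,\tau^c\mathbbm 1-x)=(\tau,s)\in\Omega(\Pi_0)\).

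For the converse, I start from \(x\in\mathcal X_L(\Pi_0)\) with \((\tau^c,\tau^c\mathbbm 1-x)\in\Omega(\Pi_0)\). I set \(s=\tau^c\mathbbm 1-x\) and define \(\mu_T^1(\mathbf d^0):=\mu_T^1-\tau^c\) and \(\mu_T^k(\mathbf d^0):=\mu_T^k-s_k\). I keep the pre-treatment potential means equal to the observed ones under both assignments, and \(\mu_T^k(\mathbf d^1):=\mu_T^k\). By construction consistency and no anticipation hold, \eqref{eq:treatment} gives \(\tau^c\), and \eqref{eq:spillover} gives \(s\). The same algebra run backwards shows the validity inequality holds for every \(w\in\Delta^{K-2}\), and the admissible rule holds by hypothesis. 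This also yields the final sentence of the proposition: every value in the set is attained by an explicit completion.

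The identified set is by definition the set of compatible \(\tau^c\), because Assumptions~\ref{ass:comparison_validity}--\ref{ass:admissible_rule} are the full maintained model and the truth is one completion, so sharpness is immediate. The union formula is a rewriting of the existential quantifier over \(x\). The main point needing care, rather than a real obstacle, is to state precisely that the model restricts nothing beyond these two assumptions plus consistency and no anticipation. This matters because the converse construction must not violate some implicit restriction, such as the support constraints, which are meant to be encoded in \(\Omega\). I would note that \(\Omega(\Pi_0)\) "exactly summarizes" the additional restrictions per Assumption~\ref{ass:admissible_rule}, so nothing else needs checking.
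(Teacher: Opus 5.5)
Your proposal is correct and follows essentially the same route as the paper. The forward direction sets $x=\tau^c\mathbbm 1_{K-1}-s$ and reads comparison validity off the accounting identity. The converse builds the missing no-policy means as $\mu_T^k(\mathbf d^1)-s_k$, which is what the paper's first-difference construction from $\mu_{T_0}^k(\mathbf d^0)$ reduces to, and it retains the observed values under $\mathbf d^1$. The only detail the paper adds is the remark that potential means under assignments other than $\mathbf d^0$ and $\mathbf d^1$ can be filled in arbitrarily, since they do not enter the model.
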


\begin{proof}
See \hyperref[proof:treatment_effect_projection]{Appendix~\ref*{app:candidate_system_representation_proofs}}.
\end{proof}

The result separates the roles of the two parts of the model.
\(\mathcal X_L(\Pi_0)\) contains the vectors of relative effects allowed by the
comparison restrictions, while \(\Omega(\Pi_0)\) determines which treatment
effect levels can accompany each vector. If \(\Omega(\Pi_0)\) leaves the
common-shift direction unrestricted, the treatment effect remains unbounded in
that direction. Ruling out movement in one or both directions produces one-
or two-sided bounds.

Because every value in the projection is attained,
\(\Theta_\tau(\Pi_0;L)\) is sharp relative to the maintained model in
Assumptions~\ref{ass:comparison_validity} and \ref{ass:admissible_rule}. In
particular, \(\Omega(\Pi_0)\) is the complete collection of additional
restrictions on \((\tau,s)\) beyond comparison validity. When the dependence
on a data-generating process (DGP) \(P\) is explicit, write \(\Pi(P)\) for
its population input and
\(\Theta_\tau(P;L):=\Theta_\tau(\Pi(P);L)\).

\subsection{Finite Representation and Treatment Effect Bounds}
\label{subsec:candidate_solvability}

Proposition~\ref{prop:treatment_effect_projection} characterizes candidate
compatibility in terms of \(\mathcal X_L(\Pi_0)\) and \(\Omega(\Pi_0)\), but
that characterization is not yet a finite computational problem:
\(\mathcal X_L(\Pi_0)\) contains a continuum of restrictions, one for each
weight,
while \(\Omega(\Pi_0)\) has so far been left abstract. The next lemma resolves
the first issue. It replaces the two-sided restriction for every \(w\) in the
donor simplex exactly by \(2(K-1)\) linear donor inequalities and box
constraints on two \((T_0-1)\)-dimensional auxiliary vectors.
For \(k=2,\ldots,K\), let \(e_k\in\Delta^{K-2}\) denote the simplex vertex
placing all weight on donor \(k\); hence \(C_t(e_k)\) and
\(C_T^{\mathrm{obs}}(e_k)\) are the corresponding contrasts for individual donors.

\begin{lemma}[Exact finite representation of validity over the full simplex]
\label{lem:finite_comparison_bridge}
For \(L\ge0\), \(x\in\mathcal X_L(\Pi_0)\) if and only if there exist
\[
v^\pm=(v_2^\pm,\ldots,v_{T_0}^\pm)^\top
\in
\left[-\frac{L}{T_0-1},\frac{L}{T_0-1}\right]^{T_0-1}
\]
such that
\begin{equation}\label{eq:finite_comparison_bridge}
\sum_{t=2}^{T_0}v_t^-C_t(e_k)
\le C_T^{\mathrm{obs}}(e_k)-x_k
\le \sum_{t=2}^{T_0}v_t^+C_t(e_k),
\qquad k=2,\ldots,K.
\end{equation}
\end{lemma}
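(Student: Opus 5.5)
The plan is to exploit two facts. First, every quantity in \eqref{eq:relative_effect_set} is linear in \(w\) over the simplex. Second, the \(\ell_1\) norm has a dual representation as a maximum over a box. Once both are in place, a minimax theorem exchanges the quantifier over \(w\) with the existence of a dual vector.

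\emph{Linearity.} Because \(\sum_{k}w_k=1\), we have \(C_t(w)=\sum_{k=2}^K w_kC_t(e_k)\) for each \(t\), and likewise
\[
C_T^{\mathrm{obs}}(w)-w^\top x=\sum_{k=2}^K w_k\bigl(C_T^{\mathrm{obs}}(e_k)-x_k\bigr).
\]
Set \(\lambda:=L/(T_0-1)\) and \(a_k:=C_T^{\mathrm{obs}}(e_k)-x_k\). Let \(c_k\in\mathbb R^{T_0-1}\) have entries \(C_t(e_k)\), and let \(\mathbf C\) be the matrix with columns \(c_k\). The two-sided restriction for weight \(w\) then splits into an upper inequality \(w^\top a\le\lambda\|\mathbf Cw\|_1\) and a lower inequality \(-w^\top a\le\lambda\|\mathbf Cw\|_1\).

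\emph{Dual norm.} For every \(z\), \(\lambda\|z\|_1=\max_{v\in[-\lambda,\lambda]^{T_0-1}}v^\top z\). The upper inequality, required for all \(w\in\Delta^{K-2}\), is therefore equivalent to
\[
\min_{w\in\Delta^{K-2}}\ \max_{v\in[-\lambda,\lambda]^{T_0-1}}\bigl(v^\top\mathbf Cw-a^\top w\bigr)\ \ge 0 .
\]

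\emph{Sufficiency (easy direction).} Suppose \(v^+\) lies in the box and satisfies \(a_k\le v^{+\top}c_k\) for every \(k\). Multiplying by \(w_k\ge0\) and summing gives
\[
w^\top a\le v^{+\top}\mathbf Cw\le\lambda\|\mathbf Cw\|_1 .
\]
The argument for \(v^-\) is symmetric: from \(v^{-\top}c_k\le a_k\) we get \(-w^\top a\le (-v^-)^\top\mathbf Cw\le\lambda\|\mathbf Cw\|_1\), using that \(-v^-\) also lies in the box.

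\emph{Necessity (main step).} The payoff \(v^\top\mathbf Cw-a^\top w\) is bilinear in \((v,w)\), and both the box and the simplex are nonempty, compact and convex. Von Neumann's minimax theorem (or Sion's) therefore lets us swap min and max, giving
\[
\max_{v}\ \min_{w\in\Delta^{K-2}}\bigl(v^\top\mathbf Cw-a^\top w\bigr)\ge 0 .
\]
The maximum is attained, by continuity on a compact set, at some \(v^+\). For fixed \(v\), a linear function on the simplex is minimized at a vertex, so \(\min_k(v^{+\top}c_k-a_k)\ge0\). This is exactly the right inequality in \eqref{eq:finite_comparison_bridge}.

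For the lower inequality, apply the same argument to \(-a\). It yields some \(u\) in the box with \(u^\top c_k\ge -a_k\) for all \(k\). Setting \(v^-:=-u\), which also lies in the box since the box is symmetric, gives \(v^{-\top}c_k\le a_k\), the left inequality.

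\emph{Edge cases and the main obstacle.} When \(L=0\) the box is \(\{0\}\) and the argument specializes correctly, forcing \(a_k=0\) for every \(k\). The step I expect to need the most care is the minimax exchange. It must be stated so that its hypotheses (compact convex sets, bilinear payoff) visibly hold. The argument must also make explicit that the upper and lower sides are handled by separate dual vectors \(v^+\) and \(v^-\), which is why two auxiliary vectors appear in the statement rather than one.
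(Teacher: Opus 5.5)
Your proof is correct and takes essentially the same route as the paper. Both use the support-function (dual-norm) identity for the \(\ell_1\) allowance, a Sion/von Neumann minimax exchange over the simplex and the box, attainment of the inner linear problem at a donor vertex, and a separate application to the opposite sign, with the symmetric box yielding \(v^-\). Your direct sufficiency argument and the step \(v^-=-u\) only spell out what the paper leaves implicit in its chain of equivalences.
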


\begin{proof}
See \hyperref[proof:finite_comparison_bridge]{Appendix~\ref*{app:candidate_system_representation_proofs}}.
\end{proof}

The reduction follows from the identity for the support function
\[
\frac{L}{T_0-1}\sum_{t=2}^{T_0}|C_t(w)|
=
\max_{\substack{v\in\mathbb R^{T_0-1}\\
\lVert v\rVert_\infty\le L/(T_0-1)}}
\sum_{t=2}^{T_0}v_tC_t(w).
\]
This identity replaces the fit allowance involving absolute values by a linear
expression in bounded auxiliary coordinates. Because the donor simplex and
the coefficient box are compact and convex and the criterion is bilinear,
Sion's minimax theorem permits the order of optimization to be interchanged.
This interchange yields one auxiliary vector that certifies the restriction
simultaneously for all donor weights. Conditional on that vector, both the
residual after treatment and the lifted allowance are linear in \(w\), so their
maximum over the simplex is attained at a donor vertex \(e_k\). The two signs
of the original restriction involving absolute values require the separate
certificates \(v^+\) and \(v^-\). These vectors are not economic parameters: together with
the finite donor rows in \eqref{eq:finite_comparison_bridge}, they certify all
restrictions for interior weights simultaneously.

Lemma~\ref{lem:finite_comparison_bridge} makes the comparison validity
component finite, but the admissible set
\(\Omega(\Pi)\) remains abstract under
Assumption~\ref{ass:admissible_rule}, which requires neither polyhedrality nor
a particular representation. To turn the entire compatibility problem into a
finite linear system, impose the following additional structure.

\begin{assumption}[Finite linear representation of admissible restrictions]
\label{ass:finite_admissible_representation}
The admissible rule in Assumption~\ref{ass:admissible_rule} has the
representation
\[
\Omega(\Pi)=
\left\{(\tau,s):\exists u\ \text{such that}\quad
A_\tau(\Pi)\tau+A_s(\Pi)s+A_u u\le a(\Pi)
\right\},
\]
where \(A_\tau(\Pi)\), \(A_s(\Pi)\), and \(a(\Pi)\) are conformable affine
functions of \(\Pi\), \(A_u\) is fixed, and \(g\) is a fixed vector satisfying
\begin{equation}
A_\tau(\Pi)+A_s(\Pi)\mathbbm 1_{K-1}=g
\quad\text{for every admissible }\Pi.
\label{eq:fixed_common_shift_loading}
\end{equation}
The auxiliary vector \(u\) has fixed dimension.
An exact equality is represented by two prespecified opposite inequality
rows. All dimensions, coordinates, and units are fixed before estimation.
\end{assumption}

When no auxiliary lift is needed, omit \(u\) and use the direct half-space
representation. The vector \(u\) otherwise accommodates linear lifts, such as
the auxiliary coordinates used to represent restrictions involving absolute
values.

Assumption~\ref{ass:finite_admissible_representation} does two things needed
for computation. Its finite inequality representation turns membership in
\(\Omega(\Pi)\) into finitely many rows. Its common-shift condition ensures
that the candidate \(\tau^c\) enters those rows through the fixed loading
\(g\). Indeed, substituting \(s=\tau^c\mathbbm 1_{K-1}-x\) gives
\begin{equation}
-A_s(\Pi)x+A_u u
\le a(\Pi)-g\tau^c,
\label{eq:common_shift_candidate_rows}
\end{equation}
so the coefficients on the relative effects may depend on estimated population
inputs, but the coefficient on \(\tau^c\) does not. Consequently, estimation
error enters the system in the same way for every candidate value.\footnote{This
structure permits, for example, estimated normalized
exposure weights whose row sums are fixed.} The projection result itself
depends only on the set \(\Omega(\Pi_0)\); the conditions here provide a finite
representation suited to computation and inference.

At a fixed candidate value \(\tau^c\), the preceding two steps now yield one
finite system. Lemma~\ref{lem:finite_comparison_bridge} supplies the comparison
and box rows, while
\eqref{eq:common_shift_candidate_rows} supplies the admissibility rows. The
coefficients \(C_t(e_k)\) and \(C_T^{\mathrm{obs}}(e_k)\) are affine functions
of the population input \(\Pi\) and are evaluated at that input below. Stack
the unknown coordinates as
\[
\eta=(x^\top,u^\top,(v^-)^\top,(v^+)^\top)^\top,
\]
omitting \(u\) when it is unnecessary. The candidate rows are
\begin{equation}\label{eq:bridge_candidate_master_system}
\begin{aligned}
\sum_{t=2}^{T_0}v_t^-C_t(e_k)+x_k
&\le C_T^{\mathrm{obs}}(e_k),
\qquad k=2,\ldots,K,\\
-\sum_{t=2}^{T_0}v_t^+C_t(e_k)-x_k
&\le-C_T^{\mathrm{obs}}(e_k),
\qquad k=2,\ldots,K,\\
-\frac{L}{T_0-1}\mathbbm 1_{T_0-1}\le v^-
&\le\frac{L}{T_0-1}\mathbbm 1_{T_0-1},
\qquad
-\frac{L}{T_0-1}\mathbbm 1_{T_0-1}\le v^+
\le\frac{L}{T_0-1}\mathbbm 1_{T_0-1},\\
-A_s(\Pi)x+A_u u
&\le a(\Pi)-g\tau^c.
\end{aligned}
\end{equation}
Let \(A(\Pi)\) be the coefficient matrix and \(b(\Pi,\tau^c)\) the offset
obtained by stacking the displayed inequalities in the form
\(b(\Pi,\tau^c)-A(\Pi)\eta\le0\). With the displayed coordinates and row order,
write the solution set as
\begin{equation}
\mathcal F(\Pi,\tau^c)
:=
\{\eta:b(\Pi,\tau^c)-A(\Pi)\eta\le0\}.
\label{eq:identification_candidate_feasible_set}
\end{equation}

The following theorem completes the reduction: a candidate is compatible with
the original model of potential means exactly when this linear system has a
solution.

\begin{theorem}[Exact finite candidate representation]
\label{prop:candidate_tau_linear_system}
Suppose Assumptions~\ref{ass:comparison_validity} and
\ref{ass:admissible_rule}--\ref{ass:finite_admissible_representation} hold.
Then, for every \(\tau^c\in\mathbb R\),
\begin{equation}
\mathsf{Comp}(\Pi_0,\tau^c)=1
\quad\Longleftrightarrow\quad
\mathcal F(\Pi_0,\tau^c)\ne\varnothing.
\label{eq:sharp_completion_finite_solvability}
\end{equation}
The finite system introduces neither discretization nor relaxation of the
comparison domain. Every feasible solution can be lifted to a completion of
the maintained model of potential means.
\end{theorem}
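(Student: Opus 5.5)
The plan is to chain Proposition~\ref{prop:treatment_effect_projection} with Lemma~\ref{lem:finite_comparison_bridge} and the representation in Assumption~\ref{ass:finite_admissible_representation}, so the theorem reduces to showing that the right-hand side of \eqref{eq:sharp_completion_general} holds exactly when some \(\eta\) solves \eqref{eq:bridge_candidate_master_system} at \(\Pi=\Pi_0\). By Proposition~\ref{prop:treatment_effect_projection}, \(\mathsf{Comp}(\Pi_0,\tau^c)=1\) if and only if there is \(x\in\mathcal X_L(\Pi_0)\) with \((\tau^c,\tau^c\mathbbm 1_{K-1}-x)\in\Omega(\Pi_0)\). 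I will then replace each of the two membership conditions by its finite certificate and verify that the certificates can be chosen jointly, which is automatic because they involve disjoint auxiliary coordinates \((v^-,v^+)\) and \(u\) and share only \(x\).

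\emph{Comparison block.} Lemma~\ref{lem:finite_comparison_bridge} gives \(x\in\mathcal X_L(\Pi_0)\) if and only if there exist \(v^\pm\) in the box \([-L/(T_0-1),L/(T_0-1)]^{T_0-1}\) satisfying \eqref{eq:finite_comparison_bridge}. Rearranging the two sides of \eqref{eq:finite_comparison_bridge} gives exactly the first two rows of \eqref{eq:bridge_candidate_master_system}: the left inequality becomes \(\sum_t v_t^-C_t(e_k)+x_k\le C_T^{\mathrm{obs}}(e_k)\), and the right inequality, after multiplication by \(-1\), becomes \(-\sum_t v_t^+C_t(e_k)-x_k\le -C_T^{\mathrm{obs}}(e_k)\). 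The third row is the box restriction.

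\emph{Admissibility block.} By Assumption~\ref{ass:finite_admissible_representation}, \((\tau^c,s)\in\Omega(\Pi_0)\) holds if and only if some \(u\) satisfies \(A_\tau(\Pi_0)\tau^c+A_s(\Pi_0)s+A_uu\le a(\Pi_0)\). Substituting \(s=\tau^c\mathbbm 1_{K-1}-x\) gives
\[
A_\tau(\Pi_0)\tau^c+A_s(\Pi_0)\tau^c\mathbbm 1_{K-1}-A_s(\Pi_0)x+A_uu\le a(\Pi_0).
\]
Using \eqref{eq:fixed_common_shift_loading}, the first two terms combine into \(g\tau^c\), which yields the last row of \eqref{eq:bridge_candidate_master_system}. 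Every step is an equivalence, so \(\mathcal F(\Pi_0,\tau^c)\neq\varnothing\) if and only if the right-hand side of \eqref{eq:sharp_completion_general} holds.

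\emph{No discretization and lifting.} The statement that there is no discretization or relaxation follows because Lemma~\ref{lem:finite_comparison_bridge} is an exact equivalence over the full simplex \(\Delta^{K-2}\). For the lifting claim, take a feasible \(\eta\). Its \(x\)-block lies in \(\mathcal X_L(\Pi_0)\) and satisfies \((\tau^c,\tau^c\mathbbm 1-x)\in\Omega(\Pi_0)\). The final sentence of Proposition~\ref{prop:treatment_effect_projection} then supplies the completion. Concretely, one sets \(s=\tau^c\mathbbm 1-x\), \(\mu_T^1(\mathbf d^0)=\mu_T^1-\tau^c\), and \(\mu_T^k(\mathbf d^0)=\mu_T^k-s_k\). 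Comparison validity for every \(w\) then follows from \eqref{eq:contaminated_accounting}, since the latent gap equals \(C_T^{\mathrm{obs}}(w)-w^\top x\).

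\emph{Main obstacle.} The only delicate point is bookkeeping rather than substance: the signs and orientation of the stacked rows must match the \(b-A\eta\le0\) convention. The sharpness content is already carried by Proposition~\ref{prop:treatment_effect_projection} and the minimax argument behind Lemma~\ref{lem:finite_comparison_bridge}, so I would check the row orientation carefully and rely on the earlier results for everything else.
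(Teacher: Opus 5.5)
Your proposal is correct and follows the paper's proof essentially step for step. You invoke Proposition~\ref{prop:treatment_effect_projection}, replace $x\in\mathcal X_L(\Pi_0)$ by the bridge certificates $(v^-,v^+)$ from Lemma~\ref{lem:finite_comparison_bridge}, and substitute $s=\tau^c\mathbbm 1_{K-1}-x$ into the lifted admissible rows using the common-shift loading $g$, with every step reversible. Your explicit check that the latent gap equals $C_T^{\mathrm{obs}}(w)-w^\top x$ spells out the attainment construction the paper already gives in the proof of the projection proposition.
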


\begin{proof}
See \hyperref[proof:candidate_tau_linear_system]{Appendix~\ref*{app:candidate_system_representation_proofs}}.
\end{proof}

Under Assumption~\ref{ass:finite_admissible_representation}, the identified
set in Proposition~\ref{prop:treatment_effect_projection} is also the scalar
projection
\[
\Theta_\tau(\Pi_0;L)
=
\{\tau^c\in\mathbb R:\mathcal F(\Pi_0,\tau^c)\ne\varnothing\}
\]
of the finite candidate system.

\begin{corollary}[Shape of the identified set]
\label{cor:identified_set_shape}
Under the assumptions of
Theorem~\ref{prop:candidate_tau_linear_system}, the population identified set
\(\Theta_\tau(\Pi_0;L)\) is a nonempty polyhedron in \(\mathbb R\). Hence it is
a closed interval, possibly a singleton, a closed half-line, or all of
\(\mathbb R\).
\end{corollary}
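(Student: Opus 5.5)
By Theorem~\ref{prop:candidate_tau_linear_system}, \(\Theta_\tau(\Pi_0;L)=\{\tau^c:\mathcal F(\Pi_0,\tau^c)\ne\varnothing\}\). The plan is to exhibit this set as the coordinate projection of a single polyhedron in a lifted space. Consider the set
\[
\mathcal G:=\{(\tau^c,\eta)\in\mathbb R\times\mathbb R^{\dim\eta}: b(\Pi_0,\tau^c)-A(\Pi_0)\eta\le0\}.
\]
By the construction in \eqref{eq:bridge_candidate_master_system}, the candidate enters only the admissibility rows, and only through the fixed loading \(g\). The comparison and box rows do not involve \(\tau^c\). Hence \(b(\Pi_0,\tau^c)\) is affine in \(\tau^c\), and \(\mathcal G\) is defined by finitely many linear inequalities jointly in \((\tau^c,\eta)\). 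It is therefore a polyhedron, and \(\Theta_\tau(\Pi_0;L)\) is its image under the linear map \((\tau^c,\eta)\mapsto\tau^c\).

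The key step is to invoke the Fourier--Motzkin elimination theorem, equivalently the Minkowski--Weyl theorem. A linear image of a polyhedron is again a polyhedron. Eliminating the finitely many coordinates of \(\eta=(x,u,v^-,v^+)\) one at a time yields finitely many inequalities \(\alpha_j\tau^c\le\beta_j\) in \(\tau^c\) alone. This is where finiteness of the dimension of \(u\) in Assumption~\ref{ass:finite_admissible_representation} matters. Alternatively, one can write \(\mathcal G=\operatorname{conv}(V)+\operatorname{cone}(R)\) with \(V\) and \(R\) finite and project \(V\) and \(R\).

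For nonemptiness, I use Assumption~\ref{ass:admissible_rule} together with Assumption~\ref{ass:comparison_validity} at the true \((\tau,s)\). The true \((\tau,s)\) lies in \(\Omega(\Pi_0)\), and the true relative effect vector \(x=\tau\mathbbm 1_{K-1}-s\) lies in \(\mathcal X_L(\Pi_0)\) by the accounting identity \eqref{eq:contaminated_accounting} combined with Assumption~\ref{ass:comparison_validity}. By Proposition~\ref{prop:treatment_effect_projection}, this gives \(\mathsf{Comp}(\Pi_0,\tau)=1\), so the true \(\tau\) belongs to \(\Theta_\tau(\Pi_0;L)\).

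It remains to classify a nonempty polyhedron in \(\mathbb R\), meaning a set \(\{t:\alpha_jt\le\beta_j,\ j=1,\dots,J\}\). Each constraint with \(\alpha_j>0\) gives an upper bound \(t\le\beta_j/\alpha_j\), and each with \(\alpha_j<0\) gives a lower bound. Each constraint with \(\alpha_j=0\) is either vacuous or infeasible, and the latter is excluded by nonemptiness. The intersection is therefore \([\ell,h]\) with \(\ell\in\mathbb R\cup\{-\infty\}\), \(h\in\mathbb R\cup\{+\infty\}\), and \(\ell\le h\). It is closed in each case, since it is a finite intersection of closed half-lines. This gives exactly the listed cases: a bounded closed interval (possibly a singleton when \(\ell=h\)), a closed half-line, or all of \(\mathbb R\).

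The only mildly delicate point is closedness. A projection of a closed convex set need not be closed in general, so I rely on polyhedrality through Fourier--Motzkin rather than on closedness of \(\mathcal G\) alone.
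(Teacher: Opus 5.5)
Your proposal is correct and follows the paper's proof. The paper likewise takes the joint set of $(\tau^c,\eta)$ satisfying $b(\Pi_0,\tau^c)-A(\Pi_0)\eta\le0$ as a polyhedron, which is nonempty because the true treatment effect is compatible, and whose projection onto the $\tau^c$-coordinate is a nonempty polyhedron in $\mathbb R$ of one of the listed forms; your version just spells out the affine dependence on $\tau^c$, the Fourier--Motzkin step, the check that the true $x$ lies in $\mathcal X_L(\Pi_0)$, and the one-dimensional classification in more detail.
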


\begin{proof}
The joint set
\[
\{(\tau^c,\eta):b(\Pi_0,\tau^c)-A(\Pi_0)\eta\le0\}
\]
is a polyhedron. It is nonempty because the population treatment effect is
compatible under the maintained assumptions. Its projection onto the scalar
\(\tau^c\)-coordinate is therefore a polyhedron in \(\mathbb R\), and every
nonempty polyhedron in \(\mathbb R\) has one of the stated forms.
\end{proof}

The corollary characterizes the shape of the population identified set but
does not determine whether it is bounded above or below. Comparison validity
is invariant to a common shift in the treatment effect and all spillovers, so
boundedness depends on whether the admissible set blocks movement in that
direction. The finite representation in
Assumption~\ref{ass:finite_admissible_representation} does not itself block
this movement; it makes the resulting compatibility problem computationally
tractable. The following proposition gives the exact criterion.

\begin{proposition}[Treatment effect anchoring by admissible restrictions]
\label{prop:exposure_anchoring}
Suppose Assumptions~\ref{ass:comparison_validity} and
\ref{ass:admissible_rule}--\ref{ass:finite_admissible_representation} hold.
Write \(\operatorname{rec}\{\Omega(\Pi_0)\}\) for the recession cone of
\(\Omega(\Pi_0)\), and let
\(\iota=(1,\mathbbm 1_{K-1}^\top)^\top\) denote the direction of a common
shift. The treatment effect
projection is bounded above if and only if
\(\iota\notin\operatorname{rec}\{\Omega(\Pi_0)\}\), bounded below if and only if
\(-\iota\notin\operatorname{rec}\{\Omega(\Pi_0)\}\), and bounded on both sides if
and only if
\[
\operatorname{rec}\{\Omega(\Pi_0)\}\cap\operatorname{span}\{\iota\}=\{0\}.
\]
\end{proposition}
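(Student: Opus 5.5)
The plan is to work directly from the projection characterization in Proposition~\ref{prop:treatment_effect_projection},
\[
\Theta_\tau(\Pi_0;L)=\{\tau^c:\exists x\in\mathcal X_L(\Pi_0),\ (\tau^c,\tau^c\mathbbm 1_{K-1}-x)\in\Omega(\Pi_0)\},
\]
and to combine two structural facts. First, \(\mathcal X_L(\Pi_0)\) is nonempty and bounded. Second, \(\Omega(\Pi_0)\) is a nonempty closed convex set, so its recession cone has the usual characterization.

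For boundedness, I would evaluate the defining inequality in \eqref{eq:relative_effect_set} at each vertex \(w=e_k\). This gives
\[
|C_T^{\mathrm{obs}}(e_k)-x_k|\le \tfrac{L}{T_0-1}\textstyle\sum_t|C_t(e_k)|,
\]
so every coordinate of \(x\) lies in a fixed compact interval. Nonemptiness follows because the true relative effect vector \(x_0=\tau\mathbbm 1_{K-1}-s\) belongs to \(\mathcal X_L(\Pi_0)\) under Assumption~\ref{ass:comparison_validity}.

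Under Assumption~\ref{ass:finite_admissible_representation}, \(\Omega(\Pi_0)\) is the projection of a polyhedron, hence a polyhedron. It is therefore closed and convex, and it is nonempty by Assumption~\ref{ass:admissible_rule}. For such a set, \(d\in\operatorname{rec}\{\Omega(\Pi_0)\}\) if and only if \(p+\lambda d\in\Omega(\Pi_0)\) for all \(\lambda\ge0\), and this holds for one (equivalently every) \(p\in\Omega(\Pi_0)\).

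\emph{Sufficiency of unboundedness.} Suppose \(\iota\in\operatorname{rec}\{\Omega(\Pi_0)\}\). Take the true pair \((\tau_0,s_0)\in\Omega(\Pi_0)\), whose relative vector \(x_0\) lies in \(\mathcal X_L(\Pi_0)\). Then \((\tau_0+\lambda,s_0+\lambda\mathbbm 1_{K-1})\in\Omega(\Pi_0)\) for all \(\lambda\ge0\). This pair has the same relative vector \(x_0\), because \(\tau_0+\lambda-(s_0+\lambda\mathbbm 1_{K-1})=x_0\). Hence \(\tau_0+\lambda\in\Theta_\tau(\Pi_0;L)\) for every \(\lambda\ge0\), and the set is unbounded above.

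\emph{Necessity.} This is the step needing care. Suppose the projection is unbounded above. Choose \(\tau_n\to\infty\), \(x_n\in\mathcal X_L(\Pi_0)\), and
\[
z_n=(\tau_n,\tau_n\mathbbm 1_{K-1}-x_n)\in\Omega(\Pi_0).
\]
Because \(x_n\) is bounded by the first step, \(z_n/\tau_n\to\iota\). Fix \(p\in\Omega(\Pi_0)\) and \(\lambda\ge0\). For \(n\) large enough that \(\lambda/\tau_n\le1\), convexity gives
\[
p+\tfrac{\lambda}{\tau_n}(z_n-p)\in\Omega(\Pi_0).
\]
As \(n\to\infty\), this point converges to \(p+\lambda\iota\), so closedness gives \(p+\lambda\iota\in\Omega(\Pi_0)\). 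Therefore \(\iota\in\operatorname{rec}\{\Omega(\Pi_0)\}\).

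The lower bound follows by the identical argument with \(-\iota\) and \(\tau_n\to-\infty\). In that case \(z_n/|\tau_n|\to-\iota\), again using boundedness of \(x_n\).

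For the two-sided statement, note that \(\operatorname{rec}\{\Omega(\Pi_0)\}\) is a cone. Its intersection with \(\operatorname{span}\{\iota\}\) is therefore nontrivial exactly when it contains \(\iota\) or \(-\iota\). The two-sided claim then follows by combining the one-sided claims, consistent with the interval shapes listed in Corollary~\ref{cor:identified_set_shape}.

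The main obstacle is the necessity direction, where an unbounded sequence of candidates must be converted into a recession direction. That conversion relies essentially on the boundedness of \(\mathcal X_L(\Pi_0)\), obtained from the vertex restrictions, and on the closedness and convexity of \(\Omega(\Pi_0)\).
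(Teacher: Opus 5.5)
Your proof is correct. It uses the same key ingredient as the paper: the vertex rows $w=e_k$ confine every relative effect $\tau-s_k$ to a bounded interval. The rest of the argument is organized differently.

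The paper packages that fact as $\operatorname{rec}\{\mathcal C_L(\Pi_0)\}=\operatorname{span}\{\iota\}$, where $\mathcal C_L(\Pi_0)$ is the set of comparison-valid pairs $(\tau,s)$. It then relies on two polyhedral facts. First, the recession cone of the nonempty intersection $\mathcal C_L(\Pi_0)\cap\Omega(\Pi_0)$ is $\operatorname{span}\{\iota\}\cap\operatorname{rec}\{\Omega(\Pi_0)\}$. Second, a linear functional is bounded above on a nonempty polyhedron if and only if no recession direction increases it.

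You bypass both facts. Sufficiency comes from translating the true pair along $\iota$, which leaves its relative vector in $\mathcal X_L(\Pi_0)$. Necessity comes from the limiting convex-combination argument, which turns an unbounded sequence of candidates with bounded relative effects directly into the recession direction $\iota$.

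The paper's route is shorter given standard linear-programming results. Yours is more elementary and more general: it uses only that $\Omega(\Pi_0)$ is nonempty, closed, and convex, not the finite representation in Assumption~\ref{ass:finite_admissible_representation}. This is a genuine gain, because the second polyhedral fact fails for general closed convex sets. For $\{(a,b):b\ge a^2\}$, the coordinate $a$ is unbounded, yet no recession direction increases it. Your argument shows that, here, the boundedness of relative effects is exactly what rules out such behavior.
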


\begin{proof}
See \hyperref[proof:exposure_anchoring]{Appendix~\ref*{app:candidate_system_representation_proofs}}.
\end{proof}

For a direct admissible row,
the corresponding component of
\(g=A_\tau(\Pi)+A_s(\Pi)\mathbbm 1_{K-1}\) is its loading on this direction.
A zero loading means that the row restricts only relative effects; a one-sided
row with nonzero loading may block one ray, and two opposing rows may block
both.\footnote{With auxiliary variables, a recession direction in the lift can
offset an individual row's direct loading, so the recession cone of the
complete admissible set remains the relevant criterion.}

These population results do not impose the same shape on the finite-sample
test-inversion set because the candidate-specific bootstrap critical value may
vary with \(\tau^c\). Compatibility at any fixed candidate is a feasibility
problem for a linear program (LP). Across candidates, the coefficient matrix
remains fixed and \(\tau^c\) changes only the affine right-hand side, which
permits computational reuse and ensures that estimation error enters in the
same way across candidates. The next subsection turns to the interpretation
and calibration of \(L\). Section~\ref{sec:inference} then introduces the
corresponding Farkas certificate and uses it to test fixed candidates.

\subsection{Interpretation and Calibration of \texorpdfstring{\(L\)}{L}}
\label{subsec:comparison_validity_interpretation}

The preceding results characterize identification at a fixed envelope value.
Once \(L\) is fixed, Assumption~\ref{ass:comparison_validity} converts each
weight's pre-treatment discrepancy into an allowance for its missing
post-treatment no-policy gap. The smallest allowance required by the missing
no-policy outcomes is not identified from observed data. I therefore treat
\(L\) as a prespecified sensitivity parameter. Each fixed value defines a
separate maintained model, and the sensitivity path records how the compatible
treatment effects change as more post-treatment discrepancy is allowed.

This creates an interpretation problem: a sensitivity path is useful only if
values such as \(L=1\) or \(L=2\) have a meaningful scale. I use
leave-one-period-out pre-treatment exercises for this purpose. Each observed
pre-treatment period is treated in turn as a pseudo-post period, and the
smallest envelope needed to accommodate that movement is computed. These
placebo values provide empirical benchmarks for the sensitivity scale; they do
not select the post-treatment envelope.

To formalize this distinction, recall the pre-treatment comparison gaps
\(C_t(w)\), and define the latent
post-treatment gap
\[
C_T^0(w)
:=
\Delta\mu_T^1(\mathbf d^0)
-\sum_{k=2}^K w_k\Delta\mu_T^k(\mathbf d^0).
\]
Assumption~\ref{ass:comparison_validity} links fit before treatment to the
latent gap after treatment:
\[
|C_T^0(w)|
\le
\frac{L}{T_0-1}\sum_{t=2}^{T_0}|C_t(w)|
\qquad\text{for every }w\in\Delta^{K-2}.
\]

\begin{definition}[Latent threshold and calibration using observed periods]
\label{def:observed_period_calibration}
Suppose \(T_0\ge3\).
The smallest envelope value required by the latent post-treatment gaps is
\[
L_T^\star
:=
\inf\left\{
\bar L\ge0:
|C_T^0(w)|
\le
\frac{\bar L}{T_0-1}
\sum_{t=2}^{T_0}|C_t(w)|
\quad\text{for every }w\in\Delta^{K-2}
\right\}.
\]
For a pre-treatment period \(\ell\in\{2,\ldots,T_0\}\), define its placebo
calibration index by leaving that period out:
\[
L_\ell^{\mathrm{pl}}
:=
\inf\left\{
\bar L\ge0:
|C_\ell(w)|
\le
\frac{\bar L}{T_0-2}
\sum_{\substack{t=2\\t\ne\ell}}^{T_0}|C_t(w)|
\quad\text{for every }w\in\Delta^{K-2}
\right\}.
\]
The infimum is \(+\infty\) if the corresponding set is empty.
\end{definition}

Assumption~\ref{ass:comparison_validity} at a fixed \(L\) holds if and only if
\(L\ge L_T^\star\). The threshold \(L_T^\star\) depends on missing no-policy
means and is not identified, whereas each \(L_\ell^{\mathrm{pl}}\) depends only
on pre-treatment population means. Following \citet{hsu2013calibrating}, the
placebo indices express candidate values of \(L\) in units of observed
pre-treatment instability. They neither estimate nor automatically bound
\(L_T^\star\), and choosing the sensitivity range remains a substantive
decision.

A low-rank factor model provides one primitive interpretation of the common
envelope.
Suppose untreated first differences satisfy
\[
\Delta\mu_t^k(\mathbf d^0)=\alpha_t+f_t^\top h_k,
\qquad t=2,\ldots,T_0,T,\quad k=1,\ldots,K,
\]
where \(\alpha_t\) is a common increment, \(f_t\) is a factor vector, and
\(h_k\) is unit \(k\)'s loading vector. Write
\(F_{\mathrm{pre}}=[f_2,\ldots,f_{T_0}]\). If
\(f_T=F_{\mathrm{pre}}\gamma\) for some
\(\gamma\in\mathbb R^{T_0-1}\), then the common increment \(\alpha_t\) cancels
and, simultaneously for every \(w\),
\[
|C_T^0(w)|
\le
\|\gamma\|_\infty
\sum_{t=2}^{T_0}|C_t(w)|.
\]
Thus Assumption~\ref{ass:comparison_validity} holds for all weights whenever
\(L\ge(T_0-1)\|\gamma\|_\infty\). A single restriction on factor movements
can imply validity for every weight. Applying the same argument with an
observed pre-treatment period held out gives the placebo counterpart. The
factor structure can rationalize a common envelope across weights, but it does
not identify the envelope required after treatment. Appendix
Example~\ref{ex:factor_comparison_validity} gives both derivations.

The envelope is positively homogeneous, so a population weight with exact fit
receives zero allowance for its latent post-treatment gap. The
relative-magnitude restriction of \citet{rambachan2023more}, although it bounds
changes in parallel trends violations rather than latent gaps for individual
weights, has the same implication. Under the condition on the factor span,
zero factor gaps before treatment force the gap after treatment to zero.
Appendix Remark~\ref{rem:additive_floor_extension} develops a relaxation
with an additive floor.

The baseline score averages absolute annual pre-treatment discrepancies, so
\(L\) bounds a weight's latent post-treatment gap relative to its typical
annual discrepancy. When the post-treatment target spans a different horizon,
as in the LAWA application, \(L\) also accounts for extrapolation between the
different horizons. The placebo indices based on smoothed factors in the application provide
annual benchmarks on this scale.\footnote{An application may
substitute another prespecified fit score with a clear interpretation.
Appendix Remark~\ref{rem:additive_floor_extension} gives an exact finite
representation for a nonnegative fit-independent floor. Holding the remaining
specification fixed, increasing this floor weakly enlarges the population
compatibility set.}

The finite representation depends on the prespecified collection of pre-treatment gap
coordinates \(C_2(w),\ldots,C_{T_0}(w)\), not on annual first differences per
se.\footnote{The coordinates may instead be constructed from levels, long
differences, residualized gaps, or horizon-matched contrasts.} Changing them
changes the economic content and scale of \(L\), but not the finite representation or the
geometry of relative effects. The baseline based on annual first differences provides eight
transparent pre-treatment movements for both the validity score and its
placebo calibration.

This scalar envelope belongs to the broader literature on partial identification,
which weakens point-identifying benchmark restrictions in explicit and
interpretable ways \citep[e.g.,][]{manski2000monotone,manski2018right,
rambachan2023more}. Here the restriction bounds a latent gap for each weight in
equations that also contain the unknown spillover vector \(s\).

\section{Inference}\label{sec:inference}

Both identification and inference ask whether a given value \(\tau^c\) is
compatible with the model. The null does not compare \(\tau^c\) with an
estimated endpoint. It asks whether some values of the unobserved no-policy
means and spillovers satisfy all assumptions when \(\tau=\tau^c\).
Theorem~\ref{prop:candidate_tau_linear_system} turns this into checking whether
a finite linear system has a solution, and Farkas' alternative provides a
certificate when it does not. I use the normalized solvability statistic and
bootstrap calibration of \citet{goff2025inference} for this structured family,
establish uniform validity under model-tailored regularity conditions, and
invert the tests over candidate values. The systems share a candidate-invariant
primitive sampling perturbation. A uniform feasibility error bound is imposed
only over model-generated compatible systems, and the variance condition is
restricted to structurally reachable certificate directions that can determine
rejection.

The donor pool, envelope value \(L\), admissible rule, and finite row
representation are held fixed throughout. Section~\ref{sec:candidate_lp_anatomy}
constructs a test for one candidate by turning sample infeasibility into a
normalized Farkas score and calibrating it with the bootstrap.
Section~\ref{sec:test_inversion_over_candidates} then inverts these tests, and
Section~\ref{subsec:uniform_validity} states their candidatewise coverage.

\subsection{Compatibility Test for a Fixed Candidate}
\label{sec:candidate_lp_anatomy}

For a fixed candidate \(\tau^c\), inference asks whether the sample evidence of
incompatibility is larger than estimation error can explain. Recall that
\(\eta\) collects the relative effects and auxiliary variables in the finite
representation. By Theorem~\ref{prop:candidate_tau_linear_system}, the
candidate is compatible at a DGP \(P\) exactly when
\(b(\Pi(P),\tau^c)-A(\Pi(P))\eta\le0\) has a solution. For sample size \(n\),
let \(\hat\Pi_n\) be the sample analogue of \(\Pi(P)\), and write
\(\hat A_n:=A(\hat\Pi_n)\) and
\(\hat b_n(\tau^c):=b(\hat\Pi_n,\tau^c)\).

Farkas' alternative turns failure of this system into a certificate. To see
why, suppose the population system were feasible. Multiplying its rows by any
\(\lambda\ge0\) and summing gives
\(b(\Pi(P),\tau^c)^\top\lambda
-\eta^\top A(\Pi(P))^\top\lambda\le0\). If
\(A(\Pi(P))^\top\lambda=0\), feasibility therefore requires
\(b(\Pi(P),\tau^c)^\top\lambda\le0\). Farkas' alternative states that the
converse also holds:
\begin{equation}
\mathcal F(\Pi(P),\tau^c)=\varnothing
\quad\Longleftrightarrow\quad
\exists\lambda\ge0:\
A(\Pi(P))^\top\lambda=0,\quad
b(\Pi(P),\tau^c)^\top\lambda>0.
\label{eq:farkas_candidate_alternative}
\end{equation}
Thus \(\lambda\) is a nonnegative combination of the system rows that cancels
the unknown \(\eta\) and leaves a contradiction.

The sample system replaces \(\Pi(P)\) by \(\hat\Pi_n\). Even when the
population system is feasible, it may lie on the boundary, where arbitrarily
small estimation error can make the sample system infeasible. Sample
infeasibility alone is therefore not valid evidence against compatibility.
Moreover, a Farkas certificate is scale-free: multiplying \(\lambda\) by a
positive constant gives the same contradiction. An unweighted normalization
would make the score depend on the arbitrary units used to write individual
inequality rows. Instead, let \(\widehat{\mathsf S}_n\) scale each row by the
largest bootstrap-estimated standard deviation among its estimated entries;
fixed rows receive zero scale. Weighting by these row scales makes the score
invariant to positive rescaling of a row and expresses a contradiction relative
to its sampling uncertainty. The normalized sample score is
\begin{equation}\label{eq:sample_uniform_infeasibility}
\hat Q_n(\tau^c)
:=
\sup_{\substack{\lambda\ge0:\ 
\hat A_n^\top\lambda=0\\
\mathbbm 1^\top\widehat{\mathsf S}_n\lambda\le1}}
\hat b_n(\tau^c)^\top\lambda.
\end{equation}
It is zero when the sample system is feasible, positive when a normalized
certificate separates it, and may be infinite when fixed rows alone certify
incompatibility.

The bootstrap cannot simply treat the estimated system as the null system. A
compatible population system may lie on the boundary of feasibility, so its
sample analogue can be infeasible with nonvanishing probability. The procedure
therefore recenters the bootstrap perturbation of the estimated system at a
near-feasible completion and evaluates it over certificate directions that are
nearly optimal for the candidate. Its bootstrap distribution measures how
large the normalized infeasibility score can be when the candidate remains
compatible.

Fix a nominal test level \(\alpha\in(0,1/2)\). The bootstrap estimates how
large the score can be from sampling error at a compatible boundary. The
following procedure summarizes the test; Appendix equations
\eqref{eq:uniform_eta_hat}--\eqref{eq:uniform_bootstrap_stat} give the exact
recentering and certificate calculations.

\begin{procedure}[Compatibility test for a fixed candidate]
\label{proc:fixed_candidate_compatibility}
Within the fixed specification, choose \(\tau^c\in\mathbb R\).
\begin{enumerate}[label=\arabic*.]
\item Construct the sample system
\(\hat b_n(\tau^c)-\hat A_n\eta\le0\) and compute
\(\hat Q_n(\tau^c)\).
\item Construct the near-feasible completion, perturb the estimated system
with the bootstrap, and maximize over the normalized certificates that are
nearly optimal for this candidate. Let \(\hat c_n(\tau^c,1-\alpha)\) be the
resulting conditional critical value.
\item Do not reject compatibility when
\[
\hat Q_n(\tau^c)<\infty
\quad\text{and}\quad
\sqrt n\,\hat Q_n(\tau^c)\le\hat c_n(\tau^c,1-\alpha).
\]
\end{enumerate}
\end{procedure}

\subsection{Test Inversion over Candidate Values}
\label{sec:test_inversion_over_candidates}

Let \(\mathcal T\subset\mathbb R\) be a compact reporting domain. It limits
the range over which the test is inverted; it does not enter the maintained
model or the population identified set. Inverting
Procedure~\ref{proc:fixed_candidate_compatibility} over this domain gives
the inversion set
\begin{equation}\label{eq:uniform_inverted_cs}
\mathcal I_n
:=
\left\{
\tau^c\in\mathcal T:
\hat Q_n(\tau^c)<\infty,\quad
\sqrt n\,\hat Q_n(\tau^c)
\le \hat c_n(\tau^c,1-\alpha)
\right\}.
\end{equation}

This is direct inversion of compatibility tests rather than inference based on
plug-in estimates of projection endpoints. Its coverage interpretation is
candidatewise: each compatible structural value is protected against false
exclusion by its own level-\(\alpha\) test. Section~\ref{subsec:uniform_validity}
states the corresponding uniform result.

The model structure also allows
substantial computational reuse within a fixed specification. The matrix from
the finite representation, row scales, and underlying bootstrap perturbations are constructed
once. Since \(\tau^c\) changes only the affine right-hand side, the feasible region
of the sample Farkas LP is common across candidates and only its objective
changes. At a given candidate, the bootstrap certificate LP has a common
feasible region across draws and only its objective changes.
The critical value nevertheless remains candidate-specific because the
near-feasible completion and near-optimal certificate set depend on
\(\tau^c\).

The set in \eqref{eq:uniform_inverted_cs} is defined by continuum inversion. In
practice, the application approximates it on an adaptive grid.\footnote{The
coverage result applies to each compatible candidate that is actually tested;
interpolation between tested values is a numerical display, not an additional
coverage claim. Appendix~\ref{app:arizona_policy_coding} records the grid,
tolerance, solver, search anchors, and audit using a denser grid.}

\subsection{Candidatewise Validity}\label{subsec:uniform_validity}

Let \(\mathcal P^U\) denote the class of DGPs over which the tests are required
to be uniformly valid. The attainment result in
Proposition~\ref{prop:treatment_effect_projection} implies that every compatible
candidate value can be the structural effect in a completion of the maintained
population model.
Accordingly, define the null index class
\[
\mathcal Q_0
:=
\{(P,\tau^c):P\in\mathcal P^U,\ \tau^c\in\mathcal T,\
\ \mathcal F(\Pi(P),\tau^c)\ne\varnothing\}.
\]
Write \(\PP_P\) for probability under \(P\). The inferential target is
\begin{equation}
\limsup_{n\to\infty}
\sup_{(P,\tau^c)\in\mathcal Q_0}
\PP_P\{\tau^c\notin\mathcal I_n\}
\le\alpha.
\label{eq:robust_candidate_coverage_target}
\end{equation}
This controls false exclusion whichever compatible value is the structural
truth.

\begin{assumption}[First-order sampling of aggregate inputs]
\label{ass:sampling}
The dimension of \(\Pi(P)\) is fixed and \(\Pi(P)\) lies in a fixed compact
set. Let \(\hat\Pi_n^*\) denote the input recomputed under the bootstrap, and
define
\[
Z_{n,P}:=\sqrt n\{\hat\Pi_n-\Pi(P)\},
\qquad
Z_n^*:=\sqrt n\{\hat\Pi_n^*-\hat\Pi_n\}.
\]
Uniformly over \(P\in\mathcal P^U\), \(Z_{n,P}\) admits a bounded-Lipschitz
Gaussian approximation with a uniformly bounded, possibly singular covariance
matrix \(\Sigma_\Pi(P)\). Conditionally on the data, \(Z_n^*\) consistently
reproduces this law and covariance, uniformly in probability.
\end{assumption}

Appendix Lemma~\ref{lem:cluster_ratio_sampling_sufficient} gives primitive
sufficient conditions for weighted aggregate means formed from independent
sampling clusters. Inputs observed without sampling error enter the candidate
system as fixed components.

The finite row representation is affine in \(\Pi\), while \(\tau^c\) has a
fixed loading on its right-hand side. The population, sample, and bootstrap
systems therefore share the same first-order sampling perturbation at every
candidate; indexing the test by \(\tau^c\) introduces no additional stochastic
equicontinuity requirement.

For each system row \(r\), let \(\sigma_r(P)\) be the largest asymptotic
standard deviation of the \(\sqrt n\)-scaled estimation errors among the
data-estimated entries in row \(r\) of
\([\hat A_n\ \ \hat b_n(\tau^c)]\). Set \(\sigma_r(P)=0\) when the row
contains only fixed entries, and define
\[
\mathsf S(P):=\operatorname{diag}\{\sigma_r(P)\}_r.
\]
These row scales do not depend on \(\tau^c\). Because the system has finitely
many entries with fixed loadings on \(\Pi\), the covariance bound in
Assumption~\ref{ass:sampling} implies a common finite upper bound \(\bar d\)
on their asymptotic standard deviations. Fixed entries have zero scale by
construction. The population analogues of the normalized dual region and
sample score are
\[
\begin{aligned}
\mathcal D(P)
&:=
\{\lambda\ge0:A(\Pi(P))^\top\lambda=0,
\ \mathbbm 1^\top\mathsf S(P)\lambda\le1\},\\
Q(P,\tau^c)
&:=\sup_{\lambda\in\mathcal D(P)}
b(\Pi(P),\tau^c)^\top\lambda.
\end{aligned}
\]
At a compatible candidate, \(Q(P,\tau^c)=0\); a nonzero maximizer is a
certificate direction that is potentially binding at the feasibility
boundary.

To describe sampling variation along these directions, let
\(\eta^0(P,\tau^c)\) be the minimum-norm feasible nuisance vector and write
\(D_\Pi\) for differentiation with respect to \(\Pi\). Define the first-order
sensitivity of the row residuals at this completion by
\[
\Gamma(P,\tau^c)
:=
\left.
D_\Pi\!\left[
b(\Pi,\tau^c)-A(\Pi)\eta^0(P,\tau^c)
\right]
\right|_{\Pi=\Pi(P)}.
\]
\begin{assumption}[Regular candidate systems]
\label{ass:regular_candidate_systems}
There exist constants
\(\underline d,c_\Gamma,\delta_\Gamma,\varepsilon_\Gamma>0\) and
\(\bar H<\infty\), common to the relevant DGP and null classes, with
\(\delta_\Gamma\le\min\{\underline d/2,\bar d\}\), such that:
\begin{enumerate}[label=(\roman*)]
\item every system entry classified as estimated has asymptotic standard
deviation at least \(\underline d\);
\item for every \((P,\tau^c)\in\mathcal Q_0\) and every \(\eta\),
\[
\operatorname{dist}\{\eta,\mathcal F(\Pi(P),\tau^c)\}
\le
\bar H\left\|[b(\Pi(P),\tau^c)-A(\Pi(P))\eta]_+\right\|.
\]
\item for every \((P,\tau^c)\in\mathcal Q_0\), if the population program
defining \(Q(P,\tau^c)\) has a nonzero
maximizer, at least one such maximizer \(\lambda\) satisfies
\[
\lambda^\top\Gamma(P,\tau^c)\Sigma_\Pi(P)
\Gamma(P,\tau^c)^\top\lambda\ge c_\Gamma.
\]
For each \(\widetilde\Pi\) in the input domain such that
\(\widetilde\Pi-\Pi(P)\) is supported on the coordinates estimated from the
sampling data, and each diagonal \(\widetilde{\mathsf S}\) having the same
zero pattern as \(\mathsf S(P)\), define
\[
\widetilde A:=A(\widetilde\Pi),
\qquad
\widetilde b:=b(\widetilde\Pi,\tau^c),
\qquad
\widetilde{\mathcal D}
:=
\{\lambda\ge0:\widetilde A^\top\lambda=0,
\ \mathbbm 1^\top\widetilde{\mathsf S}\lambda\le1\}.
\]
If zero is the unique maximizer of the population program, then, for every
such \((\widetilde\Pi,\widetilde{\mathsf S})\) satisfying
\[
\|\widetilde A-A(\Pi(P))\|
+\|\widetilde b-b(\Pi(P),\tau^c)\|
+\|\widetilde{\mathsf S}-\mathsf S(P)\|
\le\delta_\Gamma,
\]
every
\(\lambda\in\operatorname{ext}(\widetilde{\mathcal D})\setminus\{0\}\)
satisfying \(\widetilde b^\top\lambda\ge-\varepsilon_\Gamma\) obeys
\[
\lambda^\top\Gamma(P,\tau^c)\Sigma_\Pi(P)
\Gamma(P,\tau^c)^\top\lambda\ge c_\Gamma.
\]
\end{enumerate}
\end{assumption}

Part (i) rules out vanishing first-order variation in entries classified as
estimated; the restriction on \(\delta_\Gamma\) preserves this nondegeneracy
for nearby normalizers. Part (ii) is a uniform feasibility error bound over
model-generated compatible systems: row violations control distance to the
latent feasible completion. Part (iii) requires sampling variation in
certificate directions that can determine rejection; its second clause covers
nearby directions when zero is the unique population maximizer.

\begin{theorem}[Uniform candidatewise coverage]
\label{thm:uniform_candidate_test_validity}
Suppose Assumptions~\ref{ass:comparison_validity} and
\ref{ass:admissible_rule}--\ref{ass:finite_admissible_representation},
\ref{ass:sampling}, and
\ref{ass:regular_candidate_systems} hold uniformly over \(\mathcal P^U\).
Then the set in
\eqref{eq:uniform_inverted_cs} satisfies, for every
\(\alpha\in(0,1/2)\),
\[
\limsup_{n\to\infty}
\sup_{P\in\mathcal P^U}
\sup_{\tau^c\in\Theta_\tau(P;L)\cap\mathcal T}
\PP_P\{\tau^c\notin\mathcal I_n\}
\le \alpha .
\]
\end{theorem}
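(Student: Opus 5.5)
By Theorem~\ref{prop:candidate_tau_linear_system}, every \(\tau^c\in\Theta_\tau(P;L)\cap\mathcal T\) satisfies \(\mathcal F(\Pi(P),\tau^c)\ne\varnothing\). The pairs \((P,\tau^c)\) indexing the supremum are therefore exactly the null class \(\mathcal Q_0\), and it suffices to establish \eqref{eq:robust_candidate_coverage_target}. I would argue by subsequences. Suppose the claim fails. Then there exist \(\epsilon>0\), \(n_j\to\infty\), and \((P_j,\tau_j^c)\in\mathcal Q_0\) with \(\PP_{P_j}\{\tau_j^c\notin\mathcal I_{n_j}\}>\alpha+\epsilon\). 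Compactness of the input domain (Assumption~\ref{ass:sampling}), boundedness of \(\mathcal T\), and the bounds \(\underline d\le\sigma_r\le\bar d\) on estimated rows let me pass to a further subsequence along which \(\Pi(P_j)\), \(\tau_j^c\), \(\mathsf S(P_j)\), \(\Sigma_\Pi(P_j)\), and \(\Gamma(P_j,\tau_j^c)\) converge. Part (ii) of Assumption~\ref{ass:regular_candidate_systems} keeps \(\eta^0\) uniformly controlled, which gives the last convergence. The affine structure of Assumption~\ref{ass:finite_admissible_representation}, with fixed loading \(g\) on \(\tau^c\), means the sample and bootstrap perturbations of \([\hat A_n\ \hat b_n(\tau^c)]\) are fixed linear maps of \(Z_{n,P}\) and \(Z_n^*\) that do not depend on the candidate. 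It remains to derive a contradiction along this subsequence.

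The first step is a local expansion of the statistic. The sample system is feasible for \(\hat\eta=\eta^0(P,\tau^c)\) up to residual \(\sqrt n\)-perturbation \(\Gamma Z_{n,P}/\sqrt n+o_p(n^{-1/2})\). Part (ii) converts small row violations into proximity to \(\mathcal F\), and LP duality rewrites \(\hat Q_n\) as a minimal weighted violation. Combining these, I would show
\[
\sqrt n\,\hat Q_n(\tau^c)\le \max_{\lambda\in\Lambda_n}\lambda^\top\Gamma(P,\tau^c)Z_{n,P}+o_p(1),
\]
where \(\Lambda_n\) is a shrinking neighbourhood of the near-optimal set of the population dual program. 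Because \(\mathcal D(P)\) is a normalized polyhedral cone section whose extreme points are bounded uniformly through \(\underline d\), this set can be taken as the extreme points with \(b^\top\lambda\ge-\kappa_n\). Finiteness of \(\hat Q_n\) holds with probability approaching one. Fixed rows cannot certify incompatibility, since the population system is feasible and fixed entries are unperturbed, and any infeasibility must carry estimated entries with scale at least \(\underline d\).

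The second step compares this with the bootstrap critical value. The bootstrap statistic is built from the recentred completion of the appendix construction and the near-optimal certificates. I would show it equals \(\max_{\lambda\in\hat\Lambda_n}\lambda^\top\hat\Gamma Z_n^*+o_p(1)\), where \(\hat\Lambda_n\) contains \(\Lambda_n\) with probability approaching one; this requires the tuning slack to dominate the \(n^{-1/2}\) estimation error. Assumption~\ref{ass:sampling} then gives conditional weak convergence of \(Z_n^*\) to the Gaussian limit of \(Z_{n,P}\). The bootstrap maximum therefore stochastically dominates the limit of the upper bound in the first step. This yields \(\liminf\PP\{\sqrt n\hat Q_n\le\hat c_n\}\ge1-\alpha\) along the subsequence, which is the desired contradiction.

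The main obstacle is the continuity of the limiting quantile. The limit variable is a maximum of Gaussians over finitely many directions, and its \(1-\alpha\) quantile must be a continuity point for the quantile comparison to go through. In the degenerate case where the supremum collapses to zero, \(\alpha<1/2\) handles it, because the critical value is then nonnegative. Otherwise part (iii) guarantees some relevant direction has variance at least \(c_\Gamma\), and I would invoke anti-concentration for maxima of finitely many nondegenerate Gaussians to get a continuous distribution at positive quantiles. The second clause of (iii) covers the nearby extreme points of perturbed dual regions when zero is the unique population maximizer. I expect the delicate bookkeeping to be in matching the sample and bootstrap near-optimal sets uniformly through the \(\delta_\Gamma\)-neighbourhood.
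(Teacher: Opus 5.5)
Your architecture is close to the paper's: a one-sided saddle bound at the fixed completion $\eta^0(P,\tau^c)$, then bootstrap domination, then anti-concentration under part~(iii) of Assumption~\ref{ass:regular_candidate_systems}. The domination works because $\widehat\Lambda_n^{\mathrm{near}}(\tau^c)$ absorbs, after radial rescaling, every certificate in a deterministic set $\Lambda_n$ once $\gamma_n$ dominates the perturbation radius. Replacing the paper's independent copy and Strassen coupling with drifting subsequences and a deterministic certificate set is a legitimate variant.

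The genuine gap is your degenerate case. A nonnegative critical value settles only the exact case in which every relevant certificate has zero estimated-row part, so that $\hat Q_n(\tau^c)=0$. Suppose instead the limiting functional collapses because the relevant directions have vanishing variance. Then $\sqrt n\,\hat Q_n(\tau^c)$ and $\hat c_n(\tau^c,1-\alpha)$ are both $o_p(1)$, the approximation errors (the $\varepsilon_n$ terms and the recentring error) are of the same order, and nothing prevents $\sqrt n\,\hat Q_n>\hat c_n$ with non-negligible probability. This is why the unshifted rule needs part~(iii) at all; Remark~\ref{rem:conservative_calibration} drops it only after adding $\varepsilon>0$ to the critical value.

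The second clause of (iii) must be applied to the \emph{sample} on the rejection event. If $0<\hat Q_n(\tau^c)<\infty$, the optimal face contains a nonzero $\hat\lambda\in\operatorname{ext}(\hat{\mathcal D}_n)$ with these properties:
it binds the normalization; it has $\hat b_n(\tau^c)^\top\hat\lambda>0\ge-\varepsilon_\Gamma$; it lies in $\widehat\Lambda_n^{\mathrm{near}}(\tau^c)$; and, taking $\widetilde\Pi=\hat\Pi_n$ in the $\delta_\Gamma$-neighbourhood, it has variance at least $c_\Gamma$.

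Only then does $\alpha<1/2$ enter. With such a direction available, the atom at zero has mass at most $1/2$. So for $\bar\alpha\in(\alpha,1/2)$ the $(1-\bar\alpha)$-quantile is at least some $2\underline x>0$, which places $\hat c_n\pm\varepsilon_n$ in $[\underline x,\infty)$, where a density bound holds. You need this localization explicitly. The certificate polytope contains zero and possibly other low-variance directions, so its maximum is not a maximum of nondegenerate Gaussians.

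The second problem is that you use part~(ii) where it is not needed and omit it where it is. Write $(Z_A,Z_b)$ for the $\sqrt n$-scaled sample perturbations of $(A,b)$. Then $\hat A_n^\top\hat\lambda=0$, $\hat\lambda\ge0$, and $b(\Pi(P),\tau^c)-A(\Pi(P))\eta^0\le0$ give $\sqrt n\,\hat b_n(\tau^c)^\top\hat\lambda\le\langle Z_b-Z_A\eta^0,\hat\lambda\rangle$ with no error bound. Moreover, $\hat\lambda$ lies, with probability approaching one, in the suitably normalized relaxed polyhedron $\{\lambda\ge0:\|A(\Pi(P))^\top\lambda\|_\infty\le M_n/\sqrt n,\ b(\Pi(P),\tau^c)^\top\lambda\ge-M_n/\sqrt n\}$. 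Take $\Lambda_n$ to be this set rather than extreme points of $\mathcal D(P)$. Those extreme points are bounded only in estimated-row coordinates and need not lie near sample optimizers uniformly over $\mathcal Q_0$.

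Where (ii) is indispensable is your second step. $Q_n^*(\tau^c)$ is exactly linear in $Z_n^*$, with coefficients evaluated at the recentring $\hat\eta_n(\tau^c)$. Two things therefore require $M_n\|\hat\eta_n(\tau^c)-\eta^0(P,\tau^c)\|=o_p(1)$ uniformly on $\mathcal Q_0$: its domination of $\max_{\lambda\in\Lambda_n}\lambda^\top\Gamma(P,\tau^c)Z$, and the transfer of the variance bound in (iii) from $\eta^0$ to the bootstrap objective. You do not establish this, and proximity of $\hat\eta_n$ to $\mathcal F(\Pi(P),\tau^c)$ alone is not enough.

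The paper gets $\operatorname{dist}\{\hat\eta_n,\mathcal F\}=O_p(\gamma_n/\sqrt n)$ from (ii) and then uses minimum-norm geometry. With probability approaching one, $\|\hat\eta_n\|\le\|\eta^0\|$. The projection $\widetilde\eta_n$ of $\hat\eta_n$ onto $\mathcal F$ satisfies $\|\widetilde\eta_n-\eta^0\|^2\le\|\widetilde\eta_n\|^2-\|\eta^0\|^2$. This yields the rate $(\gamma_n/\sqrt n)^{1/2}$ and explains the requirement $M_n(\gamma_n/\sqrt n)^{1/2}=o(1)$ in Lemma~\ref{lem:directed_near_dual_inclusion}.
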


\begin{proof}
See \hyperref[proof:uniform_candidate_test_validity]{Appendix~\ref*{app:estimated_system_regularity}}.
\end{proof}

The theorem is uniform over DGPs and compatible candidates, but its event is
candidatewise: it controls exclusion of one compatible structural value at a
time. It does not assert simultaneous inclusion of the complete identified
set. Inverting the candidate tests therefore requires no multiplicity
adjustment for this candidatewise guarantee.

Additional theoretical results are collected in the
Appendix.\footnote{Proposition~\ref{prop:separated_incompatibility_power}
establishes power against separated incompatibility. Appendix
Remark~\ref{rem:conservative_calibration} gives a conservative calibration that
does not require part (iii) of
Assumption~\ref{ass:regular_candidate_systems}.}

The next section specifies the finite mean input and admissible restriction for
LAWA and reports inversions for fixed values of \(L\) over a prespecified
sensitivity grid.

\section{Application: The Legal Arizona Workers Act}\label{sec:empirical_illustration}

I apply the framework to the Legal Arizona Workers Act (LAWA), studied by
\citet{bohn2014lawa} as a policy that may have reduced Arizona's likely
unauthorized immigrant population. LAWA mandated E-Verify for new hires and
took effect on January 1, 2008. Because legal status is not observed in
standard household surveys, \citet{bohn2014lawa} use population shares based
on Hispanic noncitizens as proxies for the policy target. Their synthetic
control estimates indicate a sizable decline in Arizona after LAWA.

Migration makes this a natural setting in which comparison states may also be
affected. A
policy that changes the attractiveness of Arizona can redirect prospective
migrants, induce current residents to move, and alter migration into or out of
the United States. \citet{ellis2014migrationresponse} study interstate
out-migration from Arizona, while \citet{amuedodorantes2019interstate} examine
the destination states of Arizona out-migrants. \citet{orrenius2016everify}
document diversion of new arrivals under E-Verify laws. I use this evidence
to motivate possible spillovers, but I do
not require the Arizona effect to be offset exactly within the observed
comparison pool. Instead, I place an interpretable upper bound on total
absolute spillovers across comparison states.

\subsection{Data and Benchmark Construction}

The empirical implementation uses Version 13.0 of the IPUMS Current
Population Survey (CPS) \citep{flood2025ipumscps} to construct annual
state-level series for 1998--2009. I aggregate Basic Monthly records to
state-year means using the final person-level weight \texttt{WTFINL}.
The outcome is the share of
Hispanic noncitizens in the civilian noninstitutional population represented
in the CPS, one of the measures used by \citet{bohn2014lawa}. The sample
contains \(1{,}216{,}961\) household
trajectories linked by the CPS household identifier (CPSID). For state \(k\)
and year \(t\), the empirical cell mean is
\[
\hat\mu_t^k
=
\frac{\sum_{i\in(k,t)} \omega_i Y_i}{\sum_{i\in(k,t)} \omega_i},
\]
where \(Y_i\) is the outcome indicator and \(\omega_i\) is \texttt{WTFINL}.

\begin{samepage}
Arizona is treated. Following \citet{bohn2014lawa}, I exclude Mississippi,
Rhode Island, South Carolina, and Utah from the donor pool.\footnote{The
excluded states had broadly applied restrictions on the employment of
undocumented immigrants.} The pool contains 46 units: 45 states and the
District of Columbia. Let
\(\mathcal K_{\mathrm{cmp}}\) denote this set. The
comparison domain is its full simplex. I use 1998--2006 as the pre-treatment
window, omit 2007 as a transition year, and define the post-treatment target
from the 2008--2009 average.
Inference reflects household-cluster sampling variation conditional on the
policy timing, donor pool, CPS \texttt{WTFINL} weights, and external population
constants; Appendix~\ref{app:arizona_policy_coding} gives the bootstrap design.
\end{samepage}

Relative to \citet{bohn2014lawa}, the main design change is how I construct the
outcome changes before and after treatment. The eight changes from 1999 through
2006 summarize the pre-treatment period, and the 2008--2009 average relative
to the 2006 level is the post-treatment outcome change.
Appendix~\ref{app:arizona_policy_coding} describes the external population
scaling weights and the alternative outcome for the working-age population
with low education.

Figure~\ref{fig:lawa_raw_outcome_trends} displays the annual cell means before
the identifying restrictions are imposed. It shows the entire donor pool
rather than a selected synthetic series because the framework does not
privilege a single comparison weight. Arizona's post-2007 decline is therefore
a feature of the observed data, not by itself an estimate of the policy effect.

\begin{figure}[!htbp]
\centering
\includegraphics[width=0.94\textwidth]{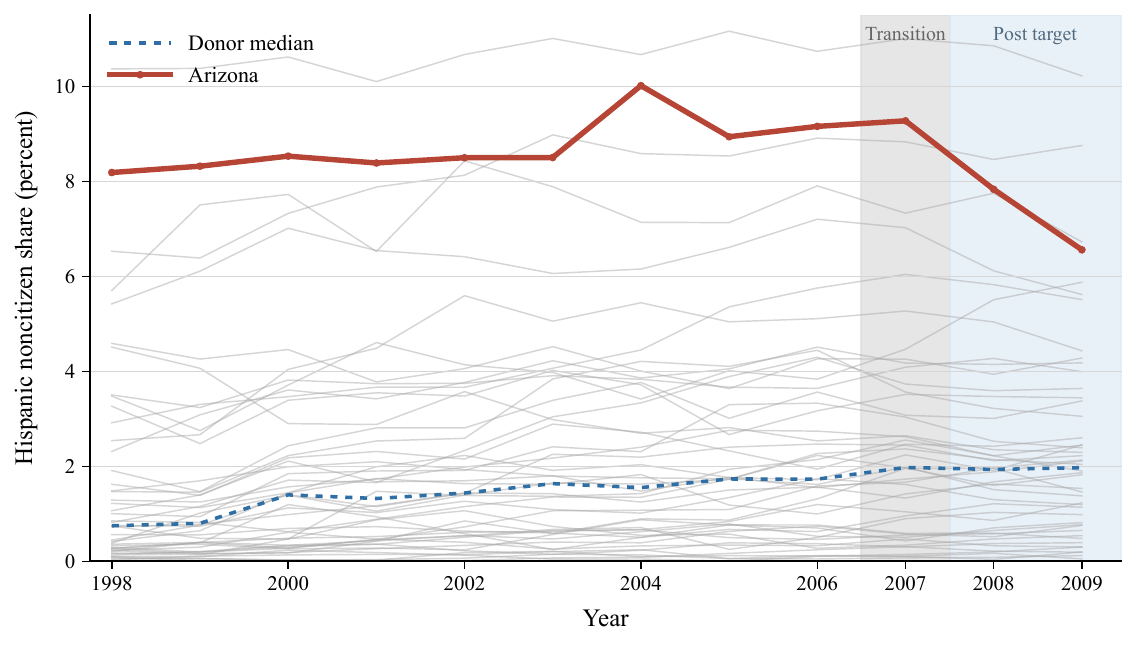}
\caption{Observed LAWA Outcome Paths in Arizona and the Donor Pool}
\label{fig:lawa_raw_outcome_trends}
\begin{minipage}{0.94\textwidth}
\footnotesize
\textit{Notes:} The outcome is the share of Hispanic noncitizens in the
civilian noninstitutional population. The red line is Arizona, the thin gray
lines are the 46 donor units, and the dashed blue line is their unweighted
median across states. All shares use \texttt{WTFINL}.
The gray band marks the omitted 2007
transition year; the blue band marks the 2008--2009 post-treatment target
window. The figure is descriptive: it neither selects a comparison weight nor
imposes the paper's identifying restrictions.
\end{minipage}
\end{figure}

\subsection{Empirical Restrictions}

Comparison validity restricts differences between the treatment effect and
spillovers but does not determine their overall level. I therefore impose a
substantive bound on total absolute spillovers after scaling each state's share
response by population. Because LAWA is an Arizona policy, the total response
across comparison states should not be arbitrarily large relative to the
population potentially affected in Arizona. Specifically, I bound it by a
prespecified multiplier \(\rho>0\) times the size of Arizona's target
population without LAWA. This
restriction does not require spillovers to share a sign or to offset Arizona's
treatment effect.

A natural reference value is \(\rho=1\). A one-person-equivalent response in
one comparison state contributes one unit to the gross-response scale. Using
the 2006 population scaling defined below, \(\rho=1\) therefore permits a total
resident-population-equivalent response as large as Arizona's entire target
population without LAWA.
This benchmark is already broad because it uses the full population of
Hispanic noncitizens rather than only new hires or workers directly covered by
LAWA. The main specification sets \(\rho=2\), allowing twice this benchmark to
accommodate diverted entry, further relocation, and offsetting responses across
comparison states. I report
\(\rho=1\) and \(\rho=4\) as tighter and looser sensitivity specifications.
These values are prespecified calibrations, not estimated parameters.

To put the share responses on a common population scale, let
\(N_k^{\mathrm{ref}}\) denote comparison unit \(k\)'s externally measured 2006
resident population for \(k\in\mathcal K_{\mathrm{cmp}}\), and define
\[
q_k=\frac{N_k^{\mathrm{ref}}}{N_{\mathrm{AZ}}^{\mathrm{ref}}},
\]
where \(N_{\mathrm{AZ}}^{\mathrm{ref}}\) is the corresponding Arizona reference population.
I construct these constants from the U.S. Census Bureau's
\texttt{ST-EST00INT-AGESEX} state intercensal population file
\citep{uscensus2012intercensal}. They are fixed before constructing the CPS
outcome moments and held fixed in the bootstrap. Multiplying a share effect by
its reference population expresses the change on a
resident-population-equivalent scale using the 2006 population base.\footnote{
This is a standardization of the effect, not an assertion that population
denominators after treatment remain equal to their 2006 values. The Census
resident-population base is not identical to the CPS civilian
noninstitutional outcome universe, so the resulting quantities are
resident-population-equivalent responses rather than literal counts of movers
or affected people.}

Let
\(
\mu_T^{\mathrm{AZ}}(\mathbf d^0)
=\mu_T^{\mathrm{AZ}}(\mathbf d^1)-\tau
\)
denote Arizona's no-policy post-treatment share. For a prespecified
\(\rho>0\), the application imposes the gross spillover budget
\begin{equation}\label{eq:empirical_gross_spillover_budget}
\sum_{k\in\mathcal K_{\mathrm{cmp}}}q_k|s_k|
\le
\rho\,\mu_T^{\mathrm{AZ}}(\mathbf d^0).
\end{equation}
In units based on the 2006 populations, this restriction is
\[
\sum_{k\in\mathcal K_{\mathrm{cmp}}}N_k^{\mathrm{ref}}|s_k|
\le
\rho\,N_{\mathrm{AZ}}^{\mathrm{ref}}\mu_T^{\mathrm{AZ}}(\mathbf d^0).
\]
The left-hand side is the total absolute response across comparison units, and
the right-hand side is \(\rho\) times Arizona's target population without LAWA.
Scaling by the population without LAWA rather than the observed population
after LAWA keeps the allowance tied to the population that could have been
affected in the absence of LAWA. This is a scale restriction rather than a conservation equation: it
does not equate Arizona's treatment effect with the net response across
comparison states, and adjustment through international movement or excluded states
need not appear on the left-hand side. At
\(\tau=\tau^c\), the normalized budget on the right-hand side of
\eqref{eq:empirical_gross_spillover_budget} equals
\(\rho\{\mu_T^{\mathrm{AZ}}(\mathbf d^1)-\tau^c\}\). A more negative candidate
implies a larger target population without LAWA and therefore a larger
admissible total response across comparison units. The budget is consequently asymmetric in
\(\tau^c\), which can contribute to different behavior at the lower and upper
endpoints of the compatibility set.

The application also imposes the known support of the share outcome:
\[
0\le \mu_T^{\mathrm{AZ}}(\mathbf d^1)-\tau\le1,
\qquad
0\le \mu_T^k(\mathbf d^1)-s_k\le1,
\quad k\in\mathcal K_{\mathrm{cmp}}.
\]
These inequalities rule out counterfactual shares outside \([0,1]\) and ensure
that the no-policy Arizona stock on the right-hand side of
\eqref{eq:empirical_gross_spillover_budget} is nonnegative.

The gross spillover budget can be written exactly as a finite linear system.
Introduce \(z_k\ge0\) satisfying
\(-z_k\le s_k\le z_k\). Then
\[
\sum_{k\in\mathcal K_{\mathrm{cmp}}}q_k z_k+\rho\tau
\le
\rho\,\mu_T^{\mathrm{AZ}}(\mathbf d^1).
\]
Together with outcome support, these rows define
\(\Omega^{\mathrm{emp}}_\rho(\Pi_0)\) and satisfy
Assumption~\ref{ass:finite_admissible_representation}.\footnote{Bootstrap
draws recompute these post-treatment means.}
The three reported calibrations are nested:
\[
\Omega^{\mathrm{emp}}_1(\Pi_0)
\subseteq
\Omega^{\mathrm{emp}}_2(\Pi_0)
\subseteq
\Omega^{\mathrm{emp}}_4(\Pi_0).
\]

\subsection{Calibration of the Validity Scale Using Observed Periods}
\label{subsec:empirical_l_calibration}

Because \(L\) compares an unobserved gap after treatment with observed fit
before treatment, its numerical scale is difficult to interpret. I therefore
use the placebo indices in
Definition~\ref{def:observed_period_calibration}. The 1998--2006 pre-treatment
window contains eight annual first differences. For each factor dimension
\(d_f\in\{1,2,3\}\), I hold out one difference at a time, estimate the factor
approximation on the other seven, and compute the smallest envelope over the
full simplex that covers the held-out fitted movement. Denote the resulting
envelope by \(\widehat L_{\ell,d_f}^{\mathrm{fac}}\).
Appendix Procedure~\ref{proc:factor_placebo_lbr} gives the calculation.

Let
\[
\widehat B_{d_f}
:=
\max_{\ell=2,\ldots,T_0}
\widehat L_{\ell,d_f}^{\mathrm{fac}}
\]
denote the most demanding held-out movement for factor dimension \(d_f\).
For the main outcome,
\[
(\widehat B_1,\widehat B_2,\widehat B_3)
=
(2.332,\ 2.684,\ 1.526).
\]
The reported path \(L\in[1,3]\) therefore begins below all three
benchmarks and ends above all of them. At \(L=2\), the envelope exceeds the
benchmark based on three factors but not those based on one or two. The
alternative subgroup's corresponding maxima are
\((1.800,\,1.770,\,1.466)\) and serve only as a scale check. Appendix
Table~\ref{tab:arizona_lbr_placebo_full} reports all period-specific indices.

Following \citet{hsu2013calibrating}, this exercise expresses an abstract
sensitivity parameter in units of observed variation. It also follows the idea
in \citet{rambachan2023more} of using movements before treatment to interpret a
departure after treatment. The benchmarks neither select \(L\) nor enter test
construction.

\subsection{Test Inversion Results}

For each \(L\in\{1,1.25,\ldots,3\}\) and \(\rho\in\{1,2,4\}\), I construct a
95 percent inversion set by inverting the candidate compatibility tests over
the full donor simplex. All specifications impose outcome support and use the
homogeneous comparison envelope with a zero floor. The main specification
sets \(\rho=2\); the \(\rho=1\) and \(\rho=4\) specifications vary only the
gross spillover budget.

For reference, using nine pre-intervention years, Table 3 of
\citet{bohn2014lawa} reports a decline of 1.50 percentage points for the share
of all residents who are Hispanic noncitizens, with a one-sided placebo
\(p\)-value of \(0.021\). The inversion sets below show whether zero or the
original estimate is rejected and characterize the other treatment effect
values not rejected under the maintained comparison validity envelopes and
gross spillover budgets.

Figure~\ref{fig:arizona_gm_inversion_paths} reports the lower and upper
boundaries of the 95 percent inversion sets obtained from the compatibility
tests and the adaptive candidate grid. These endpoints numerically approximate
the inversion over all candidate values in
Theorem~\ref{thm:uniform_candidate_test_validity}. In
the audit using a denser grid reported in Appendix~\ref{app:arizona_policy_coding},
every endpoint changes by at most \(1.2\times10^{-4}\).

\begin{figure}[!htbp]
\centering
\includegraphics[width=0.94\textwidth]{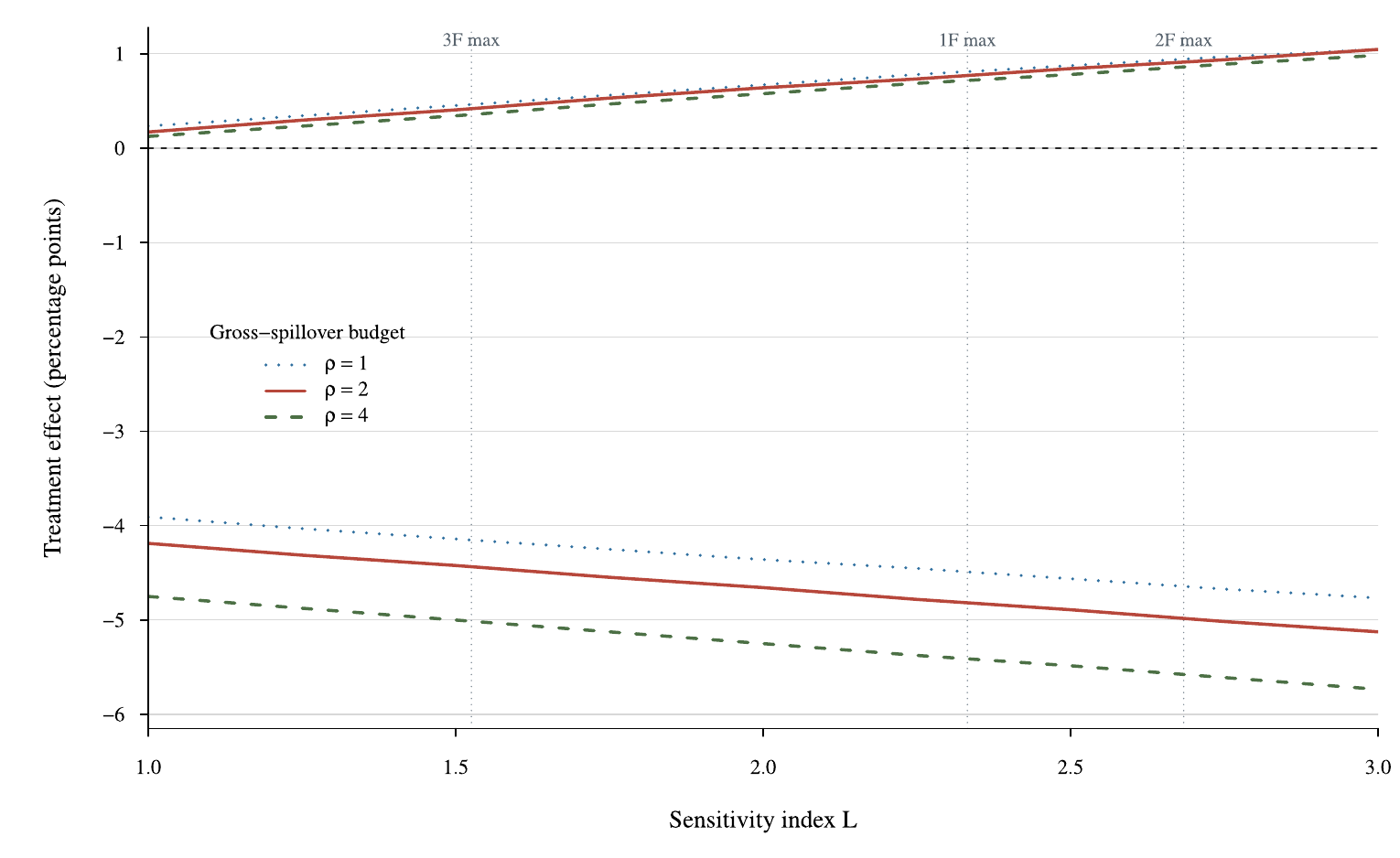}
\caption{Inversion of LAWA Compatibility Tests Across \(L\) and \(\rho\)}
\label{fig:arizona_gm_inversion_paths}
\begin{minipage}{0.94\textwidth}
\footnotesize
\textit{Notes:} Each pair of lines with the same style gives the numerically
approximated lower and upper boundary of the 95 percent inversion set for the
indicated gross spillover budget. Coverage is candidatewise within each
fixed \((L,\rho)\) specification. The paths connect nine separately evaluated
specifications with fixed \(L\) from \(1\) to \(3\) in increments of \(0.25\); they
provide neither simultaneous coverage of the complete identified set nor a
simultaneous band over \(L\) or \(\rho\). The horizontal dashed line marks
zero. The three vertical gray lines mark the maximum factor placebo index for
specifications with one, two, or three factors.
\end{minipage}
\end{figure}

The main \(\rho=2\) inversion set is
\([-4.19,0.17]\) percentage points at \(L=1\),
\([-4.66,0.64]\) at \(L=2\), and
\([-5.13,1.05]\) at \(L=3\). Thus the inversion set contains zero even at the
smallest reported envelope, \(L=1\), which lies below all three factor
placebo benchmarks. Relaxing comparison validity further widens the
range of treatment effects not rejected as compatible.

Allowing a larger gross response primarily extends the negative boundary. At
\(L=2\), the widths of the inversion sets are \(5.03\), \(5.30\), and \(5.83\)
percentage points for \(\rho=1,2,4\), respectively. The upper boundaries are
not ordered in the same way. Population compatibility sets are nested in
\(\rho\), but the separately constructed inversion sets need not be: both the
near-optimal certificate set and its bootstrap critical value depend on the
specification.

Across all reported \((L,\rho)\) specifications, the inversion sets contain
both zero and the decline of 1.50 percentage points reported by
\citet{bohn2014lawa}. The maintained model and sampling uncertainty therefore
do not yield a sign conclusion: neither zero nor a decline of the original
magnitude is rejected. This conclusion is conditional on validity over the full
simplex and the reported gross spillover budgets.

\section{Monte Carlo Evidence}\label{sec:monte_carlo}

The simulation addresses two questions. First, how much do restrictions over
all convex weights narrow the population set when aggregation cancels
pre-treatment discrepancies? Second, how do the tests perform in these
settings? The design holds the empirical post-treatment inputs and admissible
rule fixed while varying only the pattern of pre-treatment discrepancies and
sampling precision.

\subsection{Design and Population Geometry}

The sampling experiment for both designs uses the full simplex comparison
domain, \(L=2\), the outcome support restrictions, and the main
gross spillover budget \(\rho=2\). The population exercise additionally solves
the benchmark using only donor vertices. Both designs preserve each comparison unit's
empirical total absolute pre-treatment discrepancy. The \emph{sign-aligned}
design makes every pre-treatment gap for every donor and period nonnegative.
Convex aggregation then
cannot reduce fit through sign cancellation. The \emph{near-exact} design
places five percent of each state's absolute discrepancy in a common positive
component and chooses donor signs for the remaining 95 percent to minimize the
fit of the convex comparison using population weights. The common component rules out an
exact convex fit. For \(q=(q_k)_{k\in\mathcal K_{\mathrm{cmp}}}\), define the
normalized population weights \(w_q=q/(\mathbbm 1^\top q)\). The ratio
\[
\frac{\sum_{t=2}^{T_0}|C_t(w_q)|}
     {\sum_k w_{q,k}\sum_{t=2}^{T_0}|C_t(e_k)|}
\]
is \(1\) in the sign-aligned design and \(0.0511\) in the near-exact design.
Appendix~\ref{app:simulations} gives the construction.

Table~\ref{tab:mc_population_geometry} reports the deterministic population
exercise before introducing sampling uncertainty. Under sign alignment, the
projections based on donor vertices and the full simplex coincide. Under near-exact fit, the
vertex projection is unchanged because every donor's absolute pre-treatment
discrepancy is held fixed, whereas the projection over the full simplex contracts from
\([-3.365,-1.070]\) to \([-2.130,-1.807]\) percentage points. Its width falls
from \(2.295\) to \(0.324\) percentage points, a reduction of \(85.9\) percent.
Comparing the two weighting domains within each design shows how much interior
weights add when pre-treatment discrepancies cancel.

\begin{table}[!htbp]
\centering
\caption{Population Identification: Donor Vertices and Full Simplex}
\label{tab:mc_population_geometry}
\begin{threeparttable}
\footnotesize
\setlength{\tabcolsep}{7pt}
\begin{tabular}{llccc}
\toprule
Geometry
& Comparison domain
& Lower endpoint
& Upper endpoint
& Width \\
\midrule
Sign-aligned & Donor vertices & \(-3.365\) & \(-1.070\) & \(2.295\) \\
Sign-aligned & Full simplex   & \(-3.365\) & \(-1.070\) & \(2.295\) \\
Near-exact   & Donor vertices & \(-3.365\) & \(-1.070\) & \(2.295\) \\
Near-exact   & Full simplex   & \(-2.130\) & \(-1.807\) & \(0.324\) \\
\bottomrule
\end{tabular}
\begin{tablenotes}[flushleft]
\footnotesize
\item \textit{Notes:} Endpoints and widths are in percentage points. The
donor vertex domain maintains comparison validity only for weights that place
all mass on one donor; the full simplex domain maintains it for every convex
weight. All rows use \(L=2\), \(\rho=2\), outcome support, and the same
post-treatment population inputs.
\end{tablenotes}
\end{threeparttable}
\end{table}

The sampling experiment uses the full simplex model in both geometries. It
follows a finite-dimensional Gaussian experiment calibrated to the
household-cluster variation in the application. The covariance is estimated
from \(399\) cluster multiplier draws. In a cell with precision multiplier
\(m\in\{0.5,1,2\}\), the covariance is divided by \(m\), so larger values of
\(m\) correspond to greater precision. Every cell uses \(500\) Monte Carlo
replications and \(299\) bootstrap draws. Appendix~\ref{app:simulations}
describes the reconstruction of the candidate systems, bootstrap calibration,
and use of common random numbers across cells.

For each cell, I report false exclusion at the lower and upper population
endpoints and the median total width of the inverted acceptance set. Power is
evaluated at candidates one and two percentage points below the lower endpoint
and above the upper endpoint. Fixed absolute distances are used because the
two population projections have very different widths; normalizing the
alternatives by projection width would test economically different deviations.

\subsection{Results}

Table~\ref{tab:mc_main_results} reports the performance in finite samples of
the candidate tests and their inversion for the two population geometries.

\begin{table}[!htbp]
\centering
\caption{Monte Carlo Performance of Candidate Tests}
\label{tab:mc_main_results}
\begin{threeparttable}
\scriptsize
\setlength{\tabcolsep}{2.3pt}
\begin{tabular*}{\textwidth}{@{\extracolsep{\fill}}llccccccc@{}}
\toprule
Geometry
& \(m\)
& \multicolumn{2}{c}{False exclusion}
& \multicolumn{2}{c}{Power: 1 p.p.}
& \multicolumn{2}{c}{Power: 2 p.p.}
& Median width \\
\cmidrule(lr){3-4}\cmidrule(lr){5-6}\cmidrule(lr){7-8}
& & Lower & Upper & Below & Above & Below & Above & (p.p.) \\
\midrule
Sign-aligned & 0.5 & 0.018 & 0.004 & 0.088 & 0.042 & 0.310 & 0.276 & 7.410 \\
Sign-aligned & 1   & 0.024 & 0.006 & 0.202 & 0.100 & 0.632 & 0.548 & 5.851 \\
Sign-aligned & 2   & 0.026 & 0.006 & 0.366 & 0.300 & 0.934 & 0.890 & 4.786 \\
Near-exact   & 0.5 & 0.014 & 0.008 & 0.056 & 0.060 & 0.252 & 0.340 & 5.469 \\
Near-exact   & 1   & 0.014 & 0.006 & 0.120 & 0.164 & 0.520 & 0.656 & 3.922 \\
Near-exact   & 2   & 0.016 & 0.008 & 0.254 & 0.390 & 0.896 & 0.948 & 2.815 \\
\bottomrule
\end{tabular*}
\begin{tablenotes}[flushleft]
\scriptsize
\item \textit{Notes:} The abbreviation p.p. denotes percentage points. Each row uses
\(500\) Monte Carlo replications and
\(299\) bootstrap draws. Lower and Upper are rejection frequencies at the
compatible population endpoints. Below and Above are rejection frequencies at
fixed distances from the corresponding endpoint. Median width is the median
total width of the inversion set, in percentage points. Endpoint tests use
the population widths under the full simplex: \(2.295\) percentage points under sign
alignment and \(0.324\) under near-exact fit. The largest Monte Carlo standard
error in the table is
\(0.022\).
\end{tablenotes}
\end{threeparttable}
\end{table}

Estimated rates of false exclusion range from \(0.004\) to \(0.026\) across the
twelve endpoint experiments. With \(500\) replications per cell, these
estimates provide no evidence of excessive rejection relative to the nominal
\(0.05\) level.

Power rises with both absolute separation and sampling precision. At a
distance of two percentage points, rejection rises from \(0.252\)--\(0.340\) to
\(0.896\)--\(0.948\) in the near-exact design as \(m\) increases from \(0.5\)
to \(2\). The corresponding rise under sign alignment is from
\(0.276\)--\(0.310\) to \(0.890\)--\(0.934\). Alternatives one percentage point
away remain materially harder to distinguish, particularly at the
lower precision levels.

The near-exact geometry also yields narrower inverted sets at every precision
level. Relative to sign alignment, the median width is lower by \(26\), \(33\),
and \(41\) percent at \(m=0.5,1,2\). The population projection falls by \(86\)
percent, so sampling uncertainty attenuates the additional identifying content
of convex comparisons whose discrepancies nearly cancel.

\section{Conclusion}

This paper asks what panel comparisons reveal about treatment effects when
interference to comparison units are unknown. Each comparison relates the
treatment effect to a weighted average of spillovers rather than identifying it
separately. Requiring the same validity rule scaled by fit for every convex donor
weight restricts these relative effects, while restrictions tailored to the
application determine which treatment effect values remain possible. Their
intersection yields the sharp identified set.

Under the finite representation, checking a candidate reduces exactly to
solving a finite linear system, even though validity is required for infinitely
many donor weights. Farkas' alternative provides a certificate when the system
has no solution. Building on the normalized solvability test and bootstrap
calibration of \citet{goff2025inference}, I establish uniform candidatewise
validity for this structured family under model-tailored feasibility and
variance conditions. Inverting the tests controls false exclusion of each
compatible value in large samples. The procedure remains defined when the
identified set is a half-line or the real line, with reported inversion
limited to the compact domain \(\mathcal T\).

The LAWA application combines outcome support with a gross spillover budget
scaled by population. Under the main calibration \(\rho=2\), as well as the
tighter and looser calibrations \(\rho\in\{1,4\}\), the data and maintained
restrictions are consistent with both no effect and the original study's
estimate of a 1.50 percentage point decline. Factor placebos benchmark the
reported values of \(L\) against observed variation without selecting \(L\) or
estimating its latent threshold. The maintained comparison validity envelopes,
gross spillover budgets, and sampling uncertainty therefore do not determine
the sign of the treatment effect.

The simulations isolate when restrictions over all weights matter. Interior
weights do not tighten population identification when donor discrepancies are
sign aligned, but they can sharply narrow the set when convex aggregation
nearly eliminates pre-treatment mismatch. That gain depends on requiring
validity for every donor weight and, in the baseline model, a zero floor at
exact population fit. Sampling uncertainty
attenuates the population gain, while power rises and inversion width falls
with precision.

Panel comparisons can therefore remain informative without an exposure mapping
or advance knowledge of which comparison units are affected. Restrictions
tailored to the application determine what can be learned about the treatment
effect and can be combined with richer models of untreated outcomes.

\clearpage
\begingroup
\raggedright
\bibliographystyle{plainnat}

\endgroup
\clearpage
\appendix
\phantomsection
\label{app:appendix}
\counterwithin{table}{section}
\counterwithin{figure}{section}
\renewcommand{\thetable}{\Alph{section}.\arabic{table}}
\renewcommand{\thefigure}{\Alph{section}.\arabic{figure}}
The appendix first documents the Monte Carlo design and numerical diagnostics,
then gives application and calibration details, and finally collects the
technical proofs.

\section{Monte Carlo Design}\label{app:simulations}
\subsection{Controlled Comparison Geometry}

The experiment starts from the empirical finite-dimensional input for the
main outcome: the pre-treatment donor gaps, post-treatment donor contrasts,
post-treatment levels for the treated and comparison units, and external 2006
Census population weights. It holds every input except the pre-treatment gaps
fixed. The population exercise evaluates two comparison domains: the donor
vertices and the full simplex. The sampling experiment uses the full simplex.
Both impose \(L=2\), outcome support, and
\[
\sum_{k\in\mathcal K_{\mathrm{cmp}}}q_k|s_k|
\le
2\mu_T^1(\mathbf d^0).
\]

Let \(C_t^{\mathrm{emp}}(e_k)\) denote the empirical pre-treatment gap for
donor \(k\) in coordinate \(t\), and use superscripts \(\mathrm{SA}\) and
\(\mathrm{NE}\) for the sign-aligned and near-exact designs. In the
sign-aligned design,
\[
C_\ell^{\mathrm{SA}}(e_k)=|C_\ell^{\mathrm{emp}}(e_k)|,
\qquad \ell=2,\ldots,T_0.
\]
Every convex comparison therefore has discrepancies of the same sign in every
pre-treatment coordinate.

The near-exact design preserves each column's total absolute discrepancy but
creates substantial cancellation. Its first coordinate is
\[
C_2^{\mathrm{NE}}(e_k)
=0.05\sum_{j=2}^{T_0}|C_j^{\mathrm{emp}}(e_k)|.
\]
For \(\ell=3,\ldots,T_0\), set
\[
C_\ell^{\mathrm{NE}}(e_k)
=0.95\,\sigma_k
\left(\sum_{j=2}^{T_0}|C_j^{\mathrm{emp}}(e_k)|\right)
\frac{|C_\ell^{\mathrm{emp}}(e_k)|}
     {\sum_{j=3}^{T_0}|C_j^{\mathrm{emp}}(e_k)|},
\qquad \sigma_k\in\{-1,1\}.
\]
The signs solve the finite binary balancing problem that minimizes
\(
\sum_{\ell=2}^{T_0}|C_\ell^{\mathrm{NE}}(w_q)|
\), where \(w_q=q/(\mathbbm 1^\top q)\), with one sign fixed to remove the
global sign symmetry.
\begin{samepage}
By construction, every comparison unit satisfies
\[
\sum_{\ell=2}^{T_0}|C_\ell^{\mathrm{NE}}(e_k)|
=
\sum_{\ell=2}^{T_0}|C_\ell^{\mathrm{SA}}(e_k)|
=
\sum_{\ell=2}^{T_0}|C_\ell^{\mathrm{emp}}(e_k)|
\]
for every comparison unit. The common positive first coordinate implies the
lower bound
\[
\min_{w\in\Delta^{K-2}}
\frac{\sum_{\ell=2}^{T_0}|C_\ell^{\mathrm{NE}}(w)|}
     {\sum_k w_k\sum_{\ell=2}^{T_0}|C_\ell^{\mathrm{NE}}(e_k)|}
\ge0.05.
\]
\end{samepage}
For the constructed sign pattern, a finite LP over \(w\) attains this lower
bound up to solver tolerance (the computed value differs from \(0.05\) by less
than \(2\times10^{-16}\)). The design therefore approaches but does not reach
exact convex fit. At the normalized population weights, the ratio is
\(0.051058\), compared with \(1\) under sign alignment. The minimum average
absolute pre-treatment discrepancy in the near-exact design is \(0.000201\).

Let superscripts \(\mathrm{DV}\) and \(\mathrm{FS}\) denote the domain of donor
vertices and the full simplex, respectively. Because both designs preserve
every donor's total absolute pre-treatment discrepancy, their systems at the
donor vertices
coincide at \(L=2\) and \(\rho=2\):
\[
\Theta_\tau^{\mathrm{SA},\mathrm{DV}}
=
\Theta_\tau^{\mathrm{NE},\mathrm{DV}}
=[-0.033650,-0.010704].
\]
Solving the population systems over the full simplex gives
\[
\Theta_\tau^{\mathrm{SA},\mathrm{FS}}
=[-0.033650,-0.010704],
\qquad
\Theta_\tau^{\mathrm{NE},\mathrm{FS}}
=[-0.021302,-0.018067].
\]
Thus interior weights do not sharpen the sign-aligned design. In the
near-exact design, they reduce the width from \(0.022946\) to \(0.003235\), or
by \(85.9\) percent.

\subsection{Sampling and Numerical Implementation}

For each geometry \(G\in\{\mathrm{SA},\mathrm{NE}\}\), let \(\Pi_G\) collect
the population gaps, post-treatment contrasts, and outcome levels. A rank-398
covariance factor \(\widehat\Sigma_G^{1/2}\) is calibrated from \(399\)
multiplier draws clustered by household. Its pre-treatment coordinates receive the
same sign transformations as the population gaps, preserving their marginal
variances while matching the controlled geometry.

For precision \(m\in\{0.5,1,2\}\) and replication \(j\), the simulated input
has the form
\[
\widehat\Pi_{G,m}^{(j)}
=
\Pi_G+m^{-1/2}\widehat\Sigma_G^{1/2}Z_j,
\qquad Z_j\sim N(0,I).
\]
The \(299\) bootstrap perturbations use independent standard Gaussian vectors
with the same covariance factor and precision scaling. They determine the row
scales and the perturbations of estimated entries in the bootstrap certificate
objectives. For each candidate, the minimum-norm near-feasible completion and
near-optimal certificate region are computed from the sample system; the
bootstrap draws change only the certificate objective. The implementation uses
\(\gamma_n=\sqrt{\log(300)}\), as in the empirical procedure, and common random
numbers across the six combinations of design and precision.

False exclusion is evaluated at both population endpoints. Power candidates
are \(0.01\) and \(0.02\) below the lower endpoint and above the upper endpoint.
The full inversion begins on \([-0.08,0.04]\), uses a coarse spacing of
\(0.01\), and bisects every acceptance transition to a tolerance of
\(2.5\times10^{-4}\). All accepted components are retained. One near-exact
replication at \(m=0.5\) initially reached the lower numerical boundary and was
recomputed on \([-0.16,0.12]\); no final inversion ends at a numerical boundary.

\section{Implementation Diagnostics}
\label{app:implementation_diagnostics}

The diagnostics in this appendix document the numerical behavior of the
implemented LP systems. They report failed candidate and bootstrap programs,
cases in which the inversion set reaches the numerical treatment effect
support, and whether the accepted grid points form a connected set.

For each empirical \((L,\rho)\) specification, the reported diagnostics record
the number of finite candidate programs, valid bootstrap draws, connected
acceptance components, boundary hits, scales for estimated rows, norms of
near-feasible solutions, and constraint residuals from quadratic programs.
These quantities are recomputed with the gross spillover budget and its
auxiliary rows for absolute values rather than carried over from a different
admissible rule.

All \(3{,}000\) Monte Carlo inversions succeed. Their adaptive searches evaluate
\(110{,}332\) candidate values, and the power exercise at fixed distances adds
\(12{,}000\) candidate tests. Every test uses all \(299\) bootstrap draws and no
bootstrap LP fails. Every inversion is nonempty and connected, and no final
inversion reaches the numerical search boundary.

\section{Application Details and Calibration}\label{app:arizona_policy_coding}

The empirical application follows \citet{bohn2014lawa} in using annualized CPS
data for 1998--2009, dropping 2007 as a transition year, and excluding
Mississippi, Rhode Island, South Carolina, and Utah from the comparison pool.
The treated state is Arizona, and the outcome is the share of Hispanic
noncitizens in the state population represented by the CPS. The empirical
candidate systems use pre-treatment first differences over 1998--2006 and the
2008--2009 average relative to the 2006 level as the post-treatment target.

The empirical bootstrap takes the complete trajectory identified by a CPS
household identifier (CPSID) as the sampling cluster. Before inference, the
person-month records are aggregated into weighted numerator and denominator
contributions for each household and year. Each draw assigns every CPSID one
independent exponential multiplier with mean and variance one and uses that
multiplier for all household members and survey months in the trajectory.
CPSIDs are stratified by state and cohort of entry into the sample, and the
multipliers are normalized to have mean one within each stratum. The draw then
recomputes the annual state ratios, first differences, and inputs to the
candidate system. The economic observations entering identification remain
annual state means; clustering allows repeated CPS interviews to determine
their sampling covariance rather than their independent sample size. Here
\(n\) is the number of distinct CPSID trajectories in the analysis sample.
This procedure targets superpopulation variation at the household-cluster
level, conditional on the published CPS \texttt{WTFINL} weights. It is not
intended to reproduce design-based CPS variance estimation under the full multistage
sampling design \citep{censusbls2000cpsdesign}.

The bootstrap holds fixed the policy timing, comparison pool, rule defining the
admissible set, and external constants. The population scaling weights
\(q_k=N_k^{\mathrm{ref}}/N_{\mathrm{AZ}}^{\mathrm{ref}}\) in the gross
spillover budget are external design constants from the 2006 resident
population estimates in the Census \texttt{ST-EST00INT-AGESEX} file
\citep{uscensus2012intercensal}. They are fixed before constructing the CPS
outcome moments and held fixed in the bootstrap. The main specification fixes
\(\rho=2\), while the robustness
specifications separately fix \(\rho=1\) and \(\rho=4\). For a candidate
\(\tau^c\), the budget right-hand side is
\(\rho\{\hat\mu_T^{\mathrm{AZ}}(\mathbf d^1)-\tau^c\}\). The treated post-treatment
mean and all means in the \([0,1]\) support restrictions are recomputed in each draw and enter
the corresponding sample or bootstrap candidate system as estimated inputs to
the admissible set.
The computation uses the exact finite system in
Theorem~\ref{prop:candidate_tau_linear_system}, with the same fixed row
order for the sample and every bootstrap draw. Generated rows use their
bootstrap standard deviation estimates directly, while fixed rows receive
scale zero. With \(B\) bootstrap draws, the implementation sets the three
common tuning values to \(\gamma_n=\sqrt{\log(B+1)}\). The corresponding
asymptotic implementation takes \(B=B_n\to\infty\) with
\(\log B_n=o(n)\), which gives \(\gamma_n\to\infty\) and
\(\gamma_n/\sqrt n\to0\).
The empirical inputs were prepared in R 4.4.1 \citep{rcoreteam2024r}, while the
candidate systems, empirical inversions, and Monte Carlo experiments were
computed in Python 3.14.3. The linear programs and the strictly convex
quadratic program selecting the minimum-norm nuisance vector were solved with
HiGHS 1.15.1 through its Python interface \citep{huangfu2018parallelizing}.
SciPy's SLSQP implementation \citep{virtanen2020scipy} was used for the same
quadratic program only if HiGHS did not return an optimal solution. The
empirical bootstrap uses random seed \(12345\), and the Monte Carlo uses seed
\(20260721\). For each
candidate, the bootstrap certificate program has a common feasible region
across draws, so the implementation constructs that program once and changes
only its objective. The linear programs use presolve and warm starts.

The empirical inversion uses \(B=399\), the numerical candidate domain
\([-0.08,0.04]\), an initial grid spacing of \(0.01\), and bisection of every
detected acceptance transition to a tolerance of \(2.5\times10^{-4}\).
The initial grid also includes zero, the two benchmarks that assume clean
comparisons, and the feasibility endpoints and midpoint of the sample system
as numerical search anchors. Conditional quantiles use the median-unbiased
type-8 rule of \citet{hyndman1996sample} and a \(10^{-6}\) numerical
tie-breaking increment. The candidate domain and search
anchors do not enter the solvability system for a fixed candidate as additional
restrictions. Across the 27 reported
\((L,\rho)\) specifications, every candidate test uses all 399 bootstrap
draws, no bootstrap LP fails, every accepted set has one connected component,
and no accepted set reaches the boundary of the reporting domain. A denser
audit of all 27 specifications, using the same 399 bootstrap draws, an initial
grid spacing of \(0.0025\), and boundary tolerance \(5\times10^{-5}\), changes
each reported endpoint by at most \(1.2\times10^{-4}\) and finds no additional
accepted component.
The reported critical values are Monte Carlo approximations to the ideal
conditional quantiles; Theorem~\ref{thm:uniform_candidate_test_validity} does
not include the simulation error from holding \(B=399\) fixed.

The empirical inference and the Monte Carlo use the same form of admissible
rule: outcome support and the gross spillover budget scaled by population. The
Monte Carlo fixes the main calibration \((L,\rho)=(2,2)\) and varies only
pre-treatment comparison geometry and sampling precision.

\paragraph{Alternative outcome definition.}
\citet{bohn2014lawa} also study the share among residents ages 15--45 who are
Hispanic, noncitizens, and have at most a high school education. For the scale
check below, I construct this subgroup share from the CPS. The subgroup is used
only to assess whether the scale of \(L\) from observed periods is sensitive to the
proxy definition; it is not analyzed as a second application of the treatment
effect analysis.

Table~\ref{tab:arizona_lbr_placebo_full} reports the complete benchmarks used
to interpret the comparison validity scale. For the main outcome, the maximum
in each factor dimension gives one vertical reference line in
Figure~\ref{fig:arizona_gm_inversion_paths}. These maxima are \(2.332\),
\(2.684\), and \(1.526\). The maxima for the alternative subgroup, \(1.800\),
\(1.770\), and \(1.466\), are reported only as a scale check.

\begin{procedure}[Factor placebo calibration of \(L\) using observed periods]
\label{proc:factor_placebo_lbr}
Let \((\Delta\hat\mu_\ell^k)\) be the matrix of estimated
pre-treatment first differences. For each outcome definition and factor
dimension \(d_f\in\{1,2,3\}\), compute the benchmark as follows.
\begin{enumerate}[label=(\roman*),leftmargin=*]
\item Treat each pre-treatment first difference \(\ell\) once as a pseudo-post
movement. Let \(X_{kh}=\Delta\hat\mu_h^k\) for \(h\ne\ell\). Using equal weights
on all unit--time cells, remove the sample grand mean
\(\bar X_{-\ell}\) and the unit and time effects from \(X\), and take the
rank-\(d_f\) truncated singular value decomposition (SVD) of the resulting
double-centered matrix. Splitting its singular values symmetrically between
the left and right singular vectors yields loadings
\(\widehat\varphi_k\) and factors \(\widehat f_h\) in
\[
\Delta\hat\mu_h^k
=
\widehat a_k+\widehat b_h
+\widehat\varphi_k^\top\widehat f_h+\widehat e_{kh},
\qquad h\ne\ell,\quad \dim(\widehat f_h)=d_f .
\]
Here \(\widehat a_k\) and \(\widehat b_h\) are the fitted unit and time
effects, and \(\widehat e_{kh}\) is the fitted residual.
\item For \(h\ne\ell\), use the fitted additive effects and rank-\(d_f\)
component. For the held-out cross-section, regress
\(\Delta\hat\mu_\ell^k-\widehat a_k-\bar X_{-\ell}\) on a constant
and \(\widehat\varphi_k\), again with equal unit weights. Use its fitted
intercept and factor coefficients to obtain
\(\widetilde{\Delta\mu}_{\ell,-\ell}^{k,(d_f)}\). Denote all fitted differences
by \(\widetilde{\Delta\mu}_{h,-\ell}^{k,(d_f)}\).
\item For each comparison weight \(w\), form
\[
\widetilde C_{h,-\ell}^{(d_f)}(w)
:=
\widetilde{\Delta\mu}_{h,-\ell}^{1,(d_f)}
-
\sum_{k=2}^K w_k\,\widetilde{\Delta\mu}_{h,-\ell}^{k,(d_f)} .
\]
The resulting index is
\[
\widehat L_{\ell,d_f}^{\mathrm{fac}}
:=
\inf\left\{
L\ge0:
\left|\widetilde C_{\ell,-\ell}^{(d_f)}(w)\right|
\le
L\,
\frac{1}{T_0-2}
\sum_{h\ne \ell}
\left|\widetilde C_{h,-\ell}^{(d_f)}(w)\right|
\text{ for all } w\in\Delta^{K-2}
\right\}.
\]
\end{enumerate}
Each entry in Table~\ref{tab:arizona_lbr_placebo_full} is therefore the minimum
value of \(L\) that covers one pseudo-post movement over the full simplex for a
prespecified factor dimension.
\end{procedure}

The computation first fits two-way additive effects by least squares, then
applies a truncated SVD and the held-out cross-sectional projection just
described. Only the fitted values enter the calibration; the implementation
gives equal weight to every unit--time cell in every holdout, and SVD sign
changes leave the fitted values and projection onto the loading space
unchanged.

The procedure is the sample analogue of
\(L_\ell^{\mathrm{pl}}\) in
Definition~\ref{def:observed_period_calibration}, based on smoothed factors.
It is informative when
low-dimensional pre-treatment movements provide a useful scale for latent
post-treatment gaps. Example~\ref{ex:factor_comparison_validity} links both
objects to the sup-norm coefficients that represent one factor movement with
the others. The main text inference is computed separately at
each displayed fixed value of \(L\); the calibration indices are reference
markers, not estimated specification parameters.

\begin{table}[!htbp]
\centering
\caption{Factor Placebo Benchmarks for \(L\) from Observed Periods}
\label{tab:arizona_lbr_placebo_full}
\begin{threeparttable}
\scriptsize
\setlength{\tabcolsep}{4.4pt}
\begin{tabular}{@{}lcccccc@{}}
\toprule
Pseudo-post first difference
& \multicolumn{3}{c}{Main outcome}
& \multicolumn{3}{c}{Alternative subgroup} \\
\cmidrule(lr){2-4}\cmidrule(l){5-7}
& 1 factor
& 2 factors
& 3 factors
& 1 factor
& 2 factors
& 3 factors \\
\midrule
1999--1998 & \(2.332\) & \(2.684\) & \(1.526\) & \(1.547\) & \(1.239\) & \(0.908\) \\
2000--1999 & \(1.348\) & \(1.022\) & \(0.847\) & \(1.371\) & \(1.074\) & \(0.676\) \\
2001--2000 & \(1.775\) & \(1.448\) & \(1.520\) & \(1.104\) & \(1.747\) & \(0.998\) \\
2002--2001 & \(1.951\) & \(1.981\) & \(1.297\) & \(1.266\) & \(1.307\) & \(0.968\) \\
2003--2002 & \(1.410\) & \(1.182\) & \(1.008\) & \(1.673\) & \(1.511\) & \(1.466\) \\
2004--2003 & \(1.265\) & \(1.334\) & \(1.276\) & \(1.297\) & \(1.060\) & \(1.088\) \\
2005--2004 & \(1.175\) & \(1.220\) & \(1.234\) & \(1.800\) & \(1.770\) & \(1.200\) \\
2006--2005 & \(1.218\) & \(1.362\) & \(0.928\) & \(1.076\) & \(0.790\) & \(0.527\) \\
\midrule
Maximum & \(2.332\) & \(2.684\) & \(1.526\) & \(1.800\) & \(1.770\) & \(1.466\) \\
\bottomrule
\end{tabular}
\begin{tablenotes}[flushleft]
\footnotesize
\item \textit{Notes:} Each row treats the listed pre-treatment first difference
as a pseudo-post moment. For each factor dimension, the approximation is
estimated on the other seven differences, and all comparison moments use the
corresponding fitted values. Each entry is the minimum value of \(L\) for which
the envelope over the full simplex covers the pseudo-post factor movement. The
maximum row gives the three main text figure markers for the main outcome; the alternative
outcome maxima are reported here as a scale check.
\end{tablenotes}
\end{threeparttable}
\end{table}

\subsection{Primitive Sampling Verification}
\label{app:primitive_sampling_verification}

The application uses ratios of weighted cluster totals. The following result
records primitive conditions under which this sampling design satisfies the
high-level Gaussian and bootstrap approximation used for inference.

\begin{lemma}[Clustered weighted-ratio sampling]
\label{lem:cluster_ratio_sampling_sufficient}
Consider a fixed finite collection of aggregate-unit and period cells. Let
sampling cluster \(h\) contribute a weighted numerator \(U_{hkt}\) and
denominator \(V_{hkt}\) to cell \((k,t)\), and form
\[
\hat\mu_t^k=\frac{\sum_{h=1}^n U_{hkt}}{\sum_{h=1}^n V_{hkt}}.
\]
Suppose the following conditions hold uniformly over \(P\in\mathcal P^U\):
\begin{enumerate}[label=(\roman*)]
\item The \(n\) clusters are independent and are identically distributed
either overall or within each of \(S<\infty\) fixed strata. If \(n_s\) is the
nonrandom number of clusters in stratum \(s\), then, for constants
\(\pi_s\) satisfying \(\min_{s\le S}\pi_s>0\),
\[
\max_{s\le S}\sqrt n\,|n_s/n-\pi_s|\to0.
\]
Dependence among records within a cluster is unrestricted.
\item The vector \(R_h\) stacking all \((U_{hkt},V_{hkt})\) has a uniformly
bounded \(2+\delta\) moment for some \(\delta>0\), and every population
denominator is bounded away from zero. The population input \(\Pi(P)\) lies in
a fixed compact set, and \(\hat\Pi_n\) is a fixed continuously differentiable
function of the cell ratios and any exactly observed inputs, with derivative
uniformly bounded on a common neighborhood of their population values.
\item The bootstrap assigns independent nonnegative weights to clusters,
independently of the sample, with mean and variance one and a bounded
\(2+\delta\) moment. It recomputes all cell ratios and \(\hat\Pi_n\). Under
stratified sampling, the weights are normalized to have sample mean one within
each stratum.
\end{enumerate}
Then Assumption~\ref{ass:sampling} holds.
\end{lemma}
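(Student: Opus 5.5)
The plan is to reduce everything to a uniform multivariate CLT for the stacked cluster sums, followed by a uniform delta method and a multiplier-bootstrap CLT. Let \(R_h\) be the fixed-dimensional vector stacking all \((U_{hkt},V_{hkt})\). Write \(\bar R_n=n^{-1}\sum_h R_h\), and let \(\bar r(P)\) be its population counterpart. In the stratified case, \(\bar r(P)=\sum_s\pi_s\,\mathbb E_{P,s}R_h\). Each cell ratio is a ratio of two coordinates of \(\bar R_n\). Hence \(\hat\Pi_n=\phi(\bar R_n)\) for a fixed map \(\phi\), namely the composition of the ratio map with the \(C^1\) function in (ii), with any exactly observed inputs held fixed. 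The denominators are bounded away from zero, so \(\phi\) is continuously differentiable with uniformly bounded derivative on a common neighborhood of \(\{\bar r(P)\}\).

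First, I will establish uniform Gaussian approximation of \(\sqrt n(\bar R_n-\bar r(P))\). Within each stratum the summands are i.i.d. with a uniformly bounded \(2+\delta\) moment. The Lyapunov condition then holds uniformly, and a Lindeberg/Lyapunov argument along arbitrary sequences \(P_n\in\mathcal P^U\) gives convergence in bounded-Lipschitz distance to \(N(0,\Sigma_R(P))\), with \(\Sigma_R(P)=\sum_s\pi_s\operatorname{Var}_{P,s}(R_h)\) uniformly bounded. Arguing along sequences is the standard way to convert pointwise CLTs into uniform statements; for a possibly singular limit I will use a subsequence on which \(\Sigma_R(P_n)\) converges. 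The condition \(\sqrt n|n_s/n-\pi_s|\to0\) makes the recentering at \(\pi_s\) rather than \(n_s/n\) negligible.

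Second, I will apply the uniform delta method. A Taylor expansion gives \(\sqrt n(\hat\Pi_n-\Pi(P))=D\phi(\bar r(P))\sqrt n(\bar R_n-\bar r(P))+o_P(1)\) uniformly. This uses the uniform \(O_P(n^{-1/2})\) rate from the first step, so that \(\bar R_n\) lies in the common neighborhood with probability tending to one, together with uniform continuity of \(D\phi\) on the compact neighborhood. Bounded-Lipschitz closeness is preserved under this uniformly Lipschitz linear map plus an \(o_P(1)\) remainder. The result is \(\Sigma_\Pi(P)=D\phi\,\Sigma_R\,D\phi^\top\), which is uniformly bounded and possibly singular.

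Third, I will handle the bootstrap. Write \(\bar R_n^*=n^{-1}\sum_h W_hR_h\), with the weights normalized within strata. Conditionally on the data, \(\sqrt n(\bar R_n^*-\bar R_n)=n^{-1/2}\sum_h(W_h-1)(R_h-\bar R_{n,s(h)})\) up to the normalization, which contributes only an \(o_P(1)\) change. This is a sum of independent terms with conditional covariance equal to the within-stratum sample covariance. By a uniform weak law, using the \(2+\delta\) moments, that covariance converges to \(\Sigma_R(P)\) in probability uniformly. The conditional Lyapunov ratio is \(n^{-\delta/2}\, n^{-1}\sum\|R_h\|^{2+\delta}\cdot\mathbb E|W-1|^{2+\delta}\), which tends to zero. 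This gives conditional bounded-Lipschitz convergence uniformly in probability. A conditional delta-method step then transfers the result to \(Z_n^*\). Here \(\bar R_n\) is within \(O_P(n^{-1/2})\) of \(\bar r(P)\), so the derivative at \(\bar R_n\) can replace the one at \(\bar r(P)\).

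The main obstacle is uniformity in the bootstrap step. Specifically, I must show that the conditional Lyapunov ratio and the sample-covariance error vanish uniformly over \(\mathcal P^U\), and that conditional bounded-Lipschitz convergence holds uniformly in probability even when \(\Sigma_R(P)\) is singular. I will handle this by working along arbitrary sequences \(P_n\) and subsequences with convergent covariances. I will use the bounded \(2+\delta\) moment to get a uniform Markov bound on \(n^{-1-\delta/2}\sum\|R_h\|^{2+\delta}\), and a uniform \(L^{1+\delta/2}\) weak law (von Bahr–Esseen) for the sample covariance. Continuity of the Gaussian law in its covariance, in bounded-Lipschitz distance, then handles the singular case.
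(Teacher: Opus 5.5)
Your proposal is correct and takes the same route as the paper's proof: a uniform stratified CLT for the stacked cluster contributions (replacing $n_s/n$ by $\pi_s$ at cost $o(n^{-1/2})$), a multiplier-bootstrap approximation in which within-stratum normalization only centers the multipliers up to a factor tending to one, and a uniform delta method through the ratio map, which is smooth because the denominators are bounded away from zero; your sequence-and-subsequence argument fills in the uniformity that the paper only asserts. One point should be stated explicitly: in the unstratified case condition (iii) does not normalize the weights, so your centered representation of $\sqrt n(\bar R_n^*-\bar R_n)$ is off by the non-negligible term $\sqrt n(\bar W-1)\bar R_n$ for raw weights with sample mean $\bar W$, but this is harmless because every cell ratio, and hence $\hat\Pi_n^*$, is unchanged when all weights are divided by $\bar W$, so you may normalize without loss of generality.
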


\begin{proof}
For record-level data, the cluster contributions can be written as
\[
U_{hkt}:=\sum_{i\in\mathcal C_h}\omega_iY_i
\mathbbm 1\{G_i=k,T_i=t\},
\qquad
V_{hkt}:=\sum_{i\in\mathcal C_h}\omega_i
\mathbbm 1\{G_i=k,T_i=t\},
\]
where \(\mathcal C_h\) contains every record in cluster \(h\), while
\((G_i,T_i)\), \(Y_i\), and \(\omega_i\) are record \(i\)'s unit and period
labels, outcome, and sampling weight, and \(\mathbbm 1\{\cdot\}\) is the
indicator function. After incorporating the fixed limiting stratum shares,
the corresponding population ratio is the ratio of the average expectations
of \(U_{hkt}\) and \(V_{hkt}\).

Write \(\pi_{s,n}=n_s/n\) and let \(\bar R_{s,n}\) be the sample mean of
\(R_h\) within stratum \(s\). The vector of aggregate cluster contributions is
a sum of the form \(\sum_s\pi_{s,n}\bar R_{s,n}\), and the stated rate permits
replacement of \(\pi_{s,n}\) by \(\pi_s\) at \(o(n^{-1/2})\). Because the
dimension and number of strata are fixed, independence, the nonvanishing
stratum shares, and the uniform \(2+\delta\) moment bound give a uniform
stratified triangular-array Gaussian approximation for the vector of
sample-average cluster contributions. The cluster multiplier bootstrap
consistently reproduces that approximation and its covariance. Within-stratum
normalization only replaces each multiplier by its deviation from the stratum
average, up to a factor converging uniformly to one, and therefore preserves
the stratified first-order law.

The lower bound on the population denominators makes the map from average
cluster contributions to the cell ratios continuously differentiable on a
common neighborhood. The additional map from those ratios to \(\Pi\) has the
same property by assumption. The uniform delta method, applied to the sample
and recomputed bootstrap vectors, gives the Gaussian, bootstrap, and covariance
approximations in Assumption~\ref{ass:sampling}
\citep{vandervaart1998asymptotic,vandervaartwellner1996weak,
kosorok2008introduction}.
\end{proof}

\section{Technical Proofs}
\label{app:technical_proofs}

\subsection{Identification and Candidate Representation}
\label{app:candidate_system_representation_proofs}

For \(w\in\Delta^{K-2}\), write
\[
C_t(w)
:=
\Delta\mu_t^1(\mathbf d^0)
-\sum_{k=2}^K w_k\Delta\mu_t^k(\mathbf d^0),
\]
and
\[
C_T^{\mathrm{obs}}(w)
:=
\Delta\mu_T^1(\mathbf d^1)
-\sum_{k=2}^K w_k\Delta\mu_T^k(\mathbf d^1).
\]
Define
\[
a_L(w):=\frac{L}{T_0-1}\sum_{t=2}^{T_0}|C_t(w)|.
\]

\begin{example}[Factor-span derivation and observed-period counterpart]
\label{ex:factor_comparison_validity}
Suppose untreated first differences satisfy
\[
\Delta\mu_t^k(\mathbf d^0)=\alpha_t+f_t^\top h_k,
\qquad t=2,\ldots,T_0,T,\quad k=1,\ldots,K,
\]
where \(\alpha_t\) is a common increment and \(f_t,h_k\in\mathbb R^r\). For
\(w\in\Delta^{K-2}\), let
\[
\delta_h(w):=h_1-\sum_{k=2}^K w_k h_k.
\]
Convex weights eliminate the common increment, so the untreated comparison
gap in period \(t\) is \(f_t^\top\delta_h(w)\). Let
\(F_{\mathrm{pre}}=[f_2,\ldots,f_{T_0}]\). If
\(f_T=F_{\mathrm{pre}}\gamma\), then, simultaneously for every \(w\),
\[
|f_T^\top\delta_h(w)|
\le
\|\gamma\|_\infty
\sum_{t=2}^{T_0}|f_t^\top\delta_h(w)|.
\]
Hence Assumption~\ref{ass:comparison_validity} holds whenever
\(L\ge(T_0-1)\|\gamma\|_\infty\). A sufficient condition for the span
requirement is \(T_0-1\ge r\) and
\(\operatorname{rank}(F_{\mathrm{pre}})=r\). Under that condition, exact
pre-treatment fit implies a zero post-treatment latent gap. The smallest
\(\|\gamma\|_\infty\) over representations of \(f_T\) gives the tightest
factor-model bound.

The same argument gives the observed-period counterpart in
Definition~\ref{def:observed_period_calibration}. Hold out
\(\ell\in\{2,\ldots,T_0\}\) and let
\[
F_{-\ell}
:=
[f_2,\ldots,f_{\ell-1},f_{\ell+1},\ldots,f_{T_0}].
\]
If \(f_\ell=F_{-\ell}\gamma_\ell\), then
\[
|f_\ell^\top\delta_h(w)|
\le
\|\gamma_\ell\|_\infty
\sum_{\substack{t=2\\t\ne\ell}}^{T_0}
|f_t^\top\delta_h(w)|
\quad\text{for every }w,
\]
and hence
\[
L_\ell^{\mathrm{pl}}
\le
(T_0-2)
\inf_{\gamma_\ell:\,f_\ell=F_{-\ell}\gamma_\ell}
\|\gamma_\ell\|_\infty.
\]
Thus the latent threshold and the observed-period indices measure factor
movements in the same sup-norm representation. When the comparison-loading
directions contain a neighborhood of zero in \(\mathbb R^r\), norm duality
makes the last bound exact.
\end{example}

\paragraph{Admissible restriction templates.}
The three anchoring roles described after
Proposition~\ref{prop:exposure_anchoring} admit simple finite representations.
Outcome support for the treated unit gives
\[
\mu_T^1(\mathbf d^1)-\overline y_1
\le\tau\le
\mu_T^1(\mathbf d^1)-\underline y_1
\]
when its no-policy mean lies in
\([\underline y_1,\overline y_1]\). An aggregate band
\[
\underline b
\le a_\tau\tau+a_s^\top s\le
\overline b
\]
has two-sided anchoring power when
\(a_\tau+a_s^\top\mathbbm 1_{K-1}\ne0\). Coordinate bounds
\(|s_k|\le\overline s_k\) and gross budgets
\[
\sum_{k=2}^K q_k|s_k|\le\overline S,
\qquad q_k>0,
\]
also block both common-shift rays and have standard linear lifts. A single
sign or aggregate inequality may block only one ray. By contrast, ordering
restrictions such as \(s_k\ge s_\ell\), and network or exposure rows whose
coefficients sum to zero, are shift invariant and do not anchor the common
level by themselves. Stacking any prespecified collection of these rows gives
the finite representation in
Assumption~\ref{ass:finite_admissible_representation}.

\begin{lemma}[Exact comparison equivalence]
\label{lem:exact_comparison_equivalence}
Let \(W\) stack a finite collection of convex comparison weights as its rows.
If \((\tau,s)\) satisfies the associated exact comparison system, then its
complete observational equivalence class is
\[
(\tau,s)
+
\left\{
\left(\delta,\delta\mathbbm 1_{K-1}+v\right):
\delta\in\mathbb R,\ v\in\ker(W)
\right\},
\]
where \(\ker(W):=\{v\in\mathbb R^{K-1}:Wv=0\}\).
\end{lemma}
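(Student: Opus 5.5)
The plan is to reduce observational equivalence to the exact comparison system itself and then solve a linear equation. Observed data determine the post-treatment contrasts \(C_T^{\mathrm{obs}}(w)\) for each row \(w\) of \(W\). Under exact validity, the latent no-policy gap in \eqref{eq:contaminated_accounting} vanishes for every such row, so each row imposes \(C_T^{\mathrm{obs}}(w)=\tau-w^\top s\). Stacking these rows, the exact comparison system reads
\[
W\mathbbm 1_{K-1}\tau-Ws=c,
\]
with \(c\) the observed vector. Since every row of \(W\) is a convex weight, \(W\mathbbm 1_{K-1}=\mathbbm 1\), and the system becomes \(\tau\mathbbm 1-Ws=c\). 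I will therefore define the observational equivalence class of \((\tau,s)\) as the set of \((\tau',s')\) that satisfy the same system with the same \(c\). The lemma then reduces to describing the solution set of an affine equation through a known solution.

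Fix a solution \((\tau,s)\) and let \((\tau',s')=(\tau+\delta,s+d)\) be another pair. Subtracting the two systems shows that \((\tau',s')\) is a solution if and only if \(\delta\mathbbm 1-Wd=0\). There are two inclusions to check.

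\textbf{The displayed set lies in the class.} Take \(d=\delta\mathbbm 1_{K-1}+v\) with \(v\in\ker(W)\). Then
\[
Wd=\delta W\mathbbm 1_{K-1}+Wv=\delta\mathbbm 1,
\]
so \(\delta\mathbbm 1-Wd=0\) and the shifted pair is a solution.

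\textbf{The class lies in the displayed set.} Suppose \(\delta\mathbbm 1=Wd\). Set \(v:=d-\delta\mathbbm 1_{K-1}\). Then
\[
Wv=Wd-\delta\mathbbm 1=0,
\]
so \(v\in\ker(W)\) and \(d=\delta\mathbbm 1_{K-1}+v\), as required.

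The only step that needs care is the first one: the meaning of "observational equivalence" in the statement. If equivalence is read as "consistent with the observed data under the maintained exact system", no further restriction on the pre-treatment data is needed. The reason is that the no-policy post-treatment means of the comparison units can be completed as \(\mu_T^k(\mathbf d^1)-s_k'\), and the treated unit's as \(\mu_T^1(\mathbf d^1)-\tau'\). The completed means then automatically satisfy both the accounting identities and exact validity, because those reduce exactly to the linear system above. I will state this completion explicitly so that every element of the displayed set is attained by some completion of the potential means. The rest is linear algebra.
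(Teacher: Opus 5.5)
Your proof is correct and follows the paper's argument. Both define observational equivalence as generating the same stacked contrasts $W(\tau\mathbbm 1_{K-1}-s)$. Both set $\delta=\widetilde\tau-\tau$ and $v=(\widetilde s-s)-\delta\mathbbm 1_{K-1}$, and use $W\mathbbm 1_{K-1}=\mathbbm 1$ to show that the difference condition is exactly $v\in\ker(W)$. Your explicit completion of the no-policy means is a harmless addition that the paper leaves implicit; it mirrors the construction in the proof of Proposition~\ref{prop:treatment_effect_projection}.
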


\begin{proof}
Two pairs \((\tau,s)\) and \((\widetilde\tau,\widetilde s)\) generate the
same exact comparison contrasts if and only if
\[
W\left\{
(\widetilde\tau-\tau)\mathbbm 1_{K-1}
-(\widetilde s-s)
\right\}=0.
\]
Set \(\delta=\widetilde\tau-\tau\) and
\(v=(\widetilde s-s)-\delta\mathbbm 1_{K-1}\). The display is equivalent to
\(v\in\ker(W)\), giving the stated class. Conversely, every pair in that
class leaves \(W(\tau\mathbbm 1_{K-1}-s)\) unchanged.
\end{proof}

\begin{proposition}[Interior comparison restriction gain]
\label{prop:comparison_multiplicity}
Let \(e_k\) denote the simplex vertex placing all weight on donor \(k\).
For every \(w\in\Delta^{K-2}\),
\[
a_L(w)\le\sum_{k=2}^K w_k a_L(e_k).
\]
The comparison inequality indexed directly by \(w\) is therefore weakly
tighter than the inequality obtained by convexly aggregating the donor-vertex
rows. If the display is strict, the interior row is not implied by the vertex
rows and strictly reduces the comparison-validity region. Whether it strictly
reduces the treatment effect projection additionally depends on
\(\Omega(\Pi_0)\).
\end{proposition}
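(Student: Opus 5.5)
The plan is to establish the displayed inequality first, by linearity of \(C_t\) in \(w\) and the triangle inequality. Then I will deduce the two interpretive claims directly from the definition of \(\mathcal X_L(\Pi_0)\) in \eqref{eq:relative_effect_set}.

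\emph{The inequality.} For each \(t\), the map \(w\mapsto C_t(w)\) is affine on the simplex. Since \(\sum_k w_k=1\), we can write
\[
C_t(w)=\sum_{k=2}^K w_k\{\Delta\mu_t^1(\mathbf d^0)-\Delta\mu_t^k(\mathbf d^0)\}=\sum_{k=2}^K w_kC_t(e_k).
\]
Because \(w_k\ge0\), the triangle inequality gives \(|C_t(w)|\le\sum_k w_k|C_t(e_k)|\). Summing over \(t=2,\ldots,T_0\), multiplying by \(L/(T_0-1)\ge0\), and interchanging the finite sums yields \(a_L(w)\le\sum_k w_k a_L(e_k)\). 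The same affine identity holds for \(C_T^{\mathrm{obs}}(w)=\sum_k w_kC_T^{\mathrm{obs}}(e_k)\), and likewise \(w^\top x=\sum_k w_k x_k\).

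\emph{Weak tightness.} Take the vertex rows \(|C_T^{\mathrm{obs}}(e_k)-x_k|\le a_L(e_k)\), multiply each by \(w_k\), and sum. The aggregated row reads
\[
|C_T^{\mathrm{obs}}(w)-w^\top x|\le\sum_k w_k|C_T^{\mathrm{obs}}(e_k)-x_k|\le\sum_k w_ka_L(e_k).
\]
Its right side weakly exceeds \(a_L(w)\). So the row indexed by \(w\) implies the aggregated row, and the region it defines is contained in the aggregated one.

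\emph{Strict reduction.} This is the step that needs an explicit construction. Suppose \(a_L(w)<\sum_k w_ka_L(e_k)\). I will exhibit an \(x\) that satisfies every vertex row but violates the \(w\) row. Set \(x_k:=C_T^{\mathrm{obs}}(e_k)-a_L(e_k)\) for all \(k\), so each vertex row holds with equality on one side. Then
\[
C_T^{\mathrm{obs}}(w)-w^\top x=\sum_k w_ka_L(e_k)>a_L(w),
\]
so the \(w\) row fails. Hence the vertex-only region strictly contains the region after adding the \(w\) row, and the full-simplex region \(\mathcal X_L(\Pi_0)\) is strictly smaller than the vertex-only region. The one subtlety is the phrase ``not implied by the vertex rows'': it means not implied by the finite vertex system. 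The construction shows this directly, because the constructed \(x\) satisfies every vertex row.

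\emph{Projection.} The final sentence follows from Proposition~\ref{prop:treatment_effect_projection}. The projection takes the union over \(x\in\mathcal X_L(\Pi_0)\) of the admissible \(\tau^c\) sets. Removing the points \(x\) that the \(w\) row excludes changes \(\Theta_\tau(\Pi_0;L)\) only if some removed \(x\) was the sole support of some \(\tau^c\) through \(\Omega(\Pi_0)\). So no strict conclusion about the projection is claimed.
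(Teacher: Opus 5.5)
Your proposal is correct and follows the paper's proof essentially step for step. It uses the triangle inequality on \(C_t(w)=\sum_k w_kC_t(e_k)\), convex aggregation of the vertex rows, and the same extremal point \(x_k=C_T^{\mathrm{obs}}(e_k)-a_L(e_k)\), which is the paper's choice \(s_k=a_L(e_k)-C_T^{\mathrm{obs}}(e_k)+\tau\) written in relative-effect coordinates.
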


\begin{proof}
For every pre-treatment period,
\[
C_t(w)
=
\sum_{k=2}^K w_k
\left\{
\Delta\mu_t^1(\mathbf d^0)-\Delta\mu_t^k(\mathbf d^0)
\right\}.
\]
The triangle inequality, summed over periods, proves the displayed bound.
Moreover,
\[
C_T^{\mathrm{obs}}(w)-\tau+w^\top s
=
\sum_{k=2}^K w_k
\{C_T^{\mathrm{obs}}(e_k)-\tau+s_k\}.
\]
Thus the vertex inequalities imply the interior inequality with right-hand
side \(\sum_k w_k a_L(e_k)\). If the fit inequality is strict, choose
\(s_k=a_L(e_k)-C_T^{\mathrm{obs}}(e_k)+\tau\). The positive-side vertex rows
then hold at equality, and the negative-side vertex rows also hold because
\(a_L(e_k)\ge0\). Their convex aggregate violates the tighter interior row.
Hence that row strictly reduces the comparison-validity region.
\end{proof}

\phantomsection
\label{proof:treatment_effect_projection}
\begin{proof}[Proof of Proposition~\ref{prop:treatment_effect_projection}]
Fix \(\tau^c\in\mathbb R\). Suppose first that
\(\mathsf{Comp}(\Pi_0,\tau^c)=1\). For the corresponding maintained-model
completion, set \(x=\tau^c\mathbbm 1_{K-1}-s\). Comparison validity gives
\(x\in\mathcal X_L(\Pi_0)\), and admissibility gives
\((\tau^c,\tau^c\mathbbm 1_{K-1}-x)\in\Omega(\Pi_0)\). This proves the forward direction
of \eqref{eq:sharp_completion_general}.

Conversely, suppose that some \(x\in\mathcal X_L(\Pi_0)\) satisfies
\((\tau^c,\tau^c\mathbbm 1_{K-1}-x)\in\Omega(\Pi_0)\), and set
\(s=\tau^c\mathbbm 1_{K-1}-x\). Define counterfactual post-treatment differences by
\[
\widetilde{\Delta\mu}_T^1(\mathbf d^0)
=
\Delta\mu_T^1(\mathbf d^1)-\tau^c,
\qquad
\widetilde{\Delta\mu}_T^k(\mathbf d^0)
=
\Delta\mu_T^k(\mathbf d^1)-s_k,
\]
and set
\[
\widetilde\mu_T^k(\mathbf d^0)
=
\mu_{T_0}^k(\mathbf d^0)
+\widetilde{\Delta\mu}_T^k(\mathbf d^0),
\qquad k=1,\ldots,K.
\]
Retain the observed values as the potential means under \(\mathbf d^1\).
Because \(x\in\mathcal X_L(\Pi_0)\), these means satisfy every comparison
validity inequality. Membership in \(\Omega(\Pi_0)\) ensures that all
prespecified admissible restrictions, including outcome support when imposed,
also hold. Potential means under assignments other than \(\mathbf d^0\) and
\(\mathbf d^1\) can be completed arbitrarily because they do not enter the
model. This constructs a maintained-model completion with effect \(\tau^c\), proving
the reverse direction and the stated attainment result for
\eqref{eq:sharp_completion_general}.
The identified-set formula \eqref{eq:sharp_anchor_union} follows by collecting
all compatible candidate values.
\end{proof}

\phantomsection
\label{proof:finite_comparison_bridge}
\begin{proof}[Proof of Lemma~\ref{lem:finite_comparison_bridge}]
Let \(r_k=C_T^{\mathrm{obs}}(e_k)-x_k\). For the positive signed inequality,
\[
\frac{L}{T_0-1}\sum_{t=2}^{T_0}|C_t(w)|
=
\max_{\|v\|_\infty\le L/(T_0-1)}
\sum_{t=2}^{T_0}v_tC_t(w).
\]
The simplex and the box are compact and convex, and the criterion is bilinear.
The minimax theorem of \citet{sion1958general} therefore gives
\[
\begin{aligned}
\sup_{w\in\Delta^{K-2}}
\left\{
\sum_{k=2}^K w_kr_k
-\tfrac{L}{T_0-1}\sum_{t=2}^{T_0}|C_t(w)|
\right\}
&=
\sup_w\min_{\|v\|_\infty\le L/(T_0-1)}
\sum_{k=2}^K w_k
\left\{r_k-\sum_{t=2}^{T_0}v_tC_t(e_k)\right\}\\
&=
\min_{\|v\|_\infty\le L/(T_0-1)}
\max_{2\le k\le K}
\left\{r_k-\sum_{t=2}^{T_0}v_tC_t(e_k)\right\}.
\end{aligned}
\]
The positive signed inequalities hold for every \(w\) if and only if the last
display is nonpositive, which is equivalent to
\[
r_k\le\sum_{t=2}^{T_0}v_t^+C_t(e_k),
\qquad k=2,\ldots,K,
\]
for some
\[
v^+\in
\left[-\frac{L}{T_0-1},\frac{L}{T_0-1}\right]^{T_0-1}.
\]
Applying the same argument to \(-r\) gives
\[
\sum_{t=2}^{T_0}v_t^-C_t(e_k)\le r_k,
\qquad k=2,\ldots,K,
\]
for some
\[
v^-\in
\left[-\frac{L}{T_0-1},\frac{L}{T_0-1}\right]^{T_0-1}.
\]
Combining the two statements proves the
lemma.
\end{proof}

\phantomsection
\label{proof:candidate_tau_linear_system}
\begin{proof}[Proof of Theorem~\ref{prop:candidate_tau_linear_system}]
Fix \(\tau^c\in\mathbb R\). By
Proposition~\ref{prop:treatment_effect_projection}, compatibility is equivalent
to the existence of an \(x\) satisfying \eqref{eq:sharp_completion_general}.
Lemma~\ref{lem:finite_comparison_bridge} replaces
\(x\in\mathcal X_L(\Pi_0)\) exactly by the existence of
\[
v^-,v^+
\in
\left[-\frac{L}{T_0-1},\frac{L}{T_0-1}\right]^{T_0-1}
\]
such that
\[
\sum_{t=2}^{T_0}v_t^-C_t(e_k)
\le C_T^{\mathrm{obs}}(e_k)-x_k
\le\sum_{t=2}^{T_0}v_t^+C_t(e_k),
\qquad k=2,\ldots,K.
\]
These are the comparison and box rows in
\eqref{eq:bridge_candidate_master_system}.
The finite representation of the admissible rule supplies \(u\), and
substituting
\(s=\tau^c\mathbbm 1_{K-1}-x\) gives
\[
A_\tau(\Pi_0)\tau^c
+A_s(\Pi_0)s+A_u u
=
-A_s(\Pi_0)x+A_u u+g\tau^c,
\]
where the last equality uses
\eqref{eq:fixed_common_shift_loading}. Rearranging gives the last block of
\eqref{eq:bridge_candidate_master_system}. Reversing these steps shows that
every solution of the finite system supplies an \(x\) satisfying
\eqref{eq:sharp_completion_general}. Therefore the general compatibility
condition is equivalent to
\(\mathcal F(\Pi_0,\tau^c)\ne\varnothing\), proving
\eqref{eq:sharp_completion_finite_solvability}.
\end{proof}

\phantomsection
\label{proof:exposure_anchoring}
\begin{proof}[Proof of Proposition~\ref{prop:exposure_anchoring}]
Let \(\mathcal C_L(\Pi_0)\) denote the set of pairs \((\tau,s)\) that satisfy
the full-simplex comparison restrictions. Lemma
\ref{lem:finite_comparison_bridge} gives this set a finite polyhedral
representation, and Assumption~\ref{ass:finite_admissible_representation}
makes \(\Omega(\Pi_0)\) a polyhedron. Their intersection is nonempty under the
maintained model.

A vector \(h=(h_\tau,h_s)\) is a recession direction of
\(\mathcal C_L(\Pi_0)\) if, for every feasible \((\tau,s)\) and every
\(a\ge0\), the pair
\((\tau+a h_\tau,s+a h_s)\) remains in \(\mathcal C_L(\Pi_0)\). At the
simplex vertex \(e_k\), comparison validity places the relative effect
\(\tau-s_k\) in a finite interval. Along direction \(h\), this relative
effect changes at rate \(h_\tau-h_{s,k}\). Remaining in that interval for
every \(a\ge0\) therefore requires
\(h_{s,k}=h_\tau\) for every donor, or
\(h_s=h_\tau\mathbbm 1_{K-1}\). Conversely, every direction of this form
leaves all relative effects unchanged and is therefore a recession direction.
Thus
\[
\operatorname{rec}\{\mathcal C_L(\Pi_0)\}
=
\operatorname{span}\{\iota\}.
\]

For the nonempty intersection of the two polyhedra,
\[
\operatorname{rec}\{\mathcal C_L(\Pi_0)\cap\Omega(\Pi_0)\}
=
\operatorname{span}\{\iota\}
\cap\operatorname{rec}\{\Omega(\Pi_0)\}.
\]
The treatment effect is bounded above if and only if this cone contains no
direction with \(h_\tau>0\), which is equivalent to
\(\iota\notin\operatorname{rec}\{\Omega(\Pi_0)\}\). The lower-bound result
follows by reversing the sign. Both bounds are finite if and only if the
displayed recession cone is \(\{0\}\), proving the final claim.
\end{proof}

\begin{remark}[A fit-independent additive floor]
\label{rem:additive_floor_extension}
A relaxation adds a prespecified floor
\(\delta_{\mathrm f}\ge0\) to the comparison allowance:
\[
\left|C_T^{\mathrm{obs}}(w)-w^\top x\right|
\le
\delta_{\mathrm f}
+\frac{L}{T_0-1}\sum_{t=2}^{T_0}|C_t(w)|.
\]
For the full simplex, the same minimax argument as in
Lemma~\ref{lem:finite_comparison_bridge} replaces
\eqref{eq:finite_comparison_bridge} by
\[
\sum_{t=2}^{T_0}v_t^-C_t(e_k)-\delta_{\mathrm f}
\le C_T^{\mathrm{obs}}(e_k)-x_k
\le
\sum_{t=2}^{T_0}v_t^+C_t(e_k)+\delta_{\mathrm f},
\qquad k=2,\ldots,K,
\]
with the same box constraints on \(v^-\) and \(v^+\). Hence this floor also
preserves a finite candidate representation. Holding \(L\), the donor pool,
fit score, and admissible rule fixed, increasing \(\delta_{\mathrm f}\) weakly
enlarges the population compatibility set.
The uniform inference result in the main text is stated and verified for the
prespecified rows imposing validity over the full simplex and a zero floor.
\end{remark}

\begin{lemma}[Bounded candidate representations]
\label{lem:bounded_feasible_nuisance}
Suppose Assumptions~\ref{ass:comparison_validity} and
\ref{ass:admissible_rule}--\ref{ass:finite_admissible_representation} hold.
Fix compact sets
\(\mathcal K_\Pi\) in the input domain and \(\mathcal T\subset\mathbb R\), and
hold \(L<\infty\) and the lift of the admissible set fixed. Whenever
\(\mathcal F(\Pi,\tau)\ne\varnothing\), let \(\eta^0(\Pi,\tau)\) be its
minimum-norm element. Then
\[
\sup_{\substack{(\Pi,\tau)\in\mathcal K_\Pi\times\mathcal T:\\
\mathcal F(\Pi,\tau)\ne\varnothing}}
\|\eta^0(\Pi,\tau)\|<\infty.
\]
\end{lemma}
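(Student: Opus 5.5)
The plan is to exhibit, for every feasible pair \((\Pi,\tau)\in\mathcal K_\Pi\times\mathcal T\), \emph{some} element of \(\mathcal F(\Pi,\tau)\) whose norm is bounded by a constant that does not depend on \((\Pi,\tau)\). The minimum-norm element \(\eta^0(\Pi,\tau)\) exists and is unique because \(\mathcal F(\Pi,\tau)\) is a nonempty closed convex polyhedron. It then has norm no larger than that constant, which gives the claim. I would bound the blocks of \(\eta=(x^\top,u^\top,(v^-)^\top,(v^+)^\top)^\top\) separately.

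\textbf{Step 1 (the certificate blocks).} The box rows in \eqref{eq:bridge_candidate_master_system} give \(\|v^\pm\|_\infty\le L/(T_0-1)\) for every feasible \(\eta\). Since \(L\) is fixed, these blocks are uniformly bounded.

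\textbf{Step 2 (the relative effects).} The first two blocks of rows give, for every feasible \(\eta\) and every \(k\),
\[
C_T^{\mathrm{obs}}(e_k)-\sum_{t=2}^{T_0}v_t^+C_t(e_k)\le x_k\le C_T^{\mathrm{obs}}(e_k)-\sum_{t=2}^{T_0}v_t^-C_t(e_k).
\]
The coefficients \(C_t(e_k)\) and \(C_T^{\mathrm{obs}}(e_k)\) are affine in \(\Pi\), so they are uniformly bounded on the compact set \(\mathcal K_\Pi\). Combined with Step 1, this gives \(\|x\|\le M_x\) for every feasible \(\eta\), with \(M_x\) independent of \((\Pi,\tau)\). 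In particular, the \(x\)-block of any feasible point is already bounded, so no selection is needed for \(x\) or \(v^\pm\).

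\textbf{Step 3 (the lift; the main obstacle).} The auxiliary vector \(u\) is the only block that can be unbounded: a lift may have recession directions, as the footnote after Proposition~\ref{prop:exposure_anchoring} notes. I would therefore fix any feasible \(\eta\) and keep its \((x,v^-,v^+)\) blocks. The remaining requirement on \(u\) is
\[
A_u u\le c(\Pi,\tau,x):=a(\Pi)-g\tau+A_s(\Pi)x,
\]
and this system is feasible because \(\eta\) is. The point is that \(A_u\) is \emph{fixed} by Assumption~\ref{ass:finite_admissible_representation}, so Hoffman's error bound applies with a single constant \(H(A_u)\). Applying it at the point \(u=0\) yields a feasible \(\tilde u\) with
\[
\|\tilde u\|=\operatorname{dist}\{0,\{u:A_uu\le c\}\}\le H(A_u)\,\|[-c]_+\|.
\]
Because \(a(\cdot)\) and \(A_s(\cdot)\) are affine on the compact \(\mathcal K_\Pi\), \(\tau\) ranges over the compact \(\mathcal T\), and \(\|x\|\le M_x\) by Step 2, the vector \(c\) is uniformly bounded. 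Hence \(\|\tilde u\|\le M_u\), with \(M_u\) independent of \((\Pi,\tau)\).

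\textbf{Conclusion.} Replacing \(u\) by \(\tilde u\) in \(\eta\) gives a point \(\tilde\eta\in\mathcal F(\Pi,\tau)\). Its norm is bounded by a constant built from \(M_x\), \(M_u\) and the box bound in Step 1. Therefore \(\|\eta^0(\Pi,\tau)\|\le\|\tilde\eta\|\) is bounded uniformly over all feasible \((\Pi,\tau)\in\mathcal K_\Pi\times\mathcal T\).

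I expect the only delicate point to be Step 3. If the lift matrix were allowed to depend on \(\Pi\), the Hoffman constant could blow up, and that is precisely what the fixed-\(A_u\) clause of Assumption~\ref{ass:finite_admissible_representation} rules out.
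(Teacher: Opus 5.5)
Your proposal is correct and follows the paper's proof essentially step for step. The box rows bound \(v^\pm\), the donor-vertex comparison rows bound \(x\) uniformly on \(\mathcal K_\Pi\), Hoffman's bound for the fixed lift matrix \(A_u\) gives a uniformly bounded feasible \(u\), and minimum-norm optimality finishes the argument; the only cosmetic differences are that you bound \(x\) directly from the \(v^\pm\)-rows rather than through the implied absolute-value inequality at each vertex, and you write out the Hoffman step explicitly at \(u=0\).
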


\begin{proof}
Because the finitely many \(C_T^{\mathrm{obs}}(e_k)\) and \(C_t(e_k)\) are
affine in \(\Pi\), compactness of \(\mathcal K_\Pi\) bounds them uniformly.
Because the finite bridge includes every donor vertex \(e_k\), feasibility of
its comparison rows implies
\[
|C_T^{\mathrm{obs}}(e_k)-x_k|
\le
\frac{L}{T_0-1}
\sum_{t=2}^{T_0}|C_t(e_k)|.
\]
It therefore bounds every coordinate of any compatible vector of relative
effects \(x\). The box constraints in
\eqref{eq:bridge_candidate_master_system} directly give
\[
\|v^-\|_\infty\le \frac{L}{T_0-1},
\qquad
\|v^+\|_\infty\le \frac{L}{T_0-1}.
\]
Compactness of \(\mathcal T\) consequently also bounds the implied spillover
vector \(s=\tau\mathbbm 1_{K-1}-x\).

Conditional on bounded \((\Pi,\tau,x)\), the remaining rows defining the
admissible set form a feasible linear system in \(u\) with fixed coefficient
matrix \(A_u\).
Affinity and compactness bound \(A_s(\Pi)\) and \(a(\Pi)\), so its right-hand
side ranges over a compact set. Hoffman's error bound for this fixed auxiliary
matrix \citep{hoffman1952approximate} implies
that the minimum norm feasible selection of \(u\) is uniformly bounded over
all nonempty systems in that right-hand-side set. Combining this selection
with the bounded \(x,v^-\), and \(v^+\) coordinates gives a feasible nuisance
vector whose norm is bounded by a common constant.

For each compatible pair, \(\mathcal F(\Pi,\tau)\) is a nonempty closed convex
polyhedron. The strictly convex objective \(\|\eta\|^2\) therefore has a unique
minimizer, whose norm cannot exceed that of the bounded feasible vector just
constructed. This proves the uniform bound.
\end{proof}
\subsection{Estimated-System Verification}
\label{app:estimated_system_regularity}

This subsection fixes the row representation used by the test and verifies the
conditions underlying the estimated-coefficient procedure of
\citet{goff2025inference} for the systems generated by the identification
analysis. The finite part of
Theorem~\ref{prop:candidate_tau_linear_system} and
Lemma~\ref{lem:bounded_feasible_nuisance} already provide fixed dimension,
candidate-invariant input loadings, and bounded feasible nuisance
representatives. Assumption~\ref{ass:sampling}, with primitive conditions in
Lemma~\ref{lem:cluster_ratio_sampling_sufficient}, supplies the Gaussian and
bootstrap approximation.
Assumption~\ref{ass:regular_candidate_systems}(ii) localizes feasibility
stability to the compatible right-hand sides generated by the model, while
part (iii) localizes the zero-optimum variance requirement to structurally
generated, near-boundary extreme certificates. The results below establish
scale consistency, stability of the minimum-norm completion, and approximation
of the relevant dual certificates. The final proof assembles these pieces and
directly verifies the comparison and quantile-transfer steps affected by the
model-specific conditions.

For compactness, this subsection writes \(\tau\) for the candidate denoted
\(\tau^c\) in the main text. Comparison validity is imposed over the full
simplex as in Assumption~\ref{ass:comparison_validity}, the envelope value
\(L<\infty\) is fixed, and candidate values range over the fixed compact set
\(\mathcal T\).

\subsubsection{Candidate system and bootstrap objects}

\begin{definition}[Prespecified bridge candidate representation]
\label{def:prespecified_rows}
Use exactly the inequality rows displayed in
\eqref{eq:bridge_candidate_master_system}, with
\(\eta=(x^\top,u^\top,(v^-)^\top,(v^+)^\top)^\top\). Any exact admissible
equality has already been encoded by the two prespecified opposite rows in
Assumption~\ref{ass:finite_admissible_representation}. Retain the displayed
nuisance coordinates and row order, and write the resulting system as
\[
b(\Pi,\tau)-A(\Pi)\eta\le0.
\]
The same symbolic construction is used for the population, sample, and
bootstrap systems. It retains the bridge coordinates and box rows in
\eqref{eq:bridge_candidate_master_system} and preserves the additive
decomposition in \eqref{eq:bridge_candidate_affine_decomposition}. The
candidate reporting domain \(\mathcal T\) is not added as a system row.
Auxiliary-variable units are those in
Assumption~\ref{ass:finite_admissible_representation}, and no redundant rows
are added or removed after the representation is fixed.
\end{definition}

The finite row representation is affine in \(\Pi\), and the candidate has a
fixed loading on the right-hand side. Hence there is a fixed matrix
\(M_\Pi\) such that, for every \(\Pi,\Pi'\), and \(\tau\),
\begin{equation}\label{eq:bridge_candidate_affine_decomposition}
\operatorname{vec}
\left[
A(\Pi')-A(\Pi)\quad
b(\Pi',\tau)-b(\Pi,\tau)
\right]
=M_\Pi(\Pi'-\Pi).
\end{equation}
Let \(\xi(P,\tau)\), \(\hat\xi_n(\tau)\), and
\(\hat\xi_n^*(\tau)\) denote the vectorizations of
\([A(\Pi)\ b(\Pi,\tau)]\) evaluated at \(\Pi(P)\), \(\hat\Pi_n\), and
\(\hat\Pi_n^*\), respectively. Assumption~\ref{ass:sampling} therefore
implies, for every \(\tau\in\mathcal T\),
\begin{equation}\label{eq:candidate_common_perturbation}
\begin{aligned}
\sqrt n\{\hat\xi_n(\tau)-\xi(P,\tau)\}&=M_\Pi Z_{n,P},\\
\sqrt n\{\hat\xi_n^*(\tau)-\hat\xi_n(\tau)\}&=M_\Pi Z_n^*.
\end{aligned}
\end{equation}
Both perturbations are constant in \(\tau\).

This construction preserves candidate feasibility. An entry is estimated when
its affine formula has a nonzero loading on the stochastic input \(\Pi\), and
a row is estimated when it contains at least one estimated entry. Every other
entry and row is fixed. This classification is common to the population,
sample, and bootstrap systems. Positive row rescaling and row permutation are
innocuous when the row scales are transformed in the same way. General
nuisance reparameterizations, alternative lifts, and redundant-row additions
are not included in this equivalence because they can alter the minimum-norm
calibration.

Use the population objects \(\mathsf S(P)\), \(\mathcal D(P)\),
\(Q(P,\tau)\), \(\eta^0(P,\tau)\), and \(\Gamma(P,\tau)\) defined in
Section~\ref{subsec:uniform_validity}. Let \(J_e\) denote the fixed set of
rows containing at least one estimated entry, and write \(\lambda_{J_e}\)
for the corresponding coordinates of a multiplier. The population optimal
set is
\[
\Lambda_0(P,\tau)
:=
\arg\max_{\lambda\in\mathcal D(P)}
b(\Pi(P),\tau)^\top\lambda.
\]
The sample matrix \(\widehat{\mathsf S}_n\) uses bootstrap estimates of the
row scales and is common across \(\tau\) by
\eqref{eq:bridge_candidate_affine_decomposition}.

Define \(\hat{\mathcal D}_n\) by sample substitution.

Fix \(\gamma_n\to\infty\) with \(\gamma_n/\sqrt n\to0\) and define
\[
\hat r_n(\tau):=\hat Q_n(\tau)+\gamma_n n^{-1/2}.
\]
This uses the admissible specialization
\(\kappa_{1n}=\kappa_{2n}=\kappa_{3n}=\gamma_n\) of the three tuning sequences
in \citet[equations (3.9)--(3.10)]{goff2025inference}.
The minimum norm near-feasible nuisance solution is
\begin{equation}\label{eq:uniform_eta_hat}
\begin{aligned}
\hat\eta_n(\tau)
\in
\arg\min_{\eta}\|\eta\|^2
\quad\text{subject to}\quad
&\hat b_n(\tau)-\hat A_n\eta
\le
\hat r_n(\tau)\widehat{\mathsf S}_n\mathbbm 1.
\end{aligned}
\end{equation}
The corresponding near-optimal certificate set is
\begin{equation}\label{eq:uniform_near_dual_set}
\begin{aligned}
\widehat\Lambda_n^{\mathrm{near}}(\tau):=\bigl\{\lambda:\ &\lambda\ge0,\\
&-\gamma_n n^{-1/2}\mathbbm 1
\le \hat A_n^\top\lambda
\le \gamma_n n^{-1/2}\mathbbm 1,\\
&\mathbbm 1^\top\widehat{\mathsf S}_n\lambda\le1,\\
&\hat b_n(\tau)^\top\lambda
\ge \hat Q_n(\tau)-\gamma_n n^{-1/2}\bigr\}.
\end{aligned}
\end{equation}
Using the stacked system vectors in
\eqref{eq:candidate_common_perturbation}, define the candidate-invariant
bootstrap perturbation
\[
\zeta_n^*
:=\sqrt n\{\hat\xi_n^*(\tau)-\hat\xi_n(\tau)\}.
\]
Write its matrix and right-hand-side blocks as
\(\zeta_{n,\mathrm{mat}}^*\) and \(\zeta_{n,\mathrm{rhs}}^*\). The bootstrap statistic is
\begin{equation}\label{eq:uniform_bootstrap_stat}
Q_n^*(\tau)
:=
\max_{\lambda\in\widehat\Lambda_n^{\mathrm{near}}(\tau)}
\left\langle
\zeta_{n,\mathrm{rhs}}^*-
\zeta_{n,\mathrm{mat}}^*\hat\eta_n(\tau),\lambda
\right\rangle,
\end{equation}
and \(\hat c_n(\tau,1-\alpha)\) is its conditional
\((1-\alpha)\)-quantile.

Thus the system perturbation itself is computed once, but its evaluation at
the candidate-specific nuisance solution and over the candidate-specific
near-optimal certificate set remains indexed by \(\tau\). Accordingly, the
critical value remains candidate-specific. A finite-grid inversion directly
covers null candidates on that grid. Continuum coverage requires continuum
inversion or an explicit enlargement for grid and numerical error.

\subsubsection{Scale normalization and primal completion}

\begin{lemma}[Estimated-entry scale consistency]
\label{lem:row_scale_consistency}
Under Assumption~\ref{ass:sampling} and part (i) of
Assumption~\ref{ass:regular_candidate_systems}, every
estimated entry scale lies in \([\underline d,\bar d]\), fixed entries have
scale zero, and these scales do not depend on \(\tau\). The bootstrap estimator
of the entry scales is uniformly consistent. The induced row normalizer
\(\widehat{\mathsf S}_n\) is therefore uniformly consistent for
\(\mathsf S(P)\).
\end{lemma}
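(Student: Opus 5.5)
The argument splits into a deterministic part about the population scales and a stochastic part about the bootstrap scale estimator.

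\textbf{Population scales.} I would first describe each entry scale through the fixed affine structure in \eqref{eq:bridge_candidate_affine_decomposition}. Each entry of \([A(\Pi)\ b(\Pi,\tau)]\) has the form \(m_j^\top\Pi+c_j(\tau)\), where \(m_j\) is the \(j\)th row of \(M_\Pi\). The candidate enters only through the constant \(c_j(\tau)=-g_j\tau\) or zero. The \(\sqrt n\)-scaled estimation error of that entry is therefore \(m_j^\top Z_{n,P}\), and its asymptotic variance is \(m_j^\top\Sigma_\Pi(P)m_j\), which does not involve \(\tau\).

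An entry is fixed exactly when \(m_j=0\). In that case the variance is zero, so fixed entries have scale zero. For estimated entries, part (i) of Assumption~\ref{ass:regular_candidate_systems} gives the lower bound \(\underline d\). The upper bound \(\bar d\) follows from \(\|m_j\|\le\|M_\Pi\|\) together with the uniform covariance bound in Assumption~\ref{ass:sampling}. Since \(\sigma_r(P)\) is the maximum of these entry scales over the finitely many entries in row \(r\), the same bounds, the zero pattern, and the \(\tau\)-invariance carry over to \(\mathsf S(P)\).

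\textbf{Consistency of the bootstrap scales.} The bootstrap estimate of entry \(j\)'s scale is the conditional standard deviation of \(m_j^\top Z_n^*\), namely \((m_j^\top\widehat\Sigma_n^*m_j)^{1/2}\). Here \(\widehat\Sigma_n^*\) is the conditional covariance of \(Z_n^*\), which Assumption~\ref{ass:sampling} makes uniformly consistent in probability for \(\Sigma_\Pi(P)\). The argument then proceeds in three steps.
\begin{enumerate}
\item Bound the variance error: \(|m_j^\top(\widehat\Sigma_n^*-\Sigma_\Pi(P))m_j|\le\|M_\Pi\|^2\|\widehat\Sigma_n^*-\Sigma_\Pi(P)\|\), uniformly over \(P\in\mathcal P^U\).
\item For estimated entries, pass from variances to standard deviations using \(|\sqrt a-\sqrt b|\le|a-b|/\sqrt b\) with \(b\ge\underline d^2\).
\item For fixed entries, the bootstrap perturbation \(m_j^\top Z_n^*\) is identically zero, so the estimated scale is exactly zero.
\end{enumerate}
Finally, the map from entry scales to row scales takes maxima over finitely many entries, and this map is 1-Lipschitz in sup norm. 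Hence \(\|\widehat{\mathsf S}_n-\mathsf S(P)\|\to0\) in probability uniformly over \(\mathcal P^U\), and the zero pattern is preserved exactly.

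\textbf{Main obstacle.} The delicate step is the claim that Assumption~\ref{ass:sampling} delivers convergence of the bootstrap second moments, and not only convergence in bounded-Lipschitz distance. Weak convergence alone does not imply convergence of second moments. I would read the phrase ``consistently reproduces this law and covariance'' in Assumption~\ref{ass:sampling} as giving covariance consistency directly. If only convergence in law were available, I would fall back on the \(2+\delta\) moment bounds in Lemma~\ref{lem:cluster_ratio_sampling_sufficient}. Those bounds give uniform integrability of \(\|Z_n^*\|^2\), and uniform integrability upgrades weak convergence to convergence of second moments.
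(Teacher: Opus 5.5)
Your proposal is correct and follows the paper's own argument. Both proofs use the fixed affine loadings \(m_j\), which do not depend on \(\tau\). The population variance \(m_j^\top\Sigma_\Pi(P)m_j\) is bounded below by part (i) and above by the uniform covariance bound. The bootstrap variance \(m_j^\top\operatorname{Var}^*(Z_n^*)m_j\) is consistent by Assumption~\ref{ass:sampling}, which you read, as the paper does, as giving covariance consistency directly. Row scales are maxima over finitely many entries; your square-root inequality and the Lipschitz property of the maximum simply make explicit what the paper leaves implicit.
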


\begin{proof}
By Assumption~\ref{ass:finite_admissible_representation} and
\eqref{eq:bridge_candidate_affine_decomposition}, each coordinate of
\(\xi(P,\tau)\) is affine in \(\Pi(P)\), with a loading that is independent of
\(\tau\). The bridge rows in
\eqref{eq:bridge_candidate_master_system} and the affine representation of
\(\Omega(\Pi)\) identify these fixed loadings. Fixed coordinates have zero
loading by construction.
Part (i) of Assumption~\ref{ass:regular_candidate_systems} gives the uniform
lower bound for estimated-entry scales. If \(m_j\) is the fixed loading of
entry \(j\) on \(\Pi\), its population variance is
\(m_j^\top\Sigma_\Pi(P)m_j\), and its bootstrap variance estimator is
\(m_j^\top\operatorname{Var}^*(Z_n^*)m_j\). The uniform covariance bound in
Assumption~\ref{ass:sampling}, together with the fixed finite collection of
loadings, gives the common upper bound \(\bar d\); the same assumption gives
uniform convergence of the bootstrap variances. Taking the maximum over the
fixed number of entry scales in each row gives uniform consistency of
\(\widehat{\mathsf S}_n\).
\end{proof}

\begin{lemma}[Uniform behavior of the near-feasible completion]
\label{lem:near_feasible_completion_behavior}
Suppose Assumptions~\ref{ass:comparison_validity} and
\ref{ass:admissible_rule}--\ref{ass:finite_admissible_representation},
Assumption~\ref{ass:sampling}, and part (i) of
Assumption~\ref{ass:regular_candidate_systems} hold. The program in
\eqref{eq:uniform_eta_hat} is nonempty whenever
\(\hat Q_n(\tau)<\infty\). Uniformly over \((P,\tau)\in\mathcal Q_0\),
\[
\hat Q_n(\tau)=O_p(n^{-1/2}),
\qquad
\hat\eta_n(\tau)=O_p(1).
\]
If, in addition, part (ii) of
Assumption~\ref{ass:regular_candidate_systems} holds, then
\[
\|\hat\eta_n(\tau)-\eta^0(P,\tau)\|
=O_p\!\left\{\left(\frac{\gamma_n}{\sqrt n}\right)^{1/2}\right\}
=o_p(1).
\]
\end{lemma}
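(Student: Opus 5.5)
The argument has three parts, in order: nonemptiness of the program in \eqref{eq:uniform_eta_hat}, the rate for \(\hat Q_n(\tau)\) together with boundedness of \(\hat\eta_n(\tau)\), and consistency of \(\hat\eta_n(\tau)\) under the error bound.

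\textbf{Step 1: nonemptiness.} If \(\hat Q_n(\tau)<\infty\), the normalized Farkas program in \eqref{eq:sample_uniform_infeasibility} is bounded by \(\hat Q_n(\tau)\). LP duality then gives an \(\eta\) with \(\hat b_n(\tau)-\hat A_n\eta\le\hat Q_n(\tau)\widehat{\mathsf S}_n\mathbbm 1\). Concretely, the dual of \(\min\{t:\hat b_n-\hat A_n\eta\le t\widehat{\mathsf S}_n\mathbbm 1\}\) is exactly the program defining \(\hat Q_n(\tau)\). Some care is needed for fixed rows, which have zero scale. For these rows finiteness means that no certificate supported only on fixed rows separates the system, so the fixed rows must be satisfied exactly. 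Since \(\hat r_n(\tau)>\hat Q_n(\tau)\), the constraint set of \eqref{eq:uniform_eta_hat} is nonempty and closed. The strictly convex objective then has a unique minimizer.

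\textbf{Step 2: rate of \(\hat Q_n(\tau)\) and boundedness of \(\hat\eta_n(\tau)\).} Fix \((P,\tau)\in\mathcal Q_0\) and plug \(\eta^0=\eta^0(P,\tau)\) into the sample system. By \eqref{eq:candidate_common_perturbation}, the residual is
\[
\hat b_n(\tau)-\hat A_n\eta^0=\bigl(b-A\eta^0\bigr)+n^{-1/2}\bigl[(M_\Pi Z_{n,P})_{\mathrm{rhs}}-(M_\Pi Z_{n,P})_{\mathrm{mat}}\eta^0\bigr],
\]
where the first term is \(\le0\). Lemma~\ref{lem:bounded_feasible_nuisance}, applied over the compact input domain of Assumption~\ref{ass:sampling} and the compact \(\mathcal T\), bounds \(\|\eta^0\|\) uniformly. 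Tightness of \(Z_{n,P}\) therefore makes the perturbation \(O_p(n^{-1/2})\) uniformly. Fixed rows have zero perturbation. On estimated rows, Lemma~\ref{lem:row_scale_consistency} gives diagonal entries of \(\widehat{\mathsf S}_n\) in \([\underline d/2,2\bar d]\) with probability tending to one. Hence \(\eta^0\) satisfies the rows with slack \(t\widehat{\mathsf S}_n\mathbbm 1\) for some \(t=O_p(n^{-1/2})\), and weak duality gives \(\hat Q_n(\tau)\le t\). The same inequality, combined with \(\gamma_n\to\infty\), shows that \(\eta^0\) is feasible for \eqref{eq:uniform_eta_hat} with probability tending to one. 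Minimality then gives \(\|\hat\eta_n(\tau)\|\le\|\eta^0\|\), which is uniformly bounded.

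\textbf{Step 3: consistency under part (ii).} Bound the population violation of \(\hat\eta_n\):
\[
\bigl\|[b-A\hat\eta_n]_+\bigr\|\le\bigl\|[\hat b_n-\hat A_n\hat\eta_n]_+\bigr\|+\|\hat b_n-b\|+\|\hat A_n-A\|\,\|\hat\eta_n\|.
\]
The first term is at most \(\hat r_n\|\widehat{\mathsf S}_n\mathbbm 1\|=O_p(\gamma_n/\sqrt n)\). The remaining terms are \(O_p(n^{-1/2})\) because \(\hat\eta_n=O_p(1)\). Part (ii) then yields \(\tilde\eta\in\mathcal F(\Pi(P),\tau)\) with \(\|\hat\eta_n-\tilde\eta\|=O_p(\gamma_n/\sqrt n)\).

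The main obstacle is that distance to \(\mathcal F\) alone does not control distance to the specific point \(\eta^0\). I would close this with strong convexity of the squared norm. Because \(\eta^0\) is the minimum-norm point of the convex set \(\mathcal F\), every \(\eta\in\mathcal F\) satisfies
\[
\|\eta-\eta^0\|^2\le\|\eta\|^2-\|\eta^0\|^2.
\]
Apply this with \(\eta=\tilde\eta\), and use \(\|\tilde\eta\|\le\|\hat\eta_n\|+O_p(\gamma_n/\sqrt n)\) together with \(\|\hat\eta_n\|\le\|\eta^0\|\) from Step 2. This gives \(\|\tilde\eta\|^2-\|\eta^0\|^2=O_p(\gamma_n/\sqrt n)\), since all norms involved are bounded. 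Therefore \(\|\tilde\eta-\eta^0\|=O_p\{(\gamma_n/\sqrt n)^{1/2}\}\). The triangle inequality, with \(\|\hat\eta_n-\tilde\eta\|=O_p(\gamma_n/\sqrt n)\), gives the stated rate, and \(\gamma_n/\sqrt n\to0\) gives \(o_p(1)\). All constants are uniform over \(\mathcal Q_0\) because \(\bar H\), \(\bar d\), \(\underline d\) and the bound from Lemma~\ref{lem:bounded_feasible_nuisance} are uniform.
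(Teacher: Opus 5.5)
Your proposal is correct and follows the paper's proof essentially step for step: LP duality for nonemptiness, then substituting the uniformly bounded population completion \(\eta^0(P,\tau)\) into the relaxed sample system to obtain \(\hat Q_n(\tau)=O_p(n^{-1/2})\) and \(\|\hat\eta_n(\tau)\|\le\|\eta^0(P,\tau)\|\), and finally the error bound in part (ii), the projection inequality \(\|\eta-\eta^0\|^2\le\|\eta\|^2-\|\eta^0\|^2\), and the triangle inequality. The only slip is cosmetic and does not affect your conclusion: the primal whose dual is exactly the program defining \(\hat Q_n(\tau)\) must include the constraint \(t\ge0\), since without it the dual normalization becomes an equality.
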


\begin{proof}
\emph{Existence and boundedness.}
By LP duality, \(\hat Q_n(\tau)\) is the value of
\[
\begin{aligned}
\inf_{\eta,\ r\ge0}\quad & r\\
\text{subject to}\quad
&\hat b_n(\tau)-\hat A_n\eta
\le r\,\widehat{\mathsf S}_n\mathbbm 1.
\end{aligned}
\]
Hence \(r=\hat Q_n(\tau)+\gamma_n n^{-1/2}\) gives the nonempty constraint
set in \eqref{eq:uniform_eta_hat} whenever \(\hat Q_n(\tau)<\infty\).

Let \(\eta^0(P,\tau)\) be the population minimum-norm feasible solution.
Lemma~\ref{lem:bounded_feasible_nuisance} bounds it uniformly,
\eqref{eq:candidate_common_perturbation} makes every estimated row
perturbation \(O_p(n^{-1/2})\), and
Lemma~\ref{lem:row_scale_consistency} consistently estimates row scales that
are bounded away from zero on estimated rows. Fixed rows do not change.
Because \(\gamma_n\to\infty\), \(\eta^0(P,\tau)\) belongs to the relaxed
sample set with probability approaching one uniformly. Minimum-norm
optimality then gives
\(\|\hat\eta_n(\tau)\|\le\|\eta^0(P,\tau)\|\) on that event. The smallest
normalized relaxation needed to accommodate the same population solution is
\(O_p(n^{-1/2})\), which also gives
\(\hat Q_n(\tau)=O_p(n^{-1/2})\).

\emph{Consistency under part (ii).}
Abbreviate
\[
(A,b,\mathcal F)
:=
\bigl(A(\Pi(P)),b(\Pi(P),\tau),\mathcal F(\Pi(P),\tau)\bigr).
\]
For any perturbed \((\widetilde A,\widetilde b)\) and
\(\widetilde\eta\), the decomposition
\[
b-A\widetilde\eta
=
(\widetilde b-\widetilde A\widetilde\eta)
+(b-\widetilde b)
+(\widetilde A-A)\widetilde\eta
\]
and Assumption~\ref{ass:regular_candidate_systems}(ii) imply
\[
\operatorname{dist}\{\widetilde\eta,\mathcal F\}
\le \bar H\bigl(
\|[\widetilde b-\widetilde A\widetilde\eta]_+\|
+\|\widetilde b-b\|
+\|\widetilde A-A\|\,\|\widetilde\eta\|
\bigr).
\]
The constraints defining \(\hat\eta_n(\tau)\), its uniform bound, scale
consistency, and \(\gamma_n/\sqrt n\to0\) give
\[
\|[\hat b_n(\tau)-\hat A_n\hat\eta_n(\tau)]_+\|
=O_p(\gamma_n/\sqrt n).
\]
Together with \eqref{eq:candidate_common_perturbation}, the preceding error
bound yields
\[
d_n(P,\tau)
:=
\operatorname{dist}\{\hat\eta_n(\tau),\mathcal F(\Pi(P),\tau)\}
=O_p(\gamma_n/\sqrt n)
\]
uniformly on \(\mathcal Q_0\). Let \(\widetilde\eta_n(P,\tau)\) be the
Euclidean projection of \(\hat\eta_n(\tau)\) onto this population feasible
set. Then
\[
\|\widetilde\eta_n-\hat\eta_n\|=d_n,
\qquad
\|\widetilde\eta_n\|\le\|\eta^0\|+d_n.
\]
Because \(\eta^0\) is the projection of zero onto the same closed convex set,
\[
\|\widetilde\eta_n-\eta^0\|^2
\le
\|\widetilde\eta_n\|^2-\|\eta^0\|^2
=O_p(\gamma_n/\sqrt n).
\]
The triangle inequality proves the stated rate.
\end{proof}

\subsubsection{Near-optimal dual certificates}

\begin{lemma}[Directed approximation of perturbed certificates]
\label{lem:directed_near_dual_inclusion}
Suppose Assumptions~\ref{ass:comparison_validity} and
\ref{ass:admissible_rule}--\ref{ass:finite_admissible_representation},
Assumption~\ref{ass:sampling}, and part (i) of
Assumption~\ref{ass:regular_candidate_systems} hold.
Let \(\operatorname{pr}_e\lambda=\lambda_{J_e}\), and for two sets define the
directed distance
\[
\overrightarrow d(E,F):=\sup_{z\in E}\inf_{\widetilde z\in F}
\|z-\widetilde z\|.
\]
For a perturbation
\(\zeta=(\zeta_A,\zeta_b)\) supported on estimated entries and a diagonal
normalizer \(\widetilde{\mathsf S}\), let
\(\widetilde\Lambda_{0,n}(P,\tau;\zeta,\widetilde{\mathsf S})\) be the
maximizers of
\[
\max_{\lambda\ge0}
\{b(\Pi(P),\tau)+n^{-1/2}\zeta_b\}^\top\lambda
\]
subject to
\[
\{A(\Pi(P))+n^{-1/2}\zeta_A\}^\top\lambda=0,
\qquad
\mathbbm 1^\top\widetilde{\mathsf S}\lambda\le1.
\]
Call \(\widetilde{\mathsf S}\) admissible when it is zero off \(J_e\) and
its diagonal entries on \(J_e\) lie in
\([\underline d/2,2\bar d]\).
There are deterministic sequences \(M_n\to\infty\) and
\(a_n\downarrow0\), chosen so that
\[
M_n=o(\gamma_n),\qquad
M_na_n=o(1),\qquad
M_n\left(\frac{\gamma_n}{\sqrt n}\right)^{1/2}=o(1),
\]
for which, uniformly over \((P,\tau)\in\mathcal Q_0\),
\[
\sup_{\substack{\|\zeta\|\le M_n,\;
\|\widetilde{\mathsf S}-\mathsf S(P)\|\le a_n\\
\widetilde{\mathsf S}\ \mathrm{admissible}}}
\overrightarrow d\!\left(
\operatorname{pr}_e\widetilde\Lambda_{0,n}
(P,\tau;\zeta,\widetilde{\mathsf S}),
\operatorname{pr}_e\widehat\Lambda_n^{\mathrm{near}}(\tau)
\right)
=O_p(a_n)=o_p(1).
\]
\end{lemma}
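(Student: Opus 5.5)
The plan is to show that every maximizer of the perturbed population program is, after projection onto the estimated coordinates, close to an element of the sample near-optimal set $\widehat\Lambda_n^{\mathrm{near}}(\tau)$. Throughout, work on an event of probability approaching one, uniformly over $\mathcal Q_0$. On this event three things hold. First, $\|Z_{n,P}\|\le M_n$ after enlarging $M_n$ if necessary; this is possible because $Z_{n,P}=O_p(1)$ uniformly by Assumption~\ref{ass:sampling}. Second, $\|\widehat{\mathsf S}_n-\mathsf S(P)\|\le a_n$, using Lemma~\ref{lem:row_scale_consistency} and choosing $a_n$ slowly enough. Third, $\hat Q_n(\tau)\le M_n n^{-1/2}$, using Lemma~\ref{lem:near_feasible_completion_behavior}.

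By \eqref{eq:candidate_common_perturbation}, the sample system is itself the population system perturbed by $n^{-1/2}\zeta$ with $\zeta=M_\Pi Z_{n,P}$. Hence the sample program is exactly the perturbed program evaluated at $(\zeta_n,\widehat{\mathsf S}_n)=(M_\Pi Z_{n,P},\widehat{\mathsf S}_n)$. The task is therefore to compare two members of one parametric LP family whose parameters differ by at most $O(M_n n^{-1/2})$ in the constraint and objective data and by $O(a_n)$ in the normalizer.

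Fix an admissible pair $(\zeta,\widetilde{\mathsf S})$ and a maximizer $\lambda$. Proceed in four steps.

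\emph{Step 1 (boundedness).} Every feasible $\lambda$ has $\lambda_{J_e}$ bounded by $2/\underline d$, since $\widetilde{\mathsf S}\ge(\underline d/2)I$ on $J_e$. Coordinates on fixed rows enter neither the normalizer nor the projection. Because the perturbation touches only estimated entries, one can argue that any optimal certificate with unbounded fixed-row mass would yield an infinite value. That is excluded on $\mathcal Q_0$, where the perturbed value is $O(M_n n^{-1/2})$ by weak duality with the bounded $\eta^0$ from Lemma~\ref{lem:bounded_feasible_nuisance}. So one can select maximizers whose fixed-row coordinates are also uniformly bounded, using extreme points.

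\emph{Step 2 (approximate feasibility).} Rescale $\lambda$ by $c=1/\max\{1,\mathbbm 1^\top\widehat{\mathsf S}_n\lambda\}=1-O(a_n)$ so that it satisfies the sample normalization. The equality constraint then fails by at most $\|n^{-1/2}(\zeta_A-\zeta_{n,A})\|\,\|\lambda\|\lesssim M_n n^{-1/2}=o(\gamma_n n^{-1/2})$, because $M_n=o(\gamma_n)$. So $c\lambda$ lies in the relaxed band $-\gamma_n n^{-1/2}\mathbbm 1\le\hat A_n^\top\lambda\le\gamma_n n^{-1/2}\mathbbm 1$.

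\emph{Step 3 (objective near-optimality).} This is the main obstacle: we must show $\hat b_n^\top(c\lambda)\ge\hat Q_n(\tau)-\gamma_nn^{-1/2}$. The objective difference is $O(M_nn^{-1/2})+O(a_n)\cdot|b^\top\lambda|$. The first term is $o(\gamma_nn^{-1/2})$. The second requires care, since $a_n$ is not assumed $o(\gamma_n n^{-1/2})$. Handle it by noting that on $\mathcal Q_0$ the population value is zero. Weak duality at the bounded completion $\eta^0$ then gives $b^\top\lambda=\lambda^\top(b-A\eta^0)+\lambda^\top A\eta^0$. The first summand is $\le0$, and the second is $O(M_nn^{-1/2})$. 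In addition, optimality against $\lambda=0$ gives a lower bound of $-O(M_nn^{-1/2})$. So $|b^\top\lambda|=O(M_nn^{-1/2})$, and the rescaling error is $o(\gamma_nn^{-1/2})$. Combining this with the upper bound $\hat Q_n(\tau)$ on sample values and with the fact that the perturbed optimum is at least the value of $c^{-1}$ times the sample maximizer (again approximately feasible), we get $c\lambda\in\widehat\Lambda_n^{\mathrm{near}}(\tau)$ exactly, up to the rescaling.

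\emph{Step 4 (conclusion).} The directed distance after projection is therefore at most $\|(1-c)\lambda_{J_e}\|=O(a_n)$. All bounds depend on $(P,\tau)$ only through the uniform constants $\underline d$, $\bar d$, the bound in Lemma~\ref{lem:bounded_feasible_nuisance}, and the uniform tightness of $Z_{n,P}$. Hence the supremum over admissible $(\zeta,\widetilde{\mathsf S})$ and over $\mathcal Q_0$ is $O_p(a_n)$.

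The conditions $M_na_n=o(1)$ and $M_n(\gamma_n/\sqrt n)^{1/2}=o(1)$ are not needed here. They are reserved for combining this lemma with the completion rate in Lemma~\ref{lem:near_feasible_completion_behavior} at the later quantile-transfer step.
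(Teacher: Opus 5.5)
Your core argument matches the paper's proof and is correct. You rescale a perturbed maximizer by \(\max\{1,\mathbbm 1^\top\widehat{\mathsf S}_n\lambda\}\), use \(M_n=o(\gamma_n)\) to put the sample equality residual inside the \(\gamma_n n^{-1/2}\) band, and combine \(b_\zeta^\top\lambda\ge0\), where \(b_\zeta=b(\Pi(P),\tau)+n^{-1/2}\zeta_b\) and the bound holds because zero is feasible, with \(\hat Q_n(\tau)\le M_n n^{-1/2}\). The \(O(a_n)\) distance then comes from the rescaling. Three points need fixing; the first two are claims that are wrong or unjustified as stated.

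First, in Step 1, unbounded fixed-row mass does not force an infinite value. There can be fixed-row directions \(d\ge0\) with \(A(\Pi(P))^\top d=0\) and \(b(\Pi(P),\tau)^\top d=0\). Examples are an exact equality with fixed coefficients written as two opposite rows, or the box rows when \(L=0\). Such directions keep the value finite but make the optimal face unbounded. The lemma concerns every maximizer, so selecting bounded extreme points would leave the remaining maximizers to be handled. You do not need this step. Both \(\hat A_n-A(\Pi(P))-n^{-1/2}\zeta_A\) and \(\hat b_n(\tau)-b_\zeta\) vanish on fixed rows. So the Step 2 bound holds with \(\|\lambda_{J_e}\|_1\le 2/\underline d\) in place of \(\|\lambda\|\), for all maximizers at once. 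This is how the paper proceeds.

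Second, in Step 3 you assert that the perturbed optimum dominates the value at a rescaled sample maximizer. This is unjustified: approximate feasibility in an equality-constrained LP gives no lower bound on its optimum without a separate stability argument. Your two-sided bound on \(|b_\zeta^\top\lambda|\) via \(\eta^0\) is correct but unnecessary. Since the rescaling factor lies in \((0,1]\), \(c\,\hat b_n(\tau)^\top\lambda\ge\min\{0,\hat b_n(\tau)^\top\lambda\}\ge -CM_n/\sqrt n\). With \(\hat Q_n(\tau)\le M_n/\sqrt n\), this gives membership in \(\widehat\Lambda_n^{\mathrm{near}}(\tau)\) once \((C+1)M_n\le\gamma_n\), which holds eventually because \(M_n=o(\gamma_n)\). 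This monotonicity observation is exactly the paper's device.

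Third, \(M_na_n=o(1)\) and \(M_n(\gamma_n/\sqrt n)^{1/2}=o(1)\) are part of what the lemma asserts. You should therefore state that \(M_n\to\infty\) is chosen slowly enough to satisfy all three rate conditions; the paper takes \(M_n=o(\min\{\gamma_n,a_n^{-1},(\sqrt n/\gamma_n)^{1/2}\})\). No enlargement of \(M_n\) is ever needed. Uniform tightness already makes \(\|M_\Pi Z_{n,P}\|\le M_n\) hold with probability approaching one for any \(M_n\to\infty\).
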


\begin{proof}
Uniform consistency of \(\widehat{\mathsf S}_n\) permits a deterministic
sequence \(a_n\downarrow0\) for which the scale event below has probability
approaching one uniformly. Because
\(\gamma_n\to\infty\), \(a_n^{-1}\to\infty\), and
\((\sqrt n/\gamma_n)^{1/2}\to\infty\), choose \(M_n\to\infty\) sufficiently
slowly that
\[
M_n=o\!\left(
\min\left\{\gamma_n,a_n^{-1},
\left(\frac{\sqrt n}{\gamma_n}\right)^{1/2}\right\}
\right).
\]
Uniform tightness of the sample perturbation then implies that, with
probability approaching one uniformly,
\[
\sqrt n\|(\hat A_n-A(\Pi(P)),
\hat b_n(\tau)-b(\Pi(P),\tau))\|\le M_n,
\qquad
\|\widehat{\mathsf S}_n-\mathsf S(P)\|\le a_n.
\]
This choice satisfies all three rate restrictions in the lemma.
Scale normalization and part (i) of
Assumption~\ref{ass:regular_candidate_systems} give a common bound on
\(\|\lambda_{J_e}\|_1\) for every multiplier in any of the normalized programs
in the lemma. Fixed-row entries are unperturbed.
Any recession ray of a perturbed normalized dual program is therefore
supported on fixed rows. Population feasibility makes its objective
nonpositive, so each perturbed program has a finite, attained maximum and
the displayed maximizer set is nonempty.

Uniformly over the displayed perturbation ball, the sample equality residual
of a perturbed optimizer \(\lambda\) satisfies
\[
\|\hat A_n^\top\lambda\|_\infty
=O(M_n/\sqrt n)
=o(\gamma_n/\sqrt n).
\]
Let \(b_\zeta=b(\Pi(P),\tau)+n^{-1/2}\zeta_b\). Because zero is feasible in
the perturbed dual program, optimality gives \(b_\zeta^\top\lambda\ge0\).
Only estimated rows differ between \(b_\zeta\) and \(\hat b_n(\tau)\), so the
estimated-row multiplier bound implies
\[
\hat b_n(\tau)^\top\lambda
\ge -C M_n/\sqrt n
\]
for a common finite \(C\), on the same event.

Because \(\lambda\) is normalized by \(\widetilde{\mathsf S}\) and
\(\|\widetilde{\mathsf S}-\widehat{\mathsf S}_n\|\le2a_n\), define
\[
\rho_n(\lambda)
:=
\max\{1,\mathbbm 1^\top\widehat{\mathsf S}_n\lambda\},
\qquad
\widetilde\lambda:=\lambda/\rho_n(\lambda).
\]
The estimated-row multiplier bound gives
\(1\le\rho_n(\lambda)\le1+Ca_n\), so
\(\widetilde\lambda\) satisfies the sample normalization and
\(\|\operatorname{pr}_e(\widetilde\lambda-\lambda)\|=O(a_n)\).
The rescaling cannot worsen the preceding lower bound: a negative objective
moves toward zero, while a nonnegative objective remains nonnegative. Since
\(\hat Q_n(\tau)=O_p(n^{-1/2})\) and \(M_n=o(\gamma_n)\), uniformly with
probability approaching one,
\[
\hat b_n(\tau)^\top\widetilde\lambda
\ge
\hat Q_n(\tau)-\gamma_n/\sqrt n.
\]
The equality residual is still within the near-set tolerance, so
\(\widetilde\lambda\in
\widehat\Lambda_n^{\mathrm{near}}(\tau)\). Its estimated-row coordinates are
within \(O(a_n)\) of those of \(\lambda\), uniformly over the perturbation
ball, which proves the directed-distance claim.
\end{proof}

\subsubsection{Proof of uniform candidatewise coverage}

\phantomsection
\label{proof:uniform_candidate_test_validity}
\label{app:main_uniform_theorem_proof}
\begin{proof}[Proof of Theorem~\ref{thm:uniform_candidate_test_validity}]
Use the null index class \(\mathcal Q_0\) defined in
Section~\ref{subsec:uniform_validity}. Formally, one may regard \((P,\tau)\)
as the law \(P\) augmented by the deterministic label \(\tau\); probabilities
and bootstrap laws remain those under \(P\). For
\(\theta=(P,\tau)\in\mathcal Q_0\), use the prespecified
inequality system
\[
b_\theta-A_\theta\eta\le0,
\qquad
(A_\theta,b_\theta)=\bigl(A(\Pi(P)),b(\Pi(P),\tau)\bigr).
\]
The statistic and calibration in
\eqref{eq:sample_uniform_infeasibility} and
\eqref{eq:uniform_eta_hat}--\eqref{eq:uniform_bootstrap_stat} correspond to the
all-inequality procedure in \citet[Theorem~4.19]{goff2025inference}. The proof
follows their one-sided comparison and quantile-transfer argument while
adapting the two steps affected by the present structure: the uniform
feasibility error bound controls the near-feasible completion, and the
certificate-variance condition controls structurally reachable certificate
directions that can determine rejection.

\medskip\noindent\emph{Step 1: stochastic inputs and normalization.}
The finite system in
Theorem~\ref{prop:candidate_tau_linear_system} fixes the numbers of nuisance
coordinates, rows, and estimated entries. Assumption~\ref{ass:sampling} and
\eqref{eq:candidate_common_perturbation} give the uniform Gaussian
approximation of
\(\operatorname{vec}[\hat A_n\ \hat b_n(\tau)]\) and a
consistent bootstrap approximation. Lemma~\ref{lem:row_scale_consistency}
gives the entrywise scale bounds and consistent normalizer. These are the
finite-dimensional input conditions used in the cited theorem.

Recall that \(J_e\) denotes the fixed set of estimated rows. On an event whose
probability tends to one uniformly, scale consistency and the lower bound in
part (i) of Assumption~\ref{ass:regular_candidate_systems} imply
\begin{equation}
\sup_{\substack{\lambda\ge0:\
\mathbbm 1^\top\widehat{\mathsf S}_n\lambda\le1}}
\sum_{j\in J_e}\lambda_j\le C
\label{eq:proof_estimated_multiplier_bound}
\end{equation}
for a common finite constant \(C\). Multipliers on fixed rows need not be
bounded, but their matrix and right-hand-side perturbations are identically
zero. Consequently, every stochastic dual objective below depends only on the
uniformly bounded estimated-row part of \(\lambda\).

\medskip\noindent\emph{Step 2: the null-local completion.}
Lemma~\ref{lem:bounded_feasible_nuisance} gives
\[
\sup_{\theta\in\mathcal Q_0}
\|\eta^0(P,\tau)\|<\infty.
\]
Lemma~\ref{lem:near_feasible_completion_behavior} shows that the sample
minimum norm relaxation is well defined on the null with probability
approaching one uniformly and, uniformly on \(\mathcal Q_0\), gives
\[
\begin{aligned}
\hat Q_n(\tau)&=O_p(n^{-1/2}),
&\hat\eta_n(\tau)&=O_p(1),\\
\|\hat\eta_n(\tau)-\eta^0(P,\tau)\|
&=O_p\!\left\{\left(\frac{\gamma_n}{\sqrt n}\right)^{1/2}\right\}.
\end{aligned}
\]
The lemma obtains the last rate by applying the uniform feasibility error bound
in Assumption~\ref{ass:regular_candidate_systems}(ii) and then using
minimum-norm projection.

The subsequent stochastic comparison uses this rate as follows. For the
sequences in Lemma~\ref{lem:directed_near_dual_inclusion},
let \(\mathcal Z_{A,n}\) contain the matrix perturbations supported on the
prespecified estimated entries with norm at most \(M_n\).
\eqref{eq:proof_estimated_multiplier_bound} gives
\begin{equation}
\sup_{\zeta_{\mathrm{mat}}\in\mathcal Z_{A,n}}
\sup_{\lambda\in\widehat\Lambda_n^{\mathrm{near}}(\tau)}
\left|
\left\langle
\zeta_{\mathrm{mat}}
\{\hat\eta_n(\tau)-\eta^0(P,\tau)\},\lambda
\right\rangle
\right|
\le
C M_n\|\hat\eta_n(\tau)-\eta^0(P,\tau)\|
=o_p(1)
\label{eq:proof_nuisance_substitution_uniform}
\end{equation}
uniformly on \(\mathcal Q_0\). Fixed-row multipliers make no contribution to
the display. The last equality uses
\(M_n(\gamma_n/\sqrt n)^{1/2}=o(1)\).
Equation~\eqref{eq:proof_nuisance_substitution_uniform} is the
required substitution of the estimated minimum-norm completion for its
population counterpart in the bootstrap objective.

\medskip\noindent\emph{Step 3: near-optimal certificates.}
Scale normalization and radial rescaling stabilize the near-optimal dual sets.
Lemma~\ref{lem:directed_near_dual_inclusion} states the required relation
formally. Uniformly over a growing ball of first-order perturbations and a
shrinking neighborhood of the population normalizer, the estimated-row
projection of every perturbed population optimizer lies within \(O_p(a_n)\)
of the sample near-certificate set. Its proof uses
\eqref{eq:proof_estimated_multiplier_bound}, an
\(o(\gamma_n/\sqrt n)\) sample equality residual, and radial scaling to enforce
the sample normalizer and near-objective inequality. Applying the lemma to the
independent-copy, Gaussian, and bootstrap perturbations gives the directed
near-dual-set inclusions used below. The rate \(M_na_n=o(1)\) makes replacing
a perturbed optimizer by its nearby sample certificate negligible in every
first-order stochastic objective.

A population feasible nuisance vector satisfies every fixed row, which is
unchanged in the sample. Thus no positive unbounded fixed-row dual ray occurs
under the null and \(\hat Q_n(\tau)<\infty\) uniformly with probability
approaching one.

\medskip\noindent\emph{Step 4: one-sided comparison and quantile transfer.}

\smallskip\noindent\emph{Independent-copy upper bound.}
Fix \(\theta=(P,\tau)\in\mathcal Q_0\). Let
\((\hat A_n^\dagger,\hat b_n^\dagger(\tau),
\widehat{\mathsf S}_n^\dagger)\) be an independent copy of the complete
sample inputs, and let \(\hat Q_n^\dagger(\tau)\) be its normalized dual
value. It has the same marginal distribution as \(\hat Q_n(\tau)\). Write
\[
(Z_{A,n}^\dagger,Z_{b,n}^\dagger)
:=
\sqrt n\{\hat A_n^\dagger-A_\theta,
\hat b_n^\dagger(\tau)-b_\theta\}.
\]
On the finite-value event, take an optimizer \(\lambda_n^\dagger\) of this
independent-copy program. Population feasibility at
\(\eta^0=\eta^0(P,\tau)\) and
\((\hat A_n^\dagger)^\top\lambda_n^\dagger=0\) give the one-sided saddle
inequality
\begin{equation}
\begin{aligned}
\sqrt n\,\hat Q_n^\dagger(\tau)
&=\sqrt n\,(\hat b_n^\dagger)^\top\lambda_n^\dagger\\
&\le
\left\langle
Z_{b,n}^\dagger-Z_{A,n}^\dagger\eta^0,
\lambda_n^\dagger
\right\rangle .
\end{aligned}
\label{eq:proof_saddle_upper_bound}
\end{equation}
There is no first-order normalizer term in this display: the population
primal relaxation has value and optimal relaxation coordinate equal to zero under the
null. The normalizer still affects \(\lambda_n^\dagger\), which is why
Lemma~\ref{lem:directed_near_dual_inclusion} allows the complete perturbed
normalizer \(\widehat{\mathsf S}_n^\dagger\).

For a conformable perturbation \(z=(z_A,z_b)\), define, conditional on the
original sample,
\[
\mathcal G_{n,\theta}(z)
:=
\sup_{\lambda\in\widehat\Lambda_n^{\mathrm{near}}(\tau)}
\langle z_b-z_A\hat\eta_n(\tau),\lambda\rangle.
\]
Apply Lemma~\ref{lem:directed_near_dual_inclusion} with
\(\zeta=(Z_{A,n}^\dagger,Z_{b,n}^\dagger)\) and
\(\widetilde{\mathsf S}=\widehat{\mathsf S}_n^\dagger\). The resulting
certificate replacement changes the right-hand side of
\eqref{eq:proof_saddle_upper_bound} by at most \(C M_na_n=o(1)\) on the
growing perturbation ball. Equation
\eqref{eq:proof_nuisance_substitution_uniform} then replaces \(\eta^0\) by
\(\hat\eta_n(\tau)\). Let \(F_{n,\theta}^\dagger\) be the event that the
independent-copy root leaves the \(M_n\)-ball, its normalizer leaves the
\(a_n\)-neighborhood or is inadmissible, or
\(\hat Q_n^\dagger(\tau)=+\infty\). Uniform tightness, scale consistency, and
null feasibility give
\[
\delta_n
:=
\sup_{\theta\in\mathcal Q_0}
\PP_P(F_{n,\theta}^\dagger)
\longrightarrow0.
\]
A diagonal-event argument therefore gives a deterministic sequence
\(\beta_n\downarrow0\) and original-sample events with probability approaching
one uniformly on \(\mathcal Q_0\) on which
\begin{equation}
\sqrt n\,\hat Q_n^\dagger(\tau)
\le
\mathcal G_{n,\theta}
(Z_{A,n}^\dagger,Z_{b,n}^\dagger)+\beta_n
+\infty\cdot\mathbbm 1\{F_{n,\theta}^\dagger\}.
\label{eq:proof_one_sided_gaussian_dominance}
\end{equation}

\smallskip\noindent\emph{Gaussian and bootstrap comparison.}
Let \(Z_\theta=(Z_{A,\theta},Z_{b,\theta})\) be a centered Gaussian vector
with the covariance induced by
\(M_\Pi\Sigma_\Pi(P)M_\Pi^\top\), independent of the sample. Let
\(J_{n,\theta}\) denote the CDF of
\(\sqrt n\,\hat Q_n(\tau)\), let \(G_{n,\theta}(\cdot)\) be the conditional
CDF of \(\mathcal G_{n,\theta}(Z_\theta)\), and let
\(H_{n,\theta}(\cdot)\) be the conditional CDF of the bootstrap statistic
\(Q_n^*(\tau)\). The multiplier bound in
\eqref{eq:proof_estimated_multiplier_bound} and the uniform bound on
\(\hat\eta_n(\tau)\) make \(\mathcal G_{n,\theta}\) uniformly Lipschitz in
the estimated perturbation coordinates. Assumption~\ref{ass:sampling},
\eqref{eq:candidate_common_perturbation}, and a Strassen coupling
\citep{strassen1965existence} therefore
imply that there are deterministic \(\varepsilon_n\downarrow0\) and events
\(E_{n,\theta}\) satisfying
\[
\sup_{\theta\in\mathcal Q_0}
\PP_P(E_{n,\theta}^c)\to0
\]
such that, after enlarging \(\varepsilon_n\) to absorb \(\beta_n\) and
\(\delta_n\), on
\(E_{n,\theta}\), for every \(z\in\mathbb R\),
\begin{align}
J_{n,\theta}(z)
&\ge G_{n,\theta}(z-\varepsilon_n)-\varepsilon_n,
\label{eq:proof_cdf_one_sided_transfer}\\
H_{n,\theta}(z)
&\le G_{n,\theta}(z+\varepsilon_n)+\varepsilon_n.
\label{eq:proof_cdf_bootstrap_transfer}
\end{align}
The first inequality uses the independent-copy domination in
\eqref{eq:proof_one_sided_gaussian_dominance}; the second uses the conditional
bootstrap approximation. Candidate indexing adds no
stochastic-equicontinuity term because the primitive input perturbation is
exactly constant in \(\tau\).

Replace \(E_{n,\theta}\) by its intersection with the original-sample
high-probability events established in Steps 1--3. Thus, without changing
\(\sup_{\theta\in\mathcal Q_0}\PP_P(E_{n,\theta}^c)\to0\), assume on this
event that
\(\hat Q_n(\tau)<\infty\),
\(\hat Q_n(\tau)\le\gamma_n/\sqrt n\), the structurally generated sample
system with \(\widetilde\Pi=\hat\Pi_n\), together with
\(\widehat{\mathsf S}_n\), lies in the \(\delta_\Gamma\)-neighborhood in
Assumption~\ref{ass:regular_candidate_systems}(iii), and the multiplier,
directed-inclusion, nuisance-consistency, and scale-consistency bounds above
all hold.

\smallskip\noindent\emph{Positive variance and anti-concentration.}
It remains to transfer the bootstrap quantile through the two CDF
inequalities. Assumption~\ref{ass:sampling} and
\eqref{eq:candidate_common_perturbation} imply that the row-residual Gaussian
perturbation at \(\eta^0(P,\tau)\) has covariance
\[
\Gamma(P,\tau)\Sigma_\Pi(P)\Gamma(P,\tau)^\top.
\]
If \(\Lambda_0(P,\tau)\ne\{0\}\), the first clause of
Assumption~\ref{ass:regular_candidate_systems}(iii) supplies an optimizer
\(\lambda^0\) for which the corresponding quadratic form is at least
\(c_\Gamma\).
Applying Lemma~\ref{lem:directed_near_dual_inclusion} with zero perturbation
places a nearby estimated-row direction in the sample near-certificate set.
Nuisance consistency and bounded covariance preserve the positive variance
bound for the Gaussian objective evaluated at \(\hat\eta_n(\tau)\).

Suppose instead that \(\Lambda_0(P,\tau)=\{0\}\).
Because \(\hat Q_n(\tau)=O_p(n^{-1/2})\) uniformly and
\(\gamma_n\to\infty\), zero belongs to
\(\widehat\Lambda_n^{\mathrm{near}}(\tau)\) with probability approaching one
uniformly. The bootstrap statistic and its critical value are then
nonnegative. On the rejection event, \(\hat Q_n(\tau)>0\). Whenever this score
is finite, its positive optimal face contains a nonzero extreme point
\(\hat\lambda\in\operatorname{ext}(\hat{\mathcal D}_n)\). This optimizer
binds the scale normalization: otherwise it could be radially enlarged, while
a positive fixed-row-only ray would make the score infinite. Hence, on the
uniform scale-consistency event,
\[
1=\mathbbm 1^\top\widehat{\mathsf S}_n\hat\lambda
\le2\bar d\|\hat\lambda_{J_e}\|_1.
\]
Taking \(\widetilde\Pi=\hat\Pi_n\), the sample system and normalizer satisfy
the structural neighborhood in
Assumption~\ref{ass:regular_candidate_systems}(iii), and
\(\hat b_n(\tau)^\top\hat\lambda=\hat Q_n(\tau)>0\). The local
extreme-point clause therefore gives a uniform positive-variance bound for
this rejection-relevant direction. The convergence
\(\hat\eta_n(\tau)\to\eta^0(P,\tau)\), together with bounded system loadings
and bounded covariance matrices, preserves that lower bound when the Gaussian
objective is evaluated at \(\hat\eta_n(\tau)\).

The estimated-row projections of the near-certificate sets are uniformly
bounded, and the Gaussian covariance matrices are uniformly bounded. Choose
\(\bar\alpha\in(\alpha,1/2)\). The Gaussian anti-concentration result in
\citet[Proposition~B.1]{goff2025inference}, together with the positive-variance
direction just established, gives common constants
\(\underline x,C_{\mathrm{ac}}>0\) such that, on the rejection event, the
\((1-\bar\alpha)\)-quantile of \(G_{n,\theta}\) is at least
\(2\underline x\), while its density is bounded by \(C_{\mathrm{ac}}\) on
\([\underline x,\infty)\). For all sufficiently large \(n\),
\(\alpha+\varepsilon_n<\bar\alpha\). From
\eqref{eq:proof_cdf_bootstrap_transfer} and the definition of the bootstrap
critical value,
\[
G_{n,\theta}\{\hat c_n(\tau,1-\alpha)+\varepsilon_n\}
\ge1-\alpha-\varepsilon_n.
\]
Hence
\(\hat c_n(\tau,1-\alpha)+\varepsilon_n\ge2\underline x\), and eventually
\(\hat c_n(\tau,1-\alpha)-\varepsilon_n\ge\underline x\), on
\(E_{n,\theta}\) and the rejection event. This places the relevant interval
away from the possible atom at zero. The density bound then gives, on that
event,
\begin{equation}
G_{n,\theta}
\{\hat c_n(\tau,1-\alpha)+\varepsilon_n\}
-G_{n,\theta}
\{\hat c_n(\tau,1-\alpha)-\varepsilon_n\}
\le 2C_{\mathrm{ac}}\varepsilon_n.
\label{eq:proof_gaussian_anticoncentration}
\end{equation}

\smallskip\noindent\emph{Quantile transfer.}
Write \(c_{n,\theta}:=\hat c_n(\tau,1-\alpha)\) and
\(\bar\varepsilon_n:=(2C_{\mathrm{ac}}+2)\varepsilon_n\). By definition of
the conditional quantile, \(H_{n,\theta}(c_{n,\theta})\ge1-\alpha\).
Combining \eqref{eq:proof_cdf_one_sided_transfer}--
\eqref{eq:proof_gaussian_anticoncentration} gives the explicit chain
\[
\begin{aligned}
J_{n,\theta}(c_{n,\theta})
&\ge G_{n,\theta}(c_{n,\theta}-\varepsilon_n)-\varepsilon_n\\
&\ge G_{n,\theta}(c_{n,\theta}+\varepsilon_n)
-(2C_{\mathrm{ac}}+1)\varepsilon_n\\
&\ge H_{n,\theta}(c_{n,\theta})
-(2C_{\mathrm{ac}}+2)\varepsilon_n\\
&\ge1-\alpha-\bar\varepsilon_n.
\end{aligned}
\]
Let \(q_{n,\theta}(u)\) be the \(u\)-quantile of \(J_{n,\theta}\). On
\(E_{n,\theta}\) and the rejection event, the preceding display implies
\[
q_{n,\theta}(1-\alpha-\bar\varepsilon_n)
\le c_{n,\theta}.
\]
Consequently,
\[
\begin{aligned}
\PP_P\{\sqrt n\,\hat Q_n(\tau)>c_{n,\theta}\}
&\le
\PP_P\!\left[
\sqrt n\,\hat Q_n(\tau)>
q_{n,\theta}(1-\alpha-\bar\varepsilon_n)
\right]
+\PP_P(E_{n,\theta}^c)\\
&\le\alpha+\bar\varepsilon_n+\PP_P(E_{n,\theta}^c).
\end{aligned}
\]
Taking the supremum over \(\mathcal Q_0\) and the limit superior proves the
stated uniform bound and proves the theorem.
\end{proof}

\subsubsection{Power and conservative calibration}

For a pair with \(Q(P,\tau)<\infty\), let
\((r^0(P,\tau),\eta^1(P,\tau))\) solve
\[
\min_{r\ge0,\eta}\ r
\quad\text{subject to}\quad
b(\Pi(P),\tau)-A(\Pi(P))\eta
\le r\,\mathsf S(P)\mathbbm 1,
\]
using the minimum norm \(\eta\) among value-minimizing solutions. Linear
programming duality gives \(r^0(P,\tau)=Q(P,\tau)\).

\begin{proposition}[Power against separated incompatibility]
\label{prop:separated_incompatibility_power}
Fix \(\delta>0\), and let \(\mathcal Q_1(\delta)\) contain pairs
\((P,\tau)\) with \(P\in\mathcal P^U\), \(\tau\in\mathcal T\), and
\[
\delta\le Q(P,\tau)<\infty,
\qquad
\sup_{(P,\tau)\in\mathcal Q_1(\delta)}
\{r^0(P,\tau)+\|\eta^1(P,\tau)\|\}<\infty.
\]
Under Assumption~\ref{ass:sampling} and part (i) of
Assumption~\ref{ass:regular_candidate_systems},
\[
\lim_{n\to\infty}
\sup_{(P,\tau)\in\mathcal Q_1(\delta)}
\PP_P\{\tau\in\mathcal I_n\}=0.
\]
A fixed-row certificate with \(Q(P,\tau)=+\infty\) is rejected without
calibration.
\end{proposition}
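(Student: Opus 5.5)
The idea is to show that at a separated alternative the sample score $\sqrt n\,\hat Q_n(\tau)$ diverges at rate $\sqrt n$, while the bootstrap critical value $\hat c_n(\tau,1-\alpha)$ stays $O_p(\gamma_n)$. Since $\gamma_n/\sqrt n\to0$, acceptance then has vanishing probability uniformly over $\mathcal Q_1(\delta)$. The fixed-row claim needs no argument: if $Q(P,\tau)=+\infty$ is certified by rows with zero scale, those rows are unperturbed in the sample. The same unbounded ray therefore makes $\hat Q_n(\tau)=\infty$, and Procedure~\ref{proc:fixed_candidate_compatibility} rejects deterministically.

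\emph{Step 1 (score lower bound).} I would work with the primal form $\hat Q_n(\tau)=\min\{r\ge0:\hat b_n(\tau)-\hat A_n\eta\le r\widehat{\mathsf S}_n\mathbbm 1\}$ from the proof of Lemma~\ref{lem:near_feasible_completion_behavior}. Take a population maximizer $\lambda^0\in\mathcal D(P)$ with $b^\top\lambda^0\ge\delta$; its estimated-row part is uniformly bounded by part (i) of Assumption~\ref{ass:regular_candidate_systems}. Perturb it to a sample-feasible certificate using the device of Lemma~\ref{lem:directed_near_dual_inclusion} in the reverse direction: the sample equality residual $\hat A_n^\top\lambda^0$ is $O_p(n^{-1/2})$ and lives only in estimated entries, and radial rescaling repairs the normalization. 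One concern is that the equality residual cannot simply be rescaled away. I would therefore argue on the primal side instead. Let $(r^n,\eta^n)$ be sample-optimal with $r^n=\hat Q_n(\tau)$. If $\hat Q_n(\tau)\le\delta/2$, then $\eta^n$ nearly satisfies the population rows with relaxation $\delta/2+o_p(1)$, provided $\|\eta^n\|$ is bounded. Boundedness follows by taking the sample minimum-norm optimizer and comparing with $\eta^1(P,\tau)$, which is uniformly bounded by hypothesis and is sample-feasible at level $r^0+O_p(n^{-1/2})$. Since $r^0=Q(P,\tau)\ge\delta$, this gives a contradiction. Uniform tightness of $Z_{n,P}$ and scale consistency (Lemma~\ref{lem:row_scale_consistency}) make all $o_p$ terms uniform. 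Hence $\hat Q_n(\tau)\ge\delta/2$, and in fact $\hat Q_n(\tau)\to Q(P,\tau)$, with probability approaching one uniformly.

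\emph{Step 2 (critical value bound).} By \eqref{eq:uniform_bootstrap_stat}, $Q_n^*(\tau)\le\|\zeta^*\|\,(1+\|\hat\eta_n(\tau)\|)\sup\|\lambda_{J_e}\|$ over the near set, since fixed-row multipliers receive zero perturbation. The multiplier bound \eqref{eq:proof_estimated_multiplier_bound} controls the last factor. The bootstrap perturbation $\zeta^*=M_\Pi Z_n^*$ is conditionally tight by Assumption~\ref{ass:sampling}.

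\emph{Main obstacle.} The hard part is bounding $\|\hat\eta_n(\tau)\|$ off the null. The relaxed program \eqref{eq:uniform_eta_hat} at level $\hat r_n=\hat Q_n+\gamma_n n^{-1/2}$ contains $\eta^1(P,\tau)$ with probability approaching one. This holds because $\eta^1$ is feasible at relaxation $r^0$, the sample perturbation costs $O_p(n^{-1/2})\,(1+\|\eta^1\|)$ in normalized units, and $\hat Q_n\ge r^0-O_p(n^{-1/2})$ by Step 1. The slack $\gamma_n n^{-1/2}$ absorbs the difference. Minimum norm then gives $\|\hat\eta_n\|\le\sup\|\eta^1\|<\infty$. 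Consequently $\hat c_n(\tau,1-\alpha)=O_p(1)$ uniformly, while $\sqrt n\hat Q_n(\tau)\ge\sqrt n\,\delta/2\to\infty$, which proves the proposition.
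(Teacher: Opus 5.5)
Your overall architecture matches the paper's: consistency of $\hat Q_n(\tau)$ for $Q(P,\tau)$, tightness of $\hat c_n(\tau,1-\alpha)$ through the estimated-row multiplier bound, and immediate rejection for fixed-row rays. However, both places where you need a bounded \emph{sample} nuisance vector rest on a minimum-norm comparison that does not work.

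\textbf{Step~1.} You bound the sample optimizer $\eta^n$ at level $r^n=\hat Q_n(\tau)$ by comparing it with $\eta^1(P,\tau)$. Minimum norm only bounds $\|\eta^n\|$ by the norm of elements of the \emph{same} constraint set. Under your contradiction hypothesis $\hat Q_n(\tau)\le\delta/2$, the vector $\eta^1$ is sample-feasible only at level about $r^0\ge\delta$. It therefore violates some estimated row by roughly $\delta/2$ in normalized units and is not a competitor. The argument presupposes the lower bound it is meant to prove. Boundedness is exactly what rules out the usual LP discontinuity, in which a perturbed coefficient lets a distant $\eta$ drive the sample value down.

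\textbf{Main-obstacle step.} Containment of $\eta^1$ in the relaxed program at level $\hat Q_n(\tau)+\gamma_n n^{-1/2}$ needs the rate $\hat Q_n(\tau)\ge r^0-O_p(n^{-1/2})$. Step~1 never delivers this; it gives only $o_p(1)$. The containment can also fail outright because $\widehat{\mathsf S}_n$ is only consistent. Even ignoring coefficient error, take rows $1-\eta\le r$ and $1+\eta\le r$ with unit scales, so $r^0=1$ and $\eta^1=0$. Inflating the first scale to $1+\epsilon$ gives sample value $2/(2+\epsilon)$. Then $\eta=0$ violates the second relaxed row unless $\epsilon/(2+\epsilon)\le\gamma_n n^{-1/2}$.

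\textbf{Missing ingredient.} What you need is the structural boundedness the paper uses, which requires neither a comparison with $\eta^1$ nor any rate.
\begin{itemize}
\item The box rows on $v^\pm$ are fixed, hence unrelaxed, so every sample solution satisfies $\|v^\pm\|_\infty\le L/(T_0-1)$.
\item The vertex comparison rows load on $x_k$ with fixed coefficients $\pm1$. Once the relaxation level is bounded, bounded inputs and scales bound every coordinate of $x$.
\item Compactness of $\mathcal T$ bounds the implied $s$.
\item Hoffman's bound for the fixed $A_u$ gives a bounded minimum-norm $u$. In any case $u$ never enters the perturbation terms, since its column is fixed.
\end{itemize}

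In Step~1 the relaxation level is at most $\delta/2$ under your hypothesis, so this repairs the contradiction argument directly. Substituting $\eta^1$ does go in the right direction for the upper bound, at level $r^0+o_p(1)$ rather than $r^0+O_p(n^{-1/2})$ because of scale estimation. Combining the two bounds gives $\sup|\hat Q_n(\tau)-Q(P,\tau)|=o_p(1)$ over $\mathcal Q_1(\delta)$.

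For $\hat\eta_n(\tau)$, the level $\hat Q_n(\tau)+\gamma_n n^{-1/2}$ is then bounded because $\sup r^0<\infty$. The same structural argument bounds the minimum-norm near-feasible solution, after which your Step~2 bound on $Q_n^*(\tau)$ goes through as written.
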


\phantomsection
\label{app:separated_power_proof}
\begin{proof}[Proof of Proposition~\ref{prop:separated_incompatibility_power}]
Write the normalized population relaxation in the proposition as a linear
program in \((r,\eta)\). Equation~\eqref{eq:candidate_common_perturbation} and
Lemma~\ref{lem:row_scale_consistency} imply that its sample coefficients,
right-hand side, and row scales converge uniformly to their population values.
On \(\mathcal Q_1(\delta)\), the population minimum norm optimal pair is
uniformly bounded. Substituting this pair into the sample relaxation and
increasing \(r\) by the maximum normalized row perturbation gives
\[
\hat Q_n(\tau)
\le Q(P,\tau)+o_p(1)
\]
uniformly over \(\mathcal Q_1(\delta)\). Fixed rows require no adjustment
because their sample and population entries coincide.

The reverse inequality follows from the same argument applied to a minimum
norm sample optimizer. To see that this optimizer is uniformly bounded, first
use the preceding upper bound and the assumed bound on \(r^0(P,\tau)\) to
bound its relaxation coordinate. The fixed bridge boxes bound \(v^-\) and
\(v^+\). The simplex-vertex comparison rows then bound every coordinate of
\(x\) because the inputs, row scales, and relaxation coordinate are bounded.
Compactness of \(\mathcal T\) also bounds the implied
\(s=\tau\mathbbm 1_{K-1}-x\). Conditional on these quantities, the transformed
admissible lift is a feasible system in \(u\) with fixed coefficients and a
bounded right-hand side. A Hoffman bound therefore gives a uniformly bounded
minimum-norm choice of \(u\). Substituting this bounded sample optimizer into
the population relaxation and increasing its relaxation coordinate by the
maximum normalized entry perturbation gives the reverse value inequality.
Consequently,
\[
\sup_{(P,\tau)\in\mathcal Q_1(\delta)}
|\hat Q_n(\tau)-Q(P,\tau)|=o_p(1).
\]

The near-feasible solution in \eqref{eq:uniform_eta_hat} is bounded by the same
argument. Moreover, part (i) of
Assumption~\ref{ass:regular_candidate_systems} bounds the
total multiplier weight on estimated rows in
\(\widehat\Lambda_n^{\mathrm{near}}(\tau)\). Fixed-row multipliers do not enter
the bootstrap objective because their entry perturbations are zero. The
finite-dimensional bootstrap perturbation is conditionally tight by
Assumption~\ref{ass:sampling} and
\eqref{eq:candidate_common_perturbation}; hence the conditional
critical values are uniformly tight:
\[
\lim_{M\to\infty}\limsup_{n\to\infty}
\sup_{(P,\tau)\in\mathcal Q_1(\delta)}
\PP_P\{\hat c_n(\tau,1-\alpha)>M\}=0.
\]
Uniformly on this class, \(\hat Q_n(\tau)\ge\delta/2\) with probability tending
to one, whereas the critical value remains bounded in probability. Therefore
\(\sqrt n\hat Q_n(\tau)>\hat c_n(\tau,1-\alpha)\) with probability tending to
one uniformly.

Finally, an unbounded normalized dual value can only use an unbounded ray
supported on zero-scale fixed rows. If that ray has a positive objective, it is
unchanged by sampling and gives \(\hat Q_n(\tau)=+\infty\), triggering the
immediate-rejection rule.
\end{proof}

\begin{remark}[Conservative calibration]
\label{rem:conservative_calibration}
Following \citet[discussion preceding Assumption~4.18]{goff2025inference}, one
may reject only when
\[
\sqrt n\,\hat Q_n(\tau)
>
\hat c_n(\tau,1-\alpha)+\varepsilon,
\qquad \varepsilon>0.
\]
For the prespecified bridge system, this shifted rule retains candidatewise
uniform size without part (iii) of
Assumption~\ref{ass:regular_candidate_systems}.
\end{remark}

\end{document}